\newtheorem{theorem}{Theorem}[section]
\newtheorem{lemma}[theorem]{Lemma}
\newtheorem{proposition}{Proposition}[section]
\theoremstyle{definition}
\newtheorem{definition}[theorem]{Definition}
\newtheorem{example}[theorem]{Example}
\newtheorem{corollary}[theorem]{Corollary}
\theoremstyle{remark}
\newtheorem{remark}[theorem]{Remark}
\numberwithin{equation}{section}
\begin{document}
	\title[Weight Distribution, $I$-perfect and MDS Codes]{Weight Distribution of the Weighted Coordinates  Poset Block Space and Singleton Bound}
	\author{Atul Kumar Shriwastva}
	\address{{Department of Mathematics, National Institute of Technology Warangal,  Telangana 506004, India}}
	%\curraddr{}
	\email{shriwastvaatul@student.nitw.ac.in}
	\thanks{}
	
	%    author two information
	\author{R S Selvaraj}
	\address{{Department of Mathematics, National Institute of Technology Warangal,  Telangana 506004, India}}
	%\curraddr{}
	\email{rsselva@nitw.ac.in}
	\thanks{}
	
	\subjclass[2010]{Primary: 94B05, 15A03; Secondary: 06A06}
	
	\keywords{Weight distribution,  Poset block codes, Weighted coordinates poset codes, NRT block codes, Hierarchical poset space, Singleton bound, MDS codes}
	
	\date{}
	
	\dedicatory{}
%\markboth{Atul Kumar Shriwastva, R. S. Selvaraj}{Weight Distribution, $I$-perfect and MDS Codes}

\begin{abstract}
%	Given $[n]=\{1,2,\ldots,n\}$, a partial order  $\preceq$  on $[n]$, a label map  $\pi : [n] \rightarrow \mathbb{N}$ defined by $\pi(i) = k_i$ with $\sum_{i=1}^{n}\pi (i) = N$,  the direct sum $ \mathbb{F}_{q}^{k_1} \oplus \mathbb{F}_{q}^{k_2}\oplus \ldots \oplus \mathbb{F}_{q}^{k_n} $ of $ \mathbb{F}_q^N $, and a weight function $w$ on  $ \mathbb{F}_q $, a poset block metric $d_{(P,w,\pi)}$ is defined on $\mathbb{F}_{q}^{N}$ based on the poset $P=([n],\preceq)$. The metric  $d_{(P,w,\pi)}$ is said to be weighted coordinates poset block metric ($(P,w,\pi)$-metric). It extends the weighted coordinates poset metric ($(P,w)$-metric)  introduced by L. Panek and J. A. Pinheiro and generalizes the poset block metric ($(P,\pi)$-metric) introduced by  M. M. S. Alves et al. 
    In this paper,
	we determine the complete weight distribution of the space  $ \mathbb{F}_q^N $ endowed by the weighted coordinates poset block metric ($(P,w,\pi)$-metric), also known as the $(P,w,\pi)$-space, thereby obtaining it for $(P,w)$-space, $(P,\pi)$-space, $\pi$-space, and $P$-space as special cases. Further, when $P$ is a chain, the resulting space is called as Niederreiter-Rosenbloom-Tsfasman (NRT) weighted  block space and when $P$ is hierarchical, the resulting space is called as weighted coordinates hierarchical poset block space. The complete weight distribution of both the spaces are deduced from the main result. Moreover, we define an $I$-ball for an ideal $I$ in $P$ and study the characteristics of it in $(P,w,\pi)$-space. 
	We investigate the relationship between the $I$-perfect codes and $t$-perfect codes in $(P,w,\pi)$-space. Given an ideal $I$, we investigate how the maximum distance separability (MDS) is related with $I$-perfect  codes and $t$-perfect codes in $(P,w,\pi)$-space. Duality theorem is derived for an MDS $(P,w,\pi)$-code when all the blocks are of same length.	Finally, the  distribution of codewords among $r$-balls is analyzed in the case of chain poset, when all the blocks are of same length.
\end{abstract}

	\maketitle

\section{Introduction}
{T}{he} classical problem of coding theory \cite{hnt} is to find a linear  code $[n,k,d]$ with the largest minimum distance $d$ of length $n$ over the finite field $\mathbb{F}_q$, for any integer $ n > k \geq 1$.  Several researchers explored this problem with various metrics such as Lee metric \cite{Leecode}, rank metric\cite{rankmetric}, poset metric \cite{Bru, crownauto, wdcrown},  block metrics \cite{fxh, Ebc}, pomset metric \cite{gsrs} and weighted coordinates poset metric \cite{wcps, aks}.
Poset metric is introduced by Brualdi et al. \cite{Bru} on $\mathbb{F}_q^n$ with the help of partially ordered relations on the set $[n]$, where $[n] = \{1,2,\ldots,n\}$ represents the coordinate positions of $n$-tuples in the vector space $\mathbb{F}_q^n$.  Hyun and Kim \cite{hkmdspc} expanded the work of studying perfect codes  by introducing the  $J$-ball associated with an ideal $J$ in poset $P$  and by studying the $J$-perfect codes. Before their work, research on perfect codes was primarily concerned with $r$-balls \cite{crown, Ch1-LeePerfectPoset2004,Ch1-LeePerfectPoset2006}, but the introduction of $J$-balls provided a new perspective that has been extensively studied by subsequent authors in \cite{bkdnsr, aks, gr}. 
  A space equipped with pomset mertics is introduced by I. G. Sudha and R. S. Selvaraj \cite{gsrs} which is a generalization of Lee space \cite{Leecode} and  poset space \cite{Bru} over $\mathbb{Z}_m$. 
 In \cite{wcps}, L. Panek \cite{wcps} introduced the weighted coordinates poset metric which is a simplified version of the pomset metric  that does not use the multiset structure.
  In that paper, the complete description of the group of linear
 isometries of $(P,w)$-space was described for an arbitrary poset $P$ and the Singleton bound was established for codes with chain poset. Recently, in \cite{aks}, we determined the complete weight enumerator of the space $\mathbb{F}_q^n$ endowed by the $(P,w)$-metric and the Singleton bound is established for codes  with any arbitrary poset $P$.  
  
\par During the past two decades, research into block codes has driven notable progress in the communication field which includes significant developments in experimental design, high-dimensional numerical integration, and cryptography. As a result, block codes have become an essential area of study within digital communication.
Block codes of length $N$ in $ \mathbb{F}_{q}^N$ which was at first introduced by K. Feng \cite{fxh} with the help of  a label map $\pi$ from $[n] $ to $ \mathbb{N}$ such that $\sum_{i=1}^{n}\pi (i) = N$ and by considering $\mathbb{F}_{q}^{N} = \mathbb{F}_{q}^{\pi(1)}  \oplus \mathbb{F}_{q}^{\pi(2)} \oplus \ldots \oplus \mathbb{F}_{q}^{\pi(n)}$, are later  extended by M. M. S. Alves et al. \cite{Ebc} by using partial order relation on the block  positions $[n]$ that are  known as poset block or $(P,\pi)$-block codes.  Codes endowed with the Niederreiter-Rosenbloom-Tsfasman block metric (NRT block metric) are a particular case of poset block metric when the poset is a chain, have been classified in \cite{nrt block,  MacAdmittingclassification}. 
B. K. Dass et. al. \cite{bkdnsr} extended the $J$-perfect codes to $J$-perfect poset block codes for an ideal $J$ in the poset ${P}$.
The  characteristics of those codes such as minimum distances, covering radius, packing radius, Singleton bound, maximum distance separability, diameter,  perfectness and weight distribution were being studied comprehensively (see  \cite{bk, bkdn, bkd, bkdnsr, nrt block, aksh}). Due to the large number of applications of block codes, the authors are motivated to extend the pomset code to  pomset block codes \cite{wmjl,aksh,akshr} and weighted coordinates poset codes to the weighted coordinates poset block codes \cite{as,aks}. 
In \cite{as}, we defined the weighted coordinates poset block metric ($d_{(P,w,\pi)}$) on the space $\mathbb{F}_q^N$ which generalizes several metrics such as Hamming metric, poset metric,  $(P,w)$-metric, pomset metric, $(P,\pi)$-metric and pomset block metric. 
In parallel to these,   similar works on weighted coordinates poset block metric were investigated in \cite{wj, wjl}.
\par In \cite{wdcrown}, D. S. Kim and S. H. Cho derived a formula that gave the weight distribution of space $ \mathbb{F}_{q}^n$ equipped with a particular class of poset structure called crown poset by first finding the MacWilliams type identities on codes over such space. In this paper, we determine complete weight distribution of  $\mathbb{F}_{q}^N$ equipped with weighted coordinates poset block structure by suitable combinatorial tools, as the blocks are weighted by a weight function $w$ on  $\mathbb{F}_{q}$. Moreover, we study the $I$-balls, $r$-balls, MDS codes and their perfectness for $(P,w,\pi)$-codes over $\mathbb{F}_q$ with arbitrary poset $P$.
%%%%%%%%%%%%%%%%%%%%%%%%%%%%%%%%%%%%%%%%%%%%%%%%%%%%%%%%%%%%%%%%%%%%%%%%%%%%%%%%
\par  First, we introduce weighted coordinates poset block metric (or  $(P,w,\pi)$-metric) for codes of length $N$ over the field $\mathbb{F}_q$. Section \ref{section:2} provides the necessary preliminaries. We will  define the $(P,w,\pi)$-metric in Section \ref{section:3}, by attaching a weight $\tilde{w}^{k_i}$ to each block of length $k_i$ of a codeword where $\tilde{w}^{k_i}$ is a weight on $\mathbb{F}_q^{k_i} $ that depends on a weight  $w$ on $\mathbb{F}_q$.  In Section \ref{section:4}, we will determine the complete weight distribution of $\mathbb{F}_q ^{N}$  with respect to $(P,w,\pi)$-metric, in general, thereby, facilitating the weight distribution  with respect to  $(P,w)$-metric, $(P,\pi)$-metric, $\pi$-metric, $P$-metric as particular cases. In one sense, the results obtained are generalizations of those established by M. M. Skriganov in \cite{skri}. In the sequel, we determine the  cardinality of an $r$-ball with respect to $(P,w,\pi)$-metric and  $(P,w)$-metric. 
In Section \ref{section:5}, the complete weight distribution for hierarchical weighted block spaces are determined by considering $P$ to be a hierarchical poset.
In Section \ref{perfect codes in P,w,pi-space},  $I$-perfect codes for an ideal $I$ in $P$ are introduced and the relationship between $I$-perfect codes and $t$-perfect codes with respect to $(P,w,\pi)$-metrics is investigated. Further, in section \ref{MDS-P,w,pi-space},  we find the relation between MDS and $I$-perfect codes when all the blocks are of same length. Moreover, we establish the duality theorem for those $(P,w, \pi)$-codes when all the blocks are of same length.  In Section \ref{NRT-P,w,pi-space},  the complete weight distribution
of the $(P,w,\pi)$-space when $P$ is a chain is deduced. Further, 
the distribution of codewords among $I$-balls is also analyzed when the poset is a chain. Moreover, it concludes with the establishing of weight distribution of MDS codes in  $(P,w, \pi)$-space when the poset is a chain.
% The outline is not required, but we show an example here.
%The paper is organized as follows. Our main results are in
%\cref{sec:main}, our new algorithm is in \cref{sec:alg}, experimental
%results are in \cref{sec:experiments}, and the conclusions follow in
%\cref{sec:conclusions}.

\section{Preliminaries}
\label{section:2}
 	For a  positive integer $m$, the  Lee weight $w_L$ of $a\in \mathbb{Z}_m$ is $\min\{a, m-a\}$  and the Lee weight of an $n$-tuple $u=(u_1,u_2,\ldots,u_n) \in \mathbb{Z}_m^n$ is $w_L{(x)} = \sum_{i=1}^{n}{w_L{(u_i)}} $. Lee distance between  $u, v \in \mathbb{Z}_{m}^n$ is  $ d_{L}(u,v) = w_L{(u - v)} $. Support of an $n$-tuple $u=(u_1,u_2,\ldots,u_n) \in \mathbb{F}_q^n$ is defined to be the set $ supp(u)= \{i \in [n] : u_i \neq 0  \}$ and the Hamming weight of $u$ is $w_H(u)= |supp(u)|$. 
\par Let $P=([n],\preceq)$ be a poset.  An element $j \in A \subseteq P$ is said to be a maximal element of $A$ if there is no $i \in A$ such that $j \preceq i$. An element $j \in A \subseteq P$ is said to be a minimal element of $A$ if there is no $i \in A$ such that $i \preceq j$.  A subset $I$ of $P$ is said to be an ideal if $j \in I$ and $i \preceq j$ imply $i \in I$. For a subset $A$  of $P$, an ideal generated by $A$ is the smallest ideal containing $A$ and is denoted by $\langle A \rangle$.  
Poset weight or $P$-weight of $u \in \mathbb{F}_q^n$   is $ w_P (u)= |\langle supp(u) \rangle|$ and $P$-distance between $u,v \in \mathbb{F}_q^n$ is $ d_{P} (u,v)=  w_{P}   (u-v)$.  
\par Through a label map  $\pi  : [n] \rightarrow \mathbb{N} $ defined as $\pi (i) =k_i $ such that $ \sum\limits_{i=1}^{n} \pi (i) = N  $ and considering  $ \mathbb{F}_q^N $ as the direct sum  $ \mathbb{F}_{q}^{k_1} \oplus \mathbb{F}_{q}^{k_2} \oplus \cdots \oplus \mathbb{F}_{q}^{k_n} $, one can express $x \in \mathbb{F}_q^N$  uniquely as  $x= x_1 \oplus x_2 \oplus \cdots \oplus x_n $ with  $x_i = (x_{i_1},x_{i_2},\ldots,x_{i_{k_i}}) \in \mathbb{F}_{q}^{k_i} $ and can define the block-support  \cite{Ebc},  \cite{fxh} or $\pi$-support of $x$ as $ supp_\pi(x)= \{i \in [n] : 0 \neq  x_i  \in \mathbb{F}_q^{k_i}\}$. 
Now, $\pi$-weight of $x$ is defined as $ w_\pi (x)= |supp_\pi(x)|$ and $\pi$-distance 
between  $x,y \in \mathbb{F}_q^N$ is $ d_\pi (x,y)=  w_\pi (x-y)$. Note that if $k_i =1 ~ \forall ~ i$, the $\pi$-weight and $\pi$-distance becomes Hamming weight and Hamming distance respectively. Now, the poset block weight or  $ (P,\pi)$-weight of $x$ is defined as $ w_{(P,\pi)} (x) \triangleq  |\langle supp_\pi(x) \rangle|$ and $(P,\pi)$-distance between $x,y \in \mathbb{F}_q^N$ is defined as $ d_{(P,\pi)} (x,y) \triangleq  w_{(P,\pi)}   (x-y)$. 
\par By assigning a weight for each coordinate position of a vector  $u \in \mathbb{F}_{q}^{n} $, L. Panek \cite{wcps} introduced what is called as  weighted coordinates poset weight. For this, let $u \in \mathbb{F}_{q}^{n} $, 
$I_{u}^P = \langle supp(u) \rangle $  and $M_{u}^P$ be the set of all maximal elements in $I_{u}^P$.  If $w$ is  a weight defined on $\mathbb{F}_q$ and $M_w = \max \{w(\alpha) :   \alpha \in \mathbb{F}_q \}$, then the weighted coordinates poset weight or $(P,w)$-weight of  $u $  is  defined as 
$ 	w_{(P,w)}(u) = \sum\limits_{i \in M_{u}^P} w{(u_i)} + \sum\limits_{i \in {I_{u}^{P} \setminus M_{u}^P}} M_w $. The weighted coordinates poset distance or 
$(P,w)$-distance between   $u,v \in \mathbb{F}_q ^n$ is defined as $ d_{(P,w)}(u,v) \triangleq w_{(P,w)}(u-v) $. If $w(\alpha) = 1$ $\forall$  $ \alpha \in \mathbb{F}_q \setminus 0$, then the $(P,w)$-weight of  $u $  becomes the $P$-weight of  $u $.
\par
\par Throughout  the paper, let $ \mathcal{I}(P) = \{ I \subseteq P: I  \text{~is an ideal} \}$ denote the collection of all ideals in $P$.  Let $\mathcal{I}^{i}$ be its sub-collection of all ideals whose cardinality is $i$. Let 
$\mathcal{I}_{j}^{i}$ denote the collection of all ideals $I \in \mathcal{I}(P)$ with cardinality $i$ having $j$ maximal elements. Clearly, $\mathcal{I}_{j}^{i} \subseteq \mathcal{I}^{i}$ and $\cup_{j=1}^{i} \mathcal{I}_{j}^{i} =\mathcal{I}^{i} $, for $1 \leq i \leq n$. Given an ideal $I \in \mathcal{I}_{j}^{i}$, let $Max(I) = \{i_1, i_2, \ldots, i_j\}$ be the set of all maximal elements of $I$ and $I \setminus Max(I) = \{ l_1, l_2,\ldots, l_{i-j}\}$ be the set of all non-maximal elements of $I$, if any. Here $i,j$ are integers such that $ 1 \leq i \leq n $ and $ 1 \leq j \leq i$. 
\begin{proposition}\label{pwpiChept-I-J in I will exist}
	Let $P = ([n], \preceq)$ be a poset. Let $0 \leq s \leq t \leq n$. Then  for each $I \in \mathcal{I}^t$ there exists $J \in \mathcal{I}^s$ such that $J \subseteq I$. Moreover, for each $J \in \mathcal{I}^s$ there exists $I \in \mathcal{I}^t$ such that $J \subseteq I$.
\end{proposition}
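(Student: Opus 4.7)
The plan is to prove both statements by induction on $t-s$, using the observation that an ideal can be shrunk by removing a maximal element of itself, and can be grown by adjoining a minimal element of its complement.

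For the first claim, I would induct on $t-s \geq 0$. The base case $t=s$ is trivial; take $J=I$. For the inductive step it suffices to show that any ideal $I$ of cardinality $t \geq 1$ contains an ideal of cardinality $t-1$. I would pick any maximal element $m$ of $I$ (which exists because $I$ is a non-empty finite subset of a finite poset) and set $I' = I \setminus \{m\}$. The key verification is that $I'$ is still an ideal: given $x \preceq y$ with $y \in I'$, we have $y \in I$, so $x \in I$; moreover $x \neq m$, for otherwise $m = x \preceq y$ with $y \in I$ and $y \neq m$ would contradict the maximality of $m$ in $I$. Hence $x \in I'$, and $|I'| = t-1$. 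Iterating (or invoking induction on $t-s$), one obtains a chain of ideals
\begin{equation*}
I = I_{t} \supseteq I_{t-1} \supseteq \cdots \supseteq I_{s},
\end{equation*}
with $|I_{k}|=k$, and $J := I_s$ is the required ideal.

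For the second claim, I would again induct on $t-s$, with trivial base case $t=s$. The inductive step reduces to showing that any ideal $J$ of cardinality $s < n$ can be enlarged to an ideal of cardinality $s+1$ contained in $P$. Since $P \setminus J$ is non-empty and finite, pick a minimal element $m$ of $P \setminus J$ (with respect to the order inherited from $P$) and set $J' = J \cup \{m\}$. The verification that $J'$ is an ideal is the crux: let $x \preceq y$ with $y \in J'$. If $y \in J$, then $x \in J \subseteq J'$ because $J$ is an ideal. If $y = m$ and $x \neq m$, then $x \prec m$; were $x \notin J$, then $x \in P \setminus J$ with $x \prec m$, contradicting minimality of $m$ in $P \setminus J$, so $x \in J \subseteq J'$. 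Thus $J'$ is an ideal of cardinality $s+1$, and iterating produces the required $I$ of cardinality $t$.

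The argument is largely routine; the only point demanding a little care is checking in each step that removing a maximal element (respectively, adjoining a minimal element of the complement) preserves the ideal property, and that such elements exist — both of which follow from the finiteness of $P$ and the definition of ideal. No tool beyond the definition of ideal and the existence of maximal/minimal elements in finite non-empty subsets of a poset is needed.
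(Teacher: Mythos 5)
Your proof is correct: removing a maximal element of an ideal, or adjoining a minimal element of its complement, preserves the ideal property, and iterating gives both containments. The paper states this proposition without proof (treating it as routine), so there is nothing to compare against; your argument is exactly the standard one that the omitted proof would be, and your verifications of the two key steps are sound, including the boundary cases $s=0$ and $t=n$.
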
  
\begin{theorem}[\cite{aks}]\label{pwpiChept-I-J insdie I minusMaxI}
	Suppose that $P$ has exactly one ideal $J$ with cardinality $t$, where $t \leq n-1$. Then, the following four statements hold true: 	
	\begin{enumerate}[label=(\roman*)]
		\item\label{rev0} $J \subset I$ for any ideal $I$ with $| I | > t$;
		\item\label{rev1}  $J \subseteq I \setminus Max(I)$ for every ideal $I$ with $| I | > t$;
		\item\label{rev2} For any $a \in J$ and $b \in [1,n] \setminus J$, it holds that $a \preceq b$ in $P$;
		\item\label{rev3} For every ideal $I$ with  $I \not\subseteq J$, it holds that  $J \subsetneqq I$ and $J \subseteq I \setminus Max(I)$.		
	\end{enumerate}
\end{theorem}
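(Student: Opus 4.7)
The plan is to handle the four claims in order, using the uniqueness of $J$ together with Proposition \ref{pwpiChept-I-J in I will exist} as the main tool throughout. For \ref{rev0}, given any ideal $I$ with $|I| > t$, Proposition \ref{pwpiChept-I-J in I will exist} guarantees an ideal of cardinality $t$ sitting inside $I$; uniqueness forces that ideal to be $J$, and $|J| < |I|$ supplies the strict containment.

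For \ref{rev1}, the plan is a one-step contradiction. If some $a \in J$ were maximal in $I$, then $I \setminus \{a\}$ would still be an ideal (removing a maximal element preserves the downward-closure property) of cardinality $|I| - 1 \geq t$. Applying Proposition \ref{pwpiChept-I-J in I will exist} together with uniqueness yields $J \subseteq I \setminus \{a\}$, contradicting $a \in J$. For \ref{rev2}, I would take $a \in J$ and $b \notin J$ and analyze the principal ideal $\langle b \rangle$, splitting on whether its cardinality is at most $t$ or strictly greater than $t$. If $|\langle b \rangle| \leq t$, then extending $\langle b \rangle$ to an ideal of size $t$ via Proposition \ref{pwpiChept-I-J in I will exist} and applying uniqueness gives $\langle b \rangle \subseteq J$, forcing $b \in J$, a contradiction. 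Hence $|\langle b \rangle| > t$, and Proposition \ref{pwpiChept-I-J in I will exist} combined with uniqueness places $J$ inside $\langle b \rangle$, whence $a \in J \subseteq \langle b \rangle$, i.e., $a \preceq b$. Finally \ref{rev3} falls out of the previous parts: if $I \not\subseteq J$, pick any $b \in I \setminus J$; by \ref{rev2} every $a \in J$ satisfies $a \preceq b$, so $a \in I$ as $I$ is an ideal. This gives $J \subseteq I$, with strict containment witnessed by $b$, and since $|I| > t$, part \ref{rev1} immediately yields $J \subseteq I \setminus Max(I)$.

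The main obstacle I anticipate is \ref{rev2}: the decisive move is to introduce the principal ideal $\langle b \rangle$ and to case-split on its cardinality versus $t$, since the uniqueness argument from Proposition \ref{pwpiChept-I-J in I will exist} points in opposite directions in the two cases (extending a small $\langle b \rangle$ up to $J$, versus squeezing $J$ down inside a large $\langle b \rangle$). Once that insight is in place, each of the four parts reduces to a short bookkeeping argument combining Proposition \ref{pwpiChept-I-J in I will exist} with the uniqueness hypothesis on $J$.
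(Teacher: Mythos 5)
Your proof is correct: part (i) is immediate from Proposition \ref{pwpiChept-I-J in I will exist} plus uniqueness, the removal-of-a-maximal-element contradiction (together with (i)) gives (ii), the case split on $|\langle b\rangle|$ versus $t$ gives (iii), and (iv) follows from (ii) and (iii) exactly as you say. The paper itself states this theorem only by citation to \cite{aks} and supplies no proof here, so there is nothing to compare against; your argument is a valid, self-contained derivation using only Proposition \ref{pwpiChept-I-J in I will exist} and the uniqueness hypothesis, which is the natural route.
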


%%%%%%%%%%%%%%%%%%%%%%%%%%%%%%%%%%%%%%%%%%%%%%%%%%%%%%%%%%%%%%%%%%%
\section{Weighted Coordinates Poset Block Metrics}
\label{section:3}
To make this paper a self-contained one and to proceed smoothly into the findings given in further sections, we re-introduce\footnote{
	A. K. Shriwastva and R. S. Selvaraj,	{Weighted coordinates poset block codes},\textit{Discrete Math. Algorithms Appl},	communicated (April 2023) (unpublished). \\
	In this paper, the notion of weighted coordinates poset block metrics is introduced.} 
the content of the section from \cite{as}.  
\par
To start with, let  $w$ be a weight on $\mathbb{F}_q$, $m_w = \min \{w(\alpha) :   \alpha \in \mathbb{F}_q, \alpha \neq 0\}$ and $M_w = \max \{w(\alpha) :   \alpha \in \mathbb{F}_q \}$. For a $k \in \mathbb{N}$, and a $v=(v_1, v_2,\ldots,v_k) \in  \mathbb{F}_q^{k} $, we define  $\tilde{w}^{k}{(v)} = \max \{ w(v_i) : 1 \leq i \leq k\}$. Clearly,   $\tilde{w}^{k}$ is a weight on $\mathbb{F}_q^{k} $ induced by the weight $w$. On 	$ \mathbb{F}_{q}^{k_i} $, $1 \leq i \leq n$, we call   $\tilde{w}^{k_i}$, a block weight. Note that  $\tilde{w}^{k}$ is not an additive weight and the metric $d_{\tilde{w}^{k}}$ induced by ${w}$-weight on $\mathbb{F}_q^{k} $ defined as $d_{\tilde{w}^{k}} (u,v) \triangleq \tilde{w}^{k}(u-v)$  is not an additive metric.
\par
Considering the finite set $[n]=\{1,2,\ldots,n\}$ with $\preceq $ to be a partial order,  the pair $P=([n],\preceq)$ is  a poset.  With a label map  $\pi  : [n] \rightarrow \mathbb{N} $ defined  as $\pi (i) =k_i $ in the previous section such that $ \sum\limits_{i=1}^{n} \pi (i) = N  $, a positive inetger, we have $	\mathbb{F}_{q}^{N} = \mathbb{F}_{q}^{k_1}  \oplus \mathbb{F}_{q}^{k_2} \oplus \ldots \oplus \mathbb{F}_{q}^{k_n} $. Thus, if  $x \in  \mathbb{F}_{q}^{N} $ then $x= x_1 \oplus x_2 \oplus \cdots \oplus x_n $ with  $x_i = (x_{i_1},x_{i_2},\ldots,x_{i_{k_i}}) \in \mathbb{F}_{q}^{k_i} $. Let $I_{x}^{P,\pi} =  \langle supp_{\pi}(x) \rangle $ be the ideal generated by the $\pi$-support of $x$ and $M_{x}^{P,\pi}$ be the set of all maximal elements in $I_{x}^{P,\pi}$.
\begin{definition}[$(P,w,\pi)$-weight \cite{as}]
	\sloppy{Given a poset $P=([n],\preceq)$, a weight $w$ on $\mathbb{F}_q $, a label map  $\pi $ with a block weight $\tilde{w}^{k_i}$ on  $\mathbb{F}_q^{k_i} $ induced by $w$, the 
		weighted coordinates poset block weight or 
		$(P,w,\pi)$-weight of  $x \in \mathbb{F}_q^N$ is defined as
		\begin{align*}
			w_{(P,w,\pi)}(x) \triangleq \sum\limits_{i \in M_{x}^{P,\pi}} \tilde{w}^{k_i}{(x_i)} + \sum\limits_{i \in {I_{x}^{P,\pi} \setminus M_{x}^{P,\pi}}} M_w
		\end{align*} 
		The  $(P,w,\pi)$-distance between two vectors  $x,y \in \mathbb{F}_q ^N$ is defined as:
		$ d_{(P,w,\pi)}(x,y) \triangleq  w_{(P,w,\pi)}(x-y)$.}
\end{definition} 
%%%%%%%%%%%%%%%%%%%%%%%%%%%%%%%%%%%%%%
\begin{theorem}[\cite{wj},\cite{as}]\label{t1}
	The $(P,w,\pi) $-distance is  a metric on $\mathbb{F}_{q}^N$.
\end{theorem}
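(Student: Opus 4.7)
The plan is to verify the three metric axioms. Positive-definiteness is immediate: if $w_{(P,w,\pi)}(x)=0$ then every summand of the defining formula vanishes, forcing $\tilde{w}^{k_i}(x_i)=0$ at each $i\in M_x^{P,\pi}$ and $|I_x^{P,\pi}\setminus M_x^{P,\pi}|=0$, so $I_x^{P,\pi}=\emptyset$ and hence $x=0$. Symmetry reduces to $w_{(P,w,\pi)}(z)=w_{(P,w,\pi)}(-z)$, which follows from $w(-\alpha)=w(\alpha)$ on $\mathbb{F}_q$ (hence $\tilde{w}^{k_i}(-v)=\tilde{w}^{k_i}(v)$) together with $\mathrm{supp}_\pi(-x)=\mathrm{supp}_\pi(x)$, giving $I_{-x}^{P,\pi}=I_x^{P,\pi}$ and $M_{-x}^{P,\pi}=M_x^{P,\pi}$. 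The substantive content is the triangle inequality, which, after translation, is the subadditivity
\[
w_{(P,w,\pi)}(x+y)\leq w_{(P,w,\pi)}(x)+w_{(P,w,\pi)}(y).
\]

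I would prove the subadditivity by introducing a per-index cost
\[
c_x(i)=\begin{cases}\tilde{w}^{k_i}(x_i), & i\in M_x^{P,\pi},\\ M_w, & i\in I_x^{P,\pi}\setminus M_x^{P,\pi},\\ 0, & i\notin I_x^{P,\pi},\end{cases}
\]
so that $w_{(P,w,\pi)}(x)=\sum_{i=1}^{n}c_x(i)$, and aiming for the stronger pointwise bound $c_{x+y}(i)\leq c_x(i)+c_y(i)$ for every $i\in[n]$. The basic inputs are the inclusions $\mathrm{supp}_\pi(x+y)\subseteq\mathrm{supp}_\pi(x)\cup\mathrm{supp}_\pi(y)$ and hence $I_{x+y}^{P,\pi}\subseteq I_x^{P,\pi}\cup I_y^{P,\pi}$, together with the block-weight bounds $\tilde{w}^{k_i}(x_i+y_i)\leq\tilde{w}^{k_i}(x_i)+\tilde{w}^{k_i}(y_i)$ (obtained by pushing the outer maximum through the scalar subadditivity of $w$) and $\tilde{w}^{k_i}(v)\leq M_w$ for every $v$.

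The main technical point, and the step I expect to be the hardest, is an impossibility lemma: if $i\in I_{x+y}^{P,\pi}$ and either $i\in M_x^{P,\pi}\cap M_y^{P,\pi}$ or $i\in M_x^{P,\pi}$ with $i\notin I_y^{P,\pi}$ (or the symmetric variant), then in fact $i\in M_{x+y}^{P,\pi}$. The argument is to pick any $k\in\mathrm{supp}_\pi(x+y)$ with $i\preceq k$; writing $k\in\mathrm{supp}_\pi(x)\cup\mathrm{supp}_\pi(y)$ and invoking the maximality hypotheses rules out $k\succ i$ and rules out $k\in\mathrm{supp}_\pi(y)$ in the second variant, so $k=i$, and an identical argument applied to any candidate $j\succ i$ in $I_{x+y}^{P,\pi}$ via its witness in $\mathrm{supp}_\pi(x+y)$ shows no such $j$ exists.

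With this lemma in hand, the pointwise inequality reduces to a bookkeeping split on the status of $i$ in $\{M,\,I\setminus M,\,\text{outside}\}$ for each of $x$, $y$, and $x+y$. The only non-routine cases are (i) $i\in M_{x+y}^{P,\pi}\cap M_x^{P,\pi}\cap M_y^{P,\pi}$, where the block-level subadditivity gives $c_{x+y}(i)\leq c_x(i)+c_y(i)$ directly; and (ii) $i\in I_{x+y}^{P,\pi}\setminus M_{x+y}^{P,\pi}$, where the impossibility lemma forbids $i\in M_x^{P,\pi}\cap M_y^{P,\pi}$ and also forbids $i\in M_x^{P,\pi}$ (respectively $M_y^{P,\pi}$) with $i\notin I_y^{P,\pi}$ (resp. $I_x^{P,\pi}$); what remains forces $i$ to lie in $I_x^{P,\pi}\setminus M_x^{P,\pi}$ or $I_y^{P,\pi}\setminus M_y^{P,\pi}$, producing an $M_w$ on the right that matches $c_{x+y}(i)=M_w$. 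All other configurations are immediate from the block-weight bounds, and summing over $i$ yields subadditivity, completing the proof.
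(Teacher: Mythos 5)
Your argument is correct, and since the paper only quotes this theorem from \cite{wj} and \cite{as} without reproducing a proof, there is no in-paper argument to compare against; judged on its own terms, your route — reducing the triangle inequality to the pointwise bound $c_{x+y}(i)\leq c_x(i)+c_y(i)$ via a per-index cost and an impossibility lemma — is sound and in fact proves a slightly stronger, cleaner statement than plain subadditivity. The lemma itself is the right crux and your sketch of it works: if $i$ were non-maximal in $I_{x+y}^{P,\pi}$ one finds $k\in supp_{\pi}(x+y)$ with $i\prec k$, and $k\in supp_{\pi}(x)\cup supp_{\pi}(y)$ then contradicts either the maximality of $i$ in $I_x^{P,\pi}$ (resp.\ $I_y^{P,\pi}$) or the assumption $i\notin I_y^{P,\pi}$, since ideals are downward closed. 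Two points deserve to be made explicit when you write this up. First, everything rests on the (unstated in this paper, but standard) axioms of a weight $w$ on $\mathbb{F}_q$: $w(\alpha)>0$ for $\alpha\neq 0$, $w(-\alpha)=w(\alpha)$, and $w(\alpha+\beta)\leq w(\alpha)+w(\beta)$; these give $\tilde{w}^{k}(-v)=\tilde{w}^{k}(v)$, $\tilde{w}^{k}(u+v)\leq\tilde{w}^{k}(u)+\tilde{w}^{k}(v)$, and $\tilde{w}^{k}(v)\leq M_w$. Second, your positive-definiteness step is phrased loosely: at a maximal element $i\in M_x^{P,\pi}$ one has $i\in supp_{\pi}(x)$, hence $x_i\neq 0$ and $\tilde{w}^{k_i}(x_i)\geq m_w>0$, so $w_{(P,w,\pi)}(x)=0$ forces $M_x^{P,\pi}=\emptyset$, and since every nonempty ideal of a finite poset has a maximal element this gives $I_x^{P,\pi}=\emptyset$, i.e.\ $x=0$; similarly, in your case (i) one should note that $i\in M_{x+y}^{P,\pi}$ already implies $i\in supp_{\pi}(x+y)$ (maximal elements of $\langle supp_{\pi}(x+y)\rangle$ lie in $supp_{\pi}(x+y)$), so the block subadditivity is applied to a genuinely nonzero block and the degenerate cancellation $x_i+y_i=0$ can only occur when $i\notin M_{x+y}^{P,\pi}$, where it is harmless. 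With these details filled in, the proof is complete.
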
  

Here, $d_{(P,w,\pi)}$ defines a metric on $ \mathbb{F}_{q}^N $ called as \textit{weighted cordinates poset block metric} or $ (P,w,\pi) $-metric. The pair $ (\mathbb{F}_{q}^N,~d_{(P,w,\pi)} )$ is said to be a  $(P,w,\pi)$-space.
\par  A $ (P,w,\pi) $-block code  $\mathcal{C} $ of length $N$ is a subset of $ (\mathbb{F}_{q}^N,~d_{(P,w,\pi)} )$-space  and 
\begin{align*}
	d_{(P,w,\pi)}(\mathcal{C}) = \min \{  d_{(P, w,\pi)} {(c_1, c_2)}: c_1, c_2 \in \mathcal{C}, 
	c_1\neq c_2  \} 
\end{align*}
gives the minimum distance of  $\mathcal{C}$. 
If $\mathcal{C}$ is a linear $ (P,w,\pi) $-block code, then   
\begin{align*}
	d_{(P,w,\pi)}(\mathcal{C}) = \min \{ w_{(P,w,\pi)}(c) : 0 \neq c \in \mathcal{C} \} .
\end{align*}
As $ w_{(P,w,\pi)}(v) \leq n M_w$ for any $v \in \mathbb{F}_{q}^N $, the minimum distance of a linear code   $\mathcal{C} $ is bounded above by $ n M_w $.
\par Though it is a more general fact to say that  the $(P,w,\pi) $-distance is  a metric on a free module $R^N$ over a commutative ring $R$ with identity, we focus our discussion by taking $R = \mathbb{F}_{q}$ almost everywhere. The results obtained in the forthcoming sections for 
$\mathbb{F}_{q}^N $ or $\mathbb{Z}_{m}^N $ are even valid for any  free module $R^N$ over the  commutative ring $R$ with identity.
%%%%%%%%%%%%%%%%%%%%%%%%%%%%%%%%%%%%%%%%%%%%%%%%%%%%%%%%%%%%%%%%%%%%%%%%%%%%%%%%%%%%%%%%%%%%%%%%%%%%%%%%%%%%%%%%%%%%%%%%%%%%%%%%%%%%%%%%%%%%%%%%%%%%%%%%%%%%%%%%%%%
\par We remark that the $(P,w,\pi)$-distance is a metric which includes several classic metrics of coding theory.

\begin{remark}[\cite{as}]\label{becomes} 
	Now we shall see how or in what manner the weighted coordinates poset block metric generalizes the various poset metrics arrived at so for.
	\begin{enumerate}[label=(\roman*)]
		\item If $k_i = 1 $ for every $i\in [n]$ then the block support of $v \in \mathbb{F}_q^N$ becomes the usual support of $v \in \mathbb{F}_q^n$ 
		and $(P,w,\pi)$-weight of $v \in \mathbb{F}_q^N$ becomes $(P,w)$-weight of $v \in \mathbb{F}_q^n$. Thus, the $(P,w,\pi)$-space becomes the $(P,w)$-space (as in  \cite{wcps}).
		\item 	If $k_i = 1 $ for every $i\in [n]$ and $w$ is the Hamming weight on $\mathbb{F}_q$ so that  $w(\alpha)=1$ for $0 \neq \alpha \in \mathbb{F}_q$,  then for any  
		$ v = v_1 \oplus v_2 \oplus \ldots \oplus v_n \in \mathbb{F}_q^N$, 
		\begin{align*}
			\tilde{w}^{k_i}{(v_i)} &=
			\max\{w(v_{i_t}) : 1 \leq t \leq k_i\} \\&= \left\{ 
			\begin{array}{ll}
				0, &  \text{if} \ v_{i} = 0\\
				1, &   \text{if} \ v_{i} \neq 0
			\end{array}	  \right\} \\&=  w_{H}(v_{i}),
		\end{align*}
		the Hamming weight of $v_i$ in $\mathbb{F}_q$.
		Thus, the	 $(P,w,\pi)$-space becomes the poset space or $P$-space (as in \cite{Bru}).
		%%%%%%%%%%%%%%%%%%%%%%%%%%%%%%%%%%%%%%%%%%%%%%%%%%%%%%%%%%%%%%%%%%%%%%%%%%%%%%%%%%%%%%%%%%%%%%%%%%%%%%%%%%%%%%%%%%%%%
		\item 	If $w$ is the Hamming weight  on $\mathbb{F}_q$.  Let $v \in \mathbb{F}_q^N$, $v_i = (v_{i_1},v_{i_2}, \ldots,v_{i_{k_i}}) \in \mathbb{F}_{q}^{k_i} $ and $v_{i_{t}} \in  \mathbb{F}_{q}$, $1 \leq t \leq k_i$, so that  $ 	 \tilde{w}^{k_i}{(v_i)} = \max\{w(v_{i_t}) : 1 \leq t \leq k_i\}   = \left\{ 
		\begin{array}{ll}
			0, &  \text{if} \ v_{i} = 0\\
			1, &    \text{if} \ v_{i} \neq 0
		\end{array}	  \right\}  $. Hence,
		\begin{align*}
			w_{(P,w,\pi)}(v) &= | M_{v}^{P,\pi} |  + | {I_v}^{P,\pi} \setminus M_{v}^{P,\pi} | \\&= | I_{v}^{P,\pi} | \\&= w_{(P,\pi)}(v). 
		\end{align*} 
		Thus, the  $(P,w,\pi)$-space becomes the $(P,\pi)$-space  (as in \cite{Ebc}).
		%%%%%%%%%%%%%%%%%%%%%%%%%%%%%%%%%%%%%%%%%%%%%%%%%%%%%%%%%%%%%%%%%%%%%%%%%%%%%%%%%%%%%%%%%%%%%%%%%%%%%%%%%%%%%%%%%%%%%
		\item 	If $w$ is the Hamming weight on $\mathbb{F}_q$ and $P$ is an antichain, then for any $v \in \mathbb{F}_q^N$, 
		$ 	 \tilde{w}^{k_i}{(v_i)} =  \left\{ 
		\begin{array}{ll}
			0, &  \text{if} \ v_{i} = 0\\
			1, &   \text{if} \ v_{i} \neq 0
		\end{array}	  \right\} $, $I_v^{P,\pi} = supp_{\pi} (v)$ and hence,
		\begin{align*}
			w_{(P,w,\pi)}(v) &= | M_{v}^{P,\pi} |  + | {I_v}^{P,\pi} \setminus M_{v}^{P,\pi} | \\&= | supp_{\pi}(v) | \\& 
			= w_{\pi}(v) .
		\end{align*}
		Thus, the  $(P,w,\pi)$-space becomes the $\pi$-space or $( \mathbb{F}_q^N,d_{\pi})$-space  (as in \cite{fxh}). 
	\end{enumerate} 
\end{remark}
%%%%%%%%%%%%%%%%%%%%%%%%%%%%%%%%%%%%%%%%%%%%%%%%%%%%%%%%%%%%%%%%%%%%%%%%%%%%%%%%%%%%%%%%%%%%%%%
%%%%%%%%%%%%%%%%%%%%%%%%%%%%%%%%%%%%%%%%%%%%%%%%%%%%%%%%%%%%%%%%%%%%%%%%%%%%%%%%%%%%%%%%%%%%%%%%%%%%%%%%%%%%%%%%%%%%%%%%%%%%%%
\par  The $(P,w,\pi)$-ball centered at a point $y \in \mathbb{F}_q^N$ with radius $r$ is the set 
\begin{align*}
	B_{r}(y)=\{x \in \mathbb{F}_{q}^{N} : d_{(P,w,\pi)}(y,x) \leq r\}
\end{align*}
\par \sloppy The $(P,w,\pi)$-sphere centered at $y$  with radius $r$ is  $S_{r}(y)=\{x \in \mathbb{F}_{q}^{N} : d_{(P,w,\pi)}(y,x) = r\}$.
$B_r(y)$ and $S_r(y)$ are also called as $r$-ball and $r$-sphere respectively. As the $d_{(P,w,\pi)}$-metric is translation invariant, we have  $B_{r}(y) = y + B_{r}(0)$ where $0 \in \mathbb{F}_q^N$. Clearly, $ | B_{r}(0) | =  1 + \sum\limits_{t=1}^{r} | S_{t}(0) |$ and $ | S_{t}(0) |$ is equal to the number of $x \in \mathbb{F}_{q}^{N}$ such that $w_{(P,w,\pi)}(x)=t$.   Therefore, to determine the cardinality of an $r$-ball, first we shall  find the $(P,w,\pi)$-weight distribution of $\mathbb{F}_{q}^{N}$. 
\par Hereafter, $\mathcal{C} $ denote a $(P,w,\pi)$-code of length $N = k_1 + k_2 + \ldots + k_n$ over $\mathbb{F}_q$ with minimum distance $d_{(P,w,\pi)}(\mathcal{C})$. Let $r_{\tilde{\omega}} = \bigg\lfloor \frac{d_{(P,w,\pi)} (\mathcal{C})-m_w}{ M_w} \bigg\rfloor $ for any $(P,w,\pi)$-code $\mathcal{C}$.
%%%%%%%%%%%%%%%%%%%%%%%%%%%%%%%%%%%%%%%%%%%%%%%%%%%%%%%%%%%%%%%%%%%%%%%%%%%%%%%%%%%%%%%%%%%%%%%%%%%%%%%%%%%%%%%%%%%%%%%%%%%%%%%%%%%%%%%%
\begin{definition}
	A $(P,w,\pi)$-code $\mathcal{C}  $ of length $N$ over $\mathbb{F}_q$ is said to be a \textit{maximum distance separable (MDS)} $(P,w,\pi)$-code if it attains its Singleton bound.
\end{definition} 
\begin{theorem}[Singleton Bound for $(P,w,\pi)$-block Code \cite{as}] \label{pwpiChept-I-sbwpic}
	Let $\mathcal{C}$ be a $(P,w,\pi)$-code of length $N = k_1 + k_2 + \ldots + k_n$ over $\mathbb{F}_q$ with minimum distance $d_{(P,w,\pi)}(\mathcal{C})$ and let $r_{\tilde{\omega}} = \bigg\lfloor \frac{d_{(P,w,\pi)} (\mathcal{C})-m_w}{ M_w} \bigg\rfloor $. Then    
	$
	\max\limits_{J \in  \mathcal{I}^{r_{\tilde{\omega}}}}    \bigg\{\sum_{i \in J} k_{i}\bigg\} \leq N - \lceil log_{q}|\mathcal{C}| \rceil .
	$
\end{theorem}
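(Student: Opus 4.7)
The plan is to adapt the classical puncturing proof of the Singleton bound to the $(P,w,\pi)$-setting. First I would pick an ideal $J^\ast\in\mathcal{I}^{r_{\tilde\omega}}$ realizing $\sum_{i\in J^\ast}k_i=\max_{J\in\mathcal{I}^{r_{\tilde\omega}}}\sum_{i\in J}k_i$; the existence of \emph{some} ideal of size $r_{\tilde\omega}$ follows from Proposition \ref{pwpiChept-I-J in I will exist} applied to the (trivially ideal) full set $[n]$, using that $d_{(P,w,\pi)}(\mathcal{C})\le nM_w$ forces $r_{\tilde\omega}\le n$. Writing $\ell=\sum_{i\in J^\ast}k_i$, I would then introduce the block-puncturing projection
\[
\phi:\mathbb{F}_q^N\to\mathbb{F}_q^{N-\ell},\qquad \phi(x_1\oplus x_2\oplus\cdots\oplus x_n)=\bigoplus_{i\notin J^\ast}x_i,
\]
which deletes exactly the blocks indexed by $J^\ast$.

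The heart of the argument is showing that $\phi|_{\mathcal{C}}$ is injective. Suppose $\phi(c)=\phi(c')$ for distinct $c,c'\in\mathcal{C}$, and set $z=c-c'$. Then $z_t=0$ for every $t\notin J^\ast$, so $\mathrm{supp}_\pi(z)\subseteq J^\ast$; since $J^\ast$ is itself an ideal, $I^{P,\pi}_z=\langle\mathrm{supp}_\pi(z)\rangle\subseteq J^\ast$, whence $i:=|I^{P,\pi}_z|\le r_{\tilde\omega}$. Bounding $\tilde w^{k_t}(z_t)\le M_w$ at each maximal position of $z$ and plugging into the definition of the $(P,w,\pi)$-weight yields
\[
w_{(P,w,\pi)}(z)\le |M^{P,\pi}_z|\,M_w+\bigl(i-|M^{P,\pi}_z|\bigr)M_w=iM_w\le r_{\tilde\omega}M_w\le d_{(P,w,\pi)}(\mathcal{C})-m_w<d_{(P,w,\pi)}(\mathcal{C}),
\]
the last strict inequality using $m_w>0$. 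This contradicts the defining property of the minimum distance, so $\phi$ is injective on $\mathcal{C}$, whence $|\mathcal{C}|\le q^{N-\ell}$. Taking logarithms and using that $N-\ell$ is a nonnegative integer gives $\lceil\log_q|\mathcal{C}|\rceil\le N-\ell$, which is precisely the asserted inequality.

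I expect the main obstacle to be calibrating the weight estimate sharply: one has to argue that $z$ contributes at most $M_w$ at \emph{every} element of $I^{P,\pi}_z$ (both maximal and non-maximal) and then exploit the precise definition $r_{\tilde\omega}=\lfloor(d_{(P,w,\pi)}(\mathcal{C})-m_w)/M_w\rfloor$ together with $m_w>0$ to push the bound strictly below $d_{(P,w,\pi)}(\mathcal{C})$; any coarser choice of $r_{\tilde\omega}$ would fail to produce the contradiction. A secondary subtlety is the ideal-closure step $I^{P,\pi}_z\subseteq J^\ast$, which relies essentially on $J^\ast$ itself being an ideal rather than on mere support containment.
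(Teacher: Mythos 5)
Your argument is correct: the puncturing map on the blocks of a maximizing ideal $J^\ast\in\mathcal{I}^{r_{\tilde\omega}}$ is injective on $\mathcal{C}$, because any $z=c-c'$ supported inside the ideal $J^\ast$ has $w_{(P,w,\pi)}(z)\le |I_z^{P,\pi}|M_w\le r_{\tilde\omega}M_w\le d_{(P,w,\pi)}(\mathcal{C})-m_w<d_{(P,w,\pi)}(\mathcal{C})$, and this handles nonlinear codes as well since $d_{(P,w,\pi)}(c,c')=w_{(P,w,\pi)}(c-c')$. Note that this paper does not reprove the theorem but imports it from \cite{as}; your proof is the standard counting/puncturing argument that such Singleton bounds (as in \cite{bkdnsr,aks}) rely on, with the key calibration $r_{\tilde\omega}M_w\le d_{(P,w,\pi)}(\mathcal{C})-m_w$ and the ideal-closure step $\langle supp_\pi(z)\rangle\subseteq J^\ast$ exactly where they should be.
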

If $k_i = s $ $\forall$  $i \in  [n]$  then  $	\max\limits_{J \in  \mathcal{I}^{r_{\tilde{\omega}}}}  \big\{\sum_{i \in J} k_{i}  \big \}=r_{\tilde{\omega}} s$ and thus, $ \big\lfloor \frac{d_{(P,w,\pi)} (\mathcal{C})-m_w}{ M_w} \big\rfloor  \leq n - \frac{ \lceil log_{q}|\mathcal{C}| \rceil}{s}$ becomes the \textit{Singleton bound} for a $(P,w,\pi)$-block code $\mathcal{C}$ in which all the blocks are of \textit{same size $s$}. Further, if  $\mathcal{C}$ is a linear code of dimension $k$ then it becomes $ \big\lfloor \frac{d_{(P,w,\pi)} (\mathcal{C})-m_w}{ M_w} \big\rfloor  \leq   n - \frac{k}{s} $. Moreover, when $k_i=1$ $\forall$ $i \in [n] $, we obtain the Singleton bound for $(P,w)$-code:
\begin{corollary}[Singleton Bound for $(P,w)$-code \cite{aks}]
	If $\mathcal{C}$ is a $(P,w)$-code of length $n$ over $\mathbb{F}_q$ with minimum distance $d_{(P,w)}(\mathcal{C})$, then $  \big\lfloor \frac{d_{(P,w)} (\mathcal{C})-m_w}{ M_w} \big\rfloor   \leq n - \lceil log_{q}|\mathcal{C}| \rceil $.
\end{corollary}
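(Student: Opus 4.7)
The plan is to derive this corollary as a direct specialization of the Singleton Bound for $(P,w,\pi)$-block codes, namely Theorem \ref{pwpiChept-I-sbwpic}, by setting $k_i = 1$ for every $i \in [n]$. First I would invoke Remark \ref{becomes}\,(i), which identifies the $(P,w,\pi)$-space with the $(P,w)$-space under this choice of label map: one has $N = \sum_i k_i = n$, the block-support collapses to the usual support, and the block weight $\tilde{w}^{k_i}$ reduces to $w$ on each coordinate. Consequently $d_{(P,w,\pi)}(\mathcal{C}) = d_{(P,w)}(\mathcal{C})$, and the parameter $r_{\tilde{\omega}} = \lfloor (d_{(P,w,\pi)}(\mathcal{C}) - m_w)/M_w \rfloor$ becomes $\lfloor (d_{(P,w)}(\mathcal{C}) - m_w)/M_w \rfloor$.

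Next I would simplify the left-hand side of the inequality in Theorem \ref{pwpiChept-I-sbwpic}. With $k_i = 1$ for all $i$, any ideal $J \in \mathcal{I}^{r_{\tilde{\omega}}}$ satisfies $\sum_{i \in J} k_i = |J| = r_{\tilde{\omega}}$, a value independent of the choice of $J$. By Proposition \ref{pwpiChept-I-J in I will exist}, the collection $\mathcal{I}^{r_{\tilde{\omega}}}$ is nonempty whenever $0 \leq r_{\tilde{\omega}} \leq n$, and the latter is guaranteed by the general upper bound $d_{(P,w,\pi)}(\mathcal{C}) \leq n M_w$ noted after Theorem \ref{t1}. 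Hence the maximum is attained and equals $r_{\tilde{\omega}}$. Substituting directly into Theorem \ref{pwpiChept-I-sbwpic} yields $r_{\tilde{\omega}} \leq n - \lceil \log_q |\mathcal{C}| \rceil$, which is exactly the claimed bound.

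Because the argument is a clean specialization of an already established theorem, there is no substantive obstacle; the only mild care needed is in confirming that the metric identification of Remark \ref{becomes}\,(i) is compatible with the definition of $r_{\tilde{\omega}}$ (so that $m_w$ and $M_w$ computed from $w$ on $\mathbb{F}_q$ agree on both sides) and that $\mathcal{I}^{r_{\tilde{\omega}}}$ is nonempty, both of which are routine.
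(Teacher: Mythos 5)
Your proposal is correct and matches the paper's own route: the paper obtains this corollary precisely by specializing Theorem \ref{pwpiChept-I-sbwpic} to $k_i = 1$ for all $i \in [n]$, where $N = n$, the $(P,w,\pi)$-metric reduces to the $(P,w)$-metric (Remark \ref{becomes}), and $\max_{J \in \mathcal{I}^{r_{\tilde{\omega}}}} \sum_{i \in J} k_i = r_{\tilde{\omega}}$. Your extra checks on the nonemptiness of $\mathcal{I}^{r_{\tilde{\omega}}}$ and the agreement of $m_w$, $M_w$ are harmless routine verifications and do not change the argument.
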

\section{Weight Distribution of $(P,w,\pi)$-space}
\label{section:4}
For a weight $w$ on $ \mathbb{F}_{q} $,  the sets  $D_r = \{\alpha \in \mathbb{F}_q : w(\alpha) = r\}$, for $0 \leq r \leq M_w $ partition $ \mathbb{F}_{q} $ according to the distribution of weights $w$ of elements in $ \mathbb{F}_{q} $. Recall that,   $ \tilde{w}^{k} $ is a weight on   $ \mathbb{F}_{q}^{k} $ defined as  $ \tilde{w}^{k}{(u)}=\max\{w(u_{i}) : 1 \leq i \leq k\}$ for $u=(u_{1},u_{2},~\ldots,u_{k}) \in \mathbb{F}_{q}^{k} $.  For each $0 \leq r \leq M_w $, we shall define:
\begin{align*}
	D_{r}^{k}= \{u=(u_{1},u_{2},\ldots,u_{{k}}) \in \mathbb{F}_{q}^{k}  :  \tilde{w}^{k}(u)=r  \}
\end{align*}
to be a subset of $ \mathbb{F}_{q}^{k} $ so that $D_{r}^{k}$ $(0 \leq r \leq M_w )$ partitons $ \mathbb{F}_{q}^{k} $ according to the
distribution of weights $\tilde{w}^{k}$ of elements in 
$ \mathbb{F}_{q}^{k} $. 
Thus, we have the following:
\begin{proposition} \label{D_r^j}
	For a given positive integer $ k $, $|D_r^{k}|=( \sum\limits_{i=0}^{r} |D_{i}|)^{k}-( \sum\limits_{i=0}^{r-1} |D_{i}|)^{k}$ for $0 \leq r \leq M_w $. In particular, 
	$|D_{M_w}^{k}|=q^{k}-(q- |D_{M_w} |)^{k}$.
\end{proposition}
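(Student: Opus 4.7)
The plan is to reduce the computation of $|D_r^k|$ to a difference of two ``at most'' counts, exploiting the fact that the block weight $\tilde{w}^k$ is a maximum. First I would observe that $\tilde{w}^k(u)=r$ is equivalent to the pair of conditions $w(u_i)\le r$ for every $i\in[k]$ and $w(u_j)=r$ for at least one $j\in[k]$. Equivalently, $\tilde{w}^k(u)\le r$ iff every coordinate $u_i$ lies in the disjoint union $\bigsqcup_{i=0}^{r} D_i$, which has cardinality $\sum_{i=0}^{r}|D_i|$ since the sets $D_0,D_1,\ldots,D_{M_w}$ partition $\mathbb{F}_q$.

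Next, since the coordinates of $u$ are chosen independently, the number of tuples $u\in\mathbb{F}_q^k$ with $\tilde{w}^k(u)\le r$ is
\[
\Bigl(\sum_{i=0}^{r}|D_i|\Bigr)^{k},
\]
and similarly the number with $\tilde{w}^k(u)\le r-1$ is $\bigl(\sum_{i=0}^{r-1}|D_i|\bigr)^{k}$ (with the usual convention that an empty sum is zero, handling the edge case $r=0$). Subtracting these two counts gives the number of tuples whose block weight equals $r$ exactly, which yields the claimed formula for $|D_r^k|$.

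For the ``in particular'' assertion I would simply specialise to $r=M_w$: since $D_0,\ldots,D_{M_w}$ partition $\mathbb{F}_q$, we have $\sum_{i=0}^{M_w}|D_i|=q$ and $\sum_{i=0}^{M_w-1}|D_i|=q-|D_{M_w}|$, so the general formula collapses to $|D_{M_w}^k|=q^k-(q-|D_{M_w}|)^k$. There is no genuine obstacle here; the only point worth handling carefully is the boundary case $r=0$, where the second term should be interpreted as $0^k=0$ (so $|D_0^k|=|D_0|^k$, consistent with the only tuple of block weight $0$ being the one whose entries all have $w$-weight $0$).
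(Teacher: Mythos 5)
Your proof is correct. It differs from the paper's argument only in how the counting is organised: the paper counts the tuples of block weight exactly $r$ by stratifying over the number $l$ of coordinates whose $w$-weight equals $r$, obtaining $\sum_{l=1}^{k}\binom{k}{l}|D_r|^{l}\bigl(\sum_{i=0}^{r-1}|D_i|\bigr)^{k-l}$, and then collapses this sum via the binomial theorem to the stated difference of $k$-th powers. You instead count cumulatively: the tuples with $\tilde{w}^k(u)\le r$ number $\bigl(\sum_{i=0}^{r}|D_i|\bigr)^{k}$ because each coordinate independently lies in the union $D_0\cup\cdots\cup D_r$, and subtracting the analogous count for $r-1$ gives the result directly, with no binomial expansion needed. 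Both rest on the same two facts (the sets $D_0,\dots,D_{M_w}$ partition $\mathbb{F}_q$, and the max condition is a coordinatewise condition), so the proofs are essentially equivalent; yours is marginally more economical, and your remarks on the edge case $r=0$ and on the specialisation $r=M_w$ (using $\sum_{i=0}^{M_w}|D_i|=q$) are exactly what is needed.
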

\begin{proof}
	Let $u=(u_{1},u_{2},\ldots,u_{{k}}) \in D_{r}^{k} $ so that $ \tilde{w}^{k}(u)=r$. This means, at least one component of $u$  has  its maximum weight as  $r$. The number of  such $u   \in \mathbb{F}_{q}^{k} $ with $l$ components having  $r$ as their $w$-weight  is $	{k \choose l}|D_{r}|^{l}(1+|D_{1}|+ |D_{2}|+ \ldots + |D_{r-1} |)^{k-l} $. Thus,
	\begin{align*}
		|D_r^{k}|&= \sum\limits_{l=1}^{k} {k \choose l} |D_{r}|^{l} (1+|D_{1}|+ |D_{2}|+ \ldots + |D_{r-1} |)^{k-l}  \\
		&=( 1+ |D_{1}|+ |D_{2}|+ \ldots +  |D_{r}| )^{k}-  ( 1+|D_{1}|+ |D_{2}| + \ldots + |D_{r-1} |)^{k} .
	\end{align*}
	Hence,  $	|D_r^{k}| = ( \sum\limits_{i=0}^{r} |D_{i}|)^{k}-( \sum\limits_{i=0}^{r-1} |D_{i}|)^{k}	$
	where  $D_r = \{\alpha \in \mathbb{F}_q : w(\alpha) = r\}$.
\end{proof}
The above Proposition \ref{D_r^j} gives the weight distribution of $\mathbb{F}_{q}^{k}$ by considering $ \tilde{w}^{k}$ as weight.
%%%%%%%%%%%%%%%%%%%%%%%%%%%%%%%%%%%%%%%%%%%%%%%%%%%%%%%%%%%%%%%%%%%%%%%%%%%%%%%%%%%%%%%%%%%%%%%%%%%%%%%%%%%%%%%%%%%%%%%%%%%%%%%%%%%%%%%%%%%%%%%%%%%%%%%%%%%%%%%%%%%%%%%%%%%%%%%%%
\par Next, we shall proceed on to determine the $(P,w,\pi)$-weight distribution of $\mathbb{F}_{q}^{N}$. 
For each $1 \leq r \leq nM_w$, let $A_r = \{x= x_{1} \oplus x_{2} \oplus \ldots \oplus x_{n} \in  \mathbb{F}_{q}^{N}  : w_{(P,w,\pi)}(x)=r  \}$ and $A_0 = \{ \bar{0}\}$. It is clear that if 
$x \in A_r$ then $I_{x}^{P,\pi} = \langle supp_{\pi} (x) \rangle \in \mathcal{I}_j^{i}$ for some $j \in \{1,2,\ldots,n\}$ such that $i \geq j$.
It is also easy to observe the following: 
\begin{itemize}
	\item If $w_{(P,w,\pi)}(x) \leq M_w$  then $I_{x}^{P,\pi} \in \mathcal{I}_j^{j}$ for some $j  \in \{1,2,\ldots,n\}$.
	\item If  $w_{(P,w,\pi)}(x) \geq M_w $  then  $I_{x}^{P,\pi}  \in \mathcal{I}_j^{i}$ for some $j  \in \{1,2,\ldots,n\}$ such that $i \geq j$.
\end{itemize}
Now, for an $x \in A_r$ with $x= x_1 \oplus x_2 \oplus \cdots \oplus x_n $ and  $x_i = (x_{i_1},x_{i_2},\ldots,x_{i_{k_i}}) \in \mathbb{F}_{q}^{k_i} $, we have 
\begin{align*}
	w_{(P,w,\pi)}(x) &= \sum\limits_{i \in M_{x}^{P,\pi}}  \tilde{w}^{k_i}{(x_i)} +   \sum\limits_{i \in {I_x}^{P,\pi} \setminus M_{x}^{P,\pi}} M_w \\
	& =  \tilde{w}^{k_{i_1}}(x_{i_1})+  \tilde{w}^{k_{i_2}}(x_{i_2})  + \ldots+  \tilde{w}^{k_{i_j}}(x_{i_j})  + (i-j)M_w
\end{align*}
where $M_x^{P,\pi}= \{i_1, i_2, \ldots, i_j\}$. 
Then, 
\begin{align*}
	\tilde{w}^{k_{i_1}}(x_{i_1}) + \tilde{w}^{k_{i_2}}(x_{i_2})+\ldots + \tilde{w}^{k_{i_j}}(x_{i_j}) =r-(i-j)M_w.
\end{align*}
As the  weight is varying on the block positions with respect to maximal elements but fixed on block positions with respect to non-maximal elements, we need to choose $j$ number of blocks $x_{i_s} \in  \mathbb{F}_{q}^{k_{i_s}}  $ such that  $\sum\limits_{s=1}^{j} \tilde{w}^{k_{i_s}} (x_{i_s}) =r-(i-j)M_w $.  Thus, to get the  number of  vectors in $A_r$, we first  define the  partition of the number  $r-(i-j)M_w $ in order to have numbers like  $\tilde{w}^{k_{i_1}} (x_{i_1})$, $\tilde{w}^{k_{i_2}} (x_{i_2}), \ldots$, $\tilde{w}^{k_{i_s}} (x_{i_s})$  as its $j$ parts.

\begin{definition}[{Partition of $r$}]
	For any positive integer $r$, by a partition of $r$ we mean a finite non-increasing sequence of positive integers $b_1,b_2,\ldots,b_j $ such that $b_1+b_2+\dots+ b_j =r $ and is denoted by $(b_1,b_2,\ldots,b_j)$. $b_1,b_2,\ldots,b_j $ are called as parts of the partition  and they need not be distinct.
	As the number of parts in a partition is restricted to at most $n$,  such a partition is termed as an $n$-part partition. Let $PRT[r]$ denote the set of all  $n$-part partitions of $r$ such that each part in $r$ does not exceed $M_w$ i.e. 
	\begin{align*}
		PRT[r]=	\{(b_1,b_2,\ldots,b_j) :\sum\limits_{s=1}^{j}b_s=  r; ~ 1 \leq b_s \leq M_{w}  \text{ and }  1 \leq j \leq n\}.
	\end{align*} 
\end{definition}
Since we need to partition $r-(i-j)M_w $, and $ \tilde{w}^{k_{i_1}}(x_{i_1}) + \tilde{w}^{k_{i_2}}(x_{i_2})+\ldots + \tilde{w}^{k_{i_j}}(x_{i_j}) =r-(i-j)M_w$,  the number of parts  is restricted to at most $n-(i-j)$. Hence, to determine the parts of $r-(i-j)M_w$, we define:  $PRT_{i-j}[r]=$
\begin{align*}
	\{(b_1,b_2,\ldots,b_t) : \sum\limits_{s=1}^{t}b_s=  r- (i-j)M_w;  1 \leq b_s \leq M_{w}   \text{ and }   1 \leq t \leq n-i+j\}
\end{align*}
where $i-j$ is the number of non-maximal elements in a given ideal $I \in \mathcal{I}_{j}^{i}$. Note that $PRT[r]=PRT_0[r]$.
\par 
Let  $b= (b_1,b_2,\ldots,b_j) \in PRT_{i-j}[r]$ and  $I \in \mathcal{I}_{j}^{i}$. As the  weight is varying on the $j$ maximal block positions but fixed on the $i-j$ non-maximal block positions, we need to choose $j$ number of $x_{i_s} \in  \mathbb{F}_{q}^{k_{i_s}}  $ such that  $\sum\limits_{s=1}^{j} \tilde{w}^{k_{i_s}} (x_{i_s}) =r-(i-j)M_w $. 
Parts $b_{i}$'s of $b$ do help in finding suitable  $x_{i} \in  \mathbb{F}_{q}^{k_{i}}  $  that  have weight   $ \tilde{w}^{k_i} (x_{i}) = b_{i}$ so that all these $x_{i} $'s will occur only in the positions of maximal elements of  $I$. Since weights in the positions of non-maximal elements are fixed as $M_w$,  any  $x_{i} \in  \mathbb{F}_{q}^{k_{i}}  $ can occcur in those non-maximal positions. But in the remaining $n-|I|$ positions $l$,  $\bar{0} \in  \mathbb{F}_q^{k_l}$ will  occur. Thus, for a $b= (b_1,b_2,\ldots,b_j) \in PRT_{i-j}[r]$ and  $I \in \mathcal{I}_{j}^{i}$, the set 
\begin{equation*}
	\begin{split}
		A = 	\bigg\{    x=  x_{1}  \oplus x_{2}  \oplus \cdots  \oplus x_{n}    \in \mathbb{F}_{q}^{N} : 
		x_s  =  \left\{ 
		\begin{array}{ll}
			\bar{0}  \in \mathbb{F}_q^{k_s}, & \text{for} ~s \in [n] \setminus I \\
			x_s \in	D_{b_{s}}^{k_s}, &   \text{for}  \ s \in Max(I) \\
			x_s	\in \mathbb{F}_q^{k_s}, &  \text{for}  \ s \in I  \setminus Max(I)
		\end{array}	\right\}, \\ 1 \leq s \leq n \bigg\}
	\end{split}
\end{equation*}
gives the set of those $N$-tuples $x= x_{1} \oplus x_{2} \oplus \ldots \oplus x_{n} \in \mathbb{F}_{q}^{N}$ each of whose $(P,w,\pi)$-weight is $r$, $ \langle supp_{\pi} (x) \rangle = I$ and $ \tilde{w}^{k_s} (x_{s}) = b_{s}$ for $s \in Max(I)$. Such a set $A$ of vectors of $(P,w,\pi)$-weight $r$ can be obtained for each arrangement of a $b \in PRT_{i-j} [r]$ for a given $I \in \mathcal{I}_j^{i} $. 
\begin{example} \label{ex1}
	Let $\mathbb{Z}_{7}^{13}=\mathbb{Z}_{7}^2 \oplus \mathbb{Z}_{7}^3 \oplus \mathbb{Z}_{7}^4 \oplus \mathbb{Z}_{7}^2 \oplus \mathbb{Z}_{7}^2$. Consider $w$ as the Lee weight on $\mathbb{Z}_{7}$. Let $\preceq $ be a partial order relation  on the set $[5]=\{1,2,3,4,5\}$ such that $1 \preceq 2$. Here $I = \{1,2,4\}$  is the ideal of cardinality $3$ with $2$ maximal elements. $Max(I)= \{2,4\}$, $I \setminus Max(I)= \{1\}$ and $[n] - I =\{3,5\}$. As $M_w = 3$, 
	$	PRT_1[8] = \{(3,2),(3,1,1), (2,2,1),(2,1,1,1)\}$. Now, for $b=(3,2) \in  PRT_1[8] $ and  $I = \{1,2,4\}$, at the positions of maximal elements  $Max(I)= \{2,4\}$, $x_2 \in \mathbb{Z}_{7}^3$ such that  $ \tilde{w}^{k_2} (x_{2}) = 3$, and $x_4 \in \mathbb{Z}_{7}^2$ such that $ \tilde{w}^{k_4} (x_{4}) = 2$ will occur. At non-maximal positions of $I$ (here it is $ \{1\}$),  any $x_1 \in \mathbb{Z}_{7}^2$ will occur. In the remaining positions $[n] - I =\{3,5\}$, $x_3 =\bar{0} \in \mathbb{Z}_{7}^4$ and  $x_3 =\bar{0} \in \mathbb{Z}_{7}^2$ will occur. Thus, when  $I = \{1,2,4\}$ and  $b=(3,2) \in  PRT_1[8] $, 
	\begin{equation*}
		\begin{split}
			A =	\{x=x_1 \oplus x_2 \oplus \bar{0} \oplus x_4 \oplus \bar{0}: x_1 \in \mathbb{Z}_{7}^2, ~ x_2 \in \mathbb{Z}_{7}^3  \text{ with }  \tilde{w}^{k_2} (x_{2}) = 3, \\ x_4 \in \mathbb{Z}_{7}^2 \text{ with }   \tilde{w}^{k_4} (x_{4}) = 2 \}
		\end{split}
	\end{equation*}
	gives the set of vectors that have  $(P,w,\pi)$-weight  $3+3+2=8$. 
	And,  if we take $(2,3)$ which is an arrangement of parts of $b=(3,2)$ then 
	\begin{align*}
			B=	\{x=x_1 \oplus x_2 \oplus \bar{0} \oplus x_4 \oplus \bar{0} : x_1 \in \mathbb{Z}_{7}^2, ~ x_2 \in \mathbb{Z}_{7}^3 \text{ with }  \tilde{w}^{k_2} (x_{2}) = 2, \\ ~ x_4 \in \mathbb{Z}_{7}^2 ~\text{ with }  \tilde{w}^{k_4} (x_{4}) = 3 \}
	\end{align*}
	also gives the set of vectors that have $(P,w,\pi)$-weight $3+2+3=8$ when  $I = \{1,2,4\}$.
\end{example}
\begin{definition}[{Arrangement of $b$}]
	Given an $n$-part partition $ b= (b_1,b_2,\ldots,b_j) \in PRT_{i-j}[r] $ with $j$ parts, an arrangement of $b$ is a $j$-tuple   $ (b_{t_1},b_{t_2},\ldots,b_{t_j}) $ such that $t_i \in \{1,2,\ldots,j\}$ are distinct. Thus, $ARG[(b_1,b_2,\ldots,b_j)] =$
	\begin{align*}
			\{ (b_{t_1},  b_{t_2},  \ldots , b_{t_j}): t_1, t_2, \ldots,t_j \in  \{1,2,\ldots,j\},~ t_i \neq t_k~ \text{for} ~ i \neq k \}
	\end{align*}
	denote the set of all such arrangements of a given $b= (b_1,b_2,\ldots,b_j) \in PRT_{i-j}[r] $.  \par Partition of a weight $r$ (or  $r-(i-j)M_w$, as the case may be) and arrangement of their parts play a vital role in determining the weight distribution of a $(P,w,\pi)$-space.
\end{definition}
\begin{example}
	Continuing from the Example \ref{example1} with $r=11$, $M_w =3$ and $n=5$, we can see that the arrangements of  $b= (3,3,3,2) \in PRT_0[11] $ are given by
	\begin{align*}
		ARG[(3,3,3,2)] = \{ (3,3,3,2),(3,3,2,3), (3,2,3,3), (2,3,3,3)\} 
	\end{align*}	
	and the arrangements of  $b= (3,3,3,1,1) \in PRT_{0}[11] $ are given as
	\begin{align*}
			ARG[(3,3,3,1,1)] = \{ (3,3,3,1,1),(3,3,1,1,3), 
			(3,1,1,3,3),(1,1,3,3,3), \\ (1,3,1,3,3),  (1,3,3,1,3),(1,3,3,3,1), 
			(3,1,3,1,3),  (3,1,3,3,1), (3,3,1,3,1)\}.
	\end{align*}
\end{example}
\par It is to be the noted that, if $t_1, t_2 , \ldots, t_l$ denote the  $l$ distinct elements in the parts  $b_1,b_2,\ldots,  b_t$ with multiplicity   $r_1,r_2,\ldots,r_l$ respectively so that   $b_1+b_2+\dots+ b_t = \sum\limits_{s=1}^{l} r_s  t_s= r - (i-j) M_w  $, then  $| ARG[b]|= \frac{t!}{r_1 ! \ r_2 ! \ \cdots r_l !}$ for each $b=(b_1,b_2,\ldots,b_t)  \in PRT_{i-j}[r]$.
%%%%%%%%%%%%%%%%%%%%%%%%%%%%%%%%%%%%%%%%%%%%%%%%%%%%%%%%%%%%%%%%%%%%%%%%%%%
\par Now, for each $b=(b_1,b_2,\ldots,b_t) \in 	PRT_{i-j}[r]$ and for each $I  \in \mathcal{I}_j^{i}$, we  define a set  $T_b [I] $ which gives all  vectors $x =  x_{1} \oplus x_{2} \oplus \ldots \oplus x_{n} \in \mathbb{F}_q^{N}$ with $(P,w,\pi)$-weight $ r=b_1 + b_2 +  \ldots + b_j +(i-j)M_w  $ such that  $\langle supp_{\pi} (x) \rangle=I $. Thus, 
\begin{equation*}
	\begin{split}
		T_b[I] =	 &\bigcup\limits_{ (b_{m_{i_1}},  b_{m_{i_2}},  \ldots , b_{m_{i_j}})    \in ARG[b] }  \bigg\{   x= x_{1}   \oplus \cdots  \oplus x_{n}    \in \mathbb{F}_{q}^{N} :  
		x_s  = \\& \hspace*{3cm}  \left\{ 
		\begin{array}{ll}
			\bar{0}  \in \mathbb{F}_q^{k_s}, & \text{for} ~s \notin I \\
			x_s \in	D_{b_{m_s}}^{k_s}, &   \text{for}  \ s \in Max(I) \\
			x_s	\in \mathbb{F}_q^{k_s}, &  \text{for}  \ s \in I  \setminus Max(I)
		\end{array}	\right\},    1 \leq s \leq n \bigg\}.
	\end{split}
\end{equation*}
\begin{proposition}
	For $1 \leq r \leq n M_{w}$, we have   
	$
	A_r =  \bigcup\limits_{i =1}^{n}  \bigcup\limits_{j=1 }^{i}  \bigcup\limits_{I \in \mathcal{I}_j^{i} } \bigcup\limits_{b =(b_1,b_2,\ldots,b_t) \in 	PRT_{i-j}[r] } T_b [I].$
\end{proposition}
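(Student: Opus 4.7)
The plan is to establish the claimed equality by checking both set inclusions directly from the definitions of $A_r$, $\mathcal{I}_j^i$, $PRT_{i-j}[r]$, and $T_b[I]$. Both directions amount to bookkeeping: once one identifies the ideal $I := \langle supp_\pi(x)\rangle$ generated by a vector $x$ and records the block-weights $\tilde{w}^{k_s}(x_s)$ at the positions $s \in Max(I)$, the weight formula decomposes $r$ exactly as prescribed by the partitions and arrangements indexing the right-hand union.

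For the forward inclusion $A_r \subseteq \bigcup T_b[I]$, I would start with $x \in A_r$, set $I := \langle supp_\pi(x)\rangle$, $i := |I|$ and $j := |Max(I)|$ so that $I \in \mathcal{I}_j^i$, enumerate $Max(I) = \{i_1, \ldots, i_j\}$, and record $c_s := \tilde{w}^{k_{i_s}}(x_{i_s})$ for each $s$. Each $c_s$ lies in $\{1, \ldots, M_w\}$ since $x_{i_s} \neq 0$, and the weight definition yields $\sum_{s=1}^{j} c_s + (i-j)M_w = r$. Sorting $(c_1, \ldots, c_j)$ in non-increasing order produces a legitimate partition $b \in PRT_{i-j}[r]$ of which $(c_1,\ldots,c_j)$ is an arrangement, so $x$ fits the prescription of $T_b[I]$ for this choice.

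For the reverse inclusion, I would take $x \in T_b[I]$ for some $I \in \mathcal{I}_j^i$, $b \in PRT_{i-j}[r]$, and arrangement $(b_{m_{i_1}}, \ldots, b_{m_{i_j}}) \in ARG[b]$. The main step is verifying $\langle supp_\pi(x)\rangle = I$: since $x_s \in D_{b_{m_s}}^{k_s}$ with $b_{m_s} \geq 1$ for each $s \in Max(I)$ and $x_s = \bar{0}$ for $s \notin I$, one has $Max(I) \subseteq supp_\pi(x) \subseteq I$; because $Max(I)$ already generates $I$ as an ideal of $P$, the generated ideals must coincide. It then follows that $M_x^{P,\pi} = Max(I)$ and $|I_x^{P,\pi} \setminus M_x^{P,\pi}| = i-j$, so the weight formula yields $w_{(P,w,\pi)}(x) = \sum_{s=1}^{j} b_{m_{i_s}} + (i-j)M_w = r$, giving $x \in A_r$.

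The only mild subtlety I anticipate is the verification just made that the $\pi$-support generates exactly $I$ rather than a proper subideal: the coordinates of $x$ at non-maximal positions of $I$ are unconstrained and may vanish, so $supp_\pi(x)$ can be strictly smaller than $I$, yet the automatic inclusion of $Max(I)$ in $supp_\pi(x)$ forces the generated ideal to recover $I$. After this observation, no combinatorial estimate is required — both inclusions follow by directly unwinding the definitions of $T_b[I]$, $PRT_{i-j}[r]$, and the $(P,w,\pi)$-weight.
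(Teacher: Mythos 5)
Your proof is correct, and it follows essentially the same reasoning the paper uses: the paper states this proposition without a separate proof, relying on the preceding construction of $T_b[I]$ as the set of all $x$ with $w_{(P,w,\pi)}(x)=r$ and $\langle supp_{\pi}(x)\rangle = I$, which is exactly what your two inclusions verify. Your observation that $Max(I)\subseteq supp_{\pi}(x)$ forces $\langle supp_{\pi}(x)\rangle = I$ even when non-maximal coordinates vanish is the right (and only) subtle point, and it is implicitly used by the paper as well.
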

When $r \leq M_w$,  for any $x \in A_r$, all elements of $I_x^{P,\pi}$ are  maximal (that is, $I_x^{P,\pi}= M_x^{P,\pi}$), so that   $w_{(P,w,\pi)}(x)= \sum\limits_{i \in M_{x}^{P,\pi}}  \tilde{w}^{k_i}{(x_i)}  =  \tilde{w}^{k_{i_1}}(x_{i_1})+  \tilde{w}^{k_{i_2}}(x_{i_2})  + \cdots+  \tilde{w}^{k_{i_j}}(x_{i_j})$ where $ M_x^{P,\pi} = \{i_1,i_2,\ldots,i_j\} $. 
\begin{theorem}\label{rlessthanMw}
	For any $1 \leq r \leq M_{w} $,  the number of $N$-tuples  $x \in \mathbb{F}_{q}^N$ having $w_{(P,w,\pi)}(x)= r$  is  $A_r =$
	\begin{align*}
		\sum\limits_{j=1}^{n} \sum\limits_{I \in  \mathcal{I}_j^{j}} \sum\limits_{(b_1,b_2,\ldots,b_j) \in PRT_0[r]} \sum\limits_{(b_{m_1},  b_{m_2},  \ldots , b_{m_j}) \in ARG[(b_1,b_2,\ldots,b_j)] }  |D_{b_{m_{i_1}}}^{k_{i_1}}| |D_{b_{m_{i_2}}}^{k_{i_2}}| \cdots |D_{b_{m_{i_j}}}^{k_{i_j}}|
	\end{align*}
	wherein, for an $I \in \mathcal{I}_j^{j}$, $Max(I) = \{i_1,i_2,\ldots,i_j\}$ and $(b_{1},b_{2}, \dots, b_{j}) \in PRT_{0}[r]$.
\end{theorem}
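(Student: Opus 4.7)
The plan is to fix $r$ with $1 \leq r \leq M_w$ and count $|A_r|$ by first constraining the shape of the ideal generated by the block-support, and then enumerating block choices. First I would show that for any $x \in A_r$, the ideal $I_x^{P,\pi}$ must coincide with $M_x^{P,\pi}$, so that $I_x^{P,\pi} \in \mathcal{I}_j^j$ for some $1 \leq j \leq n$. Indeed, if $I_x^{P,\pi}$ contained a non-maximal element, then that element alone would contribute $M_w$ to $w_{(P,w,\pi)}(x)$, while at least one maximal block would contribute at least $m_w \geq 1$, giving $w_{(P,w,\pi)}(x) \geq m_w + M_w > M_w \geq r$, a contradiction. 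This observation justifies restricting the outer sum to ideals $I \in \mathcal{I}_j^j$.

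With this restriction in hand, I would partition $A_r$ as the disjoint union $A_r = \bigsqcup_{j=1}^{n} \bigsqcup_{I \in \mathcal{I}_j^j} A_r(I)$, where $A_r(I) = \{x \in A_r : I_x^{P,\pi} = I\}$. Fixing $I \in \mathcal{I}_j^j$ with $Max(I) = I = \{i_1, i_2, \ldots, i_j\}$, an element $x \in A_r(I)$ is completely determined by the blocks $x_{i_s}$ for $s = 1, \ldots, j$ (with $x_t = \bar{0}$ for $t \notin I$). Setting $c_s = \tilde{w}^{k_{i_s}}(x_{i_s})$, these integers satisfy $1 \leq c_s \leq M_w$ and $c_1 + c_2 + \cdots + c_j = r$; conversely, for any ordered tuple $(c_1, \ldots, c_j)$ meeting these constraints, the number of admissible vectors is $\prod_{s=1}^{j} |D_{c_s}^{k_{i_s}}|$ by the definition of $D_{c_s}^{k_{i_s}}$ (and Proposition \ref{D_r^j} quantifies these factors).

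The final step is to convert the resulting sum over ordered tuples into the partition/arrangement form stated in the theorem: each tuple $(c_1, \ldots, c_j)$ with $\sum_s c_s = r$ and $1 \leq c_s \leq M_w$ is obtained uniquely by choosing a non-increasing rearrangement $b = (b_1, \ldots, b_j) \in PRT_0[r]$ together with an arrangement $(b_{m_1}, \ldots, b_{m_j}) \in ARG[b]$ that reproduces the original ordering. Substituting this reindexing into $|A_r(I)|$ and summing over $I \in \mathcal{I}_j^j$ and $j$ gives the claimed formula. The main (modest) obstacle is this bookkeeping bijection between ordered block-weight tuples and (partition, arrangement) pairs, but the definitions of $PRT_0[r]$ and $ARG[b]$ are designed precisely to support it, so once the step for $r \leq M_w$ forces $I_x^{P,\pi} = M_x^{P,\pi}$, the remainder is a clean combinatorial rewriting.
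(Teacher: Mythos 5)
Your proposal is correct and follows essentially the same route as the paper: both arguments use $r \leq M_w$ to force $I_x^{P,\pi} = M_x^{P,\pi} \in \mathcal{I}_j^{j}$, then partition $A_r$ according to the support ideal $I$ and count the admissible blocks at the maximal positions via the sets $D_{b}^{k_{i}}$, organized through $PRT_0[r]$ and $ARG[b]$. Your version merely makes the disjointness of the pieces and the bijection between ordered weight tuples and (partition, arrangement) pairs more explicit, which the paper leaves implicit in the definition of $T_b[I]$.
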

\begin{proof} 
	Let $x \in {A}_r $ and $I = \langle supp_{\pi} (x) \rangle $. Now $I  \in \mathcal{I}_j^{j}$ for some $j \in \{1,2,\ldots,n\}$ as $r \leq M_w$. Let $ I = \{i_1,i_2,\ldots,i_j\}$, $b=(b_1,b_2,\ldots,b_j)  \in PRT_{0}[r]$.  Thus,
	\begin{align*}
		T_b [I] =	&	 \bigcup\limits_{ (b_{m_{i_1}},  b_{m_{i_2}},  \ldots , b_{m_{i_j}})    \in ARG[b] }  		
		\bigg\{ x= x_{1} \oplus x_{2} \oplus \cdots \oplus x_{n} \in \mathbb{F}_{q}^{N} :  
		x_s  = \\& \hspace*{5cm}\left\{ 
		\begin{array}{ll}
			0  \in \mathbb{F}_q^{k_s}, & \text{for} ~s \notin I \\
			x_s \in 	D_{b_{m_s}}^{k_s}, &   \text{for}  \ s \in I
		\end{array}	\right\}, 1 \leq s \leq n \bigg\}
	\end{align*}
	gives all the vectors $x \in \mathbb{F}_q^{N}$ having $w_{(P,w,\pi)}(x)= r$ such that  $ \langle supp_{\pi} (x) \rangle =I $ and $ \sum\limits_{s \in Max(I)}  \tilde{w}^{k_s}(x_s) = b_1 + b_2 +  \cdots + b_j$. So,   
	\begin{align*}
		|T_b [I] | &= \sum\limits_{(b_{m_{i_1}},  b_{m_{i_2}},  \ldots , b_{m_{i_j}}) \in ARG[(b_{1},b_{2}, \dots, b_{j})] } |D_{b_{m_{i_1}}}^{k_{i_1}}|  |D_{b_{m_{i_2}}}^{k_{i_2}}| \cdots  |D_{b_{m_{i_j}}}^{k_{i_j}}| .
	\end{align*}
	Hence, the number of $x \in \mathbb{F}_q^{N}$ having ${(P,w,\pi)}$-weight as $r$ is
	%{equation*}	
	$	|A_{r}| 
	= \sum\limits_{j=1}^{n} \sum\limits_{I \in  \mathcal{I}_j^{j}} \sum\limits_{(b_1,b_2,\ldots,b_j) \in PRT_0[r]} |T_b [I] | .$
	%\end{equation*} 
\end{proof} 
For the case when  $r \geq M_w$, we will have $ r=tM_w+a$ for some non-negative integer $a \leq M_w$ and  for any $x \in A_r$, $I_x^{P, \pi} $ will consist of at most $t$ non-maximal elements.
\begin{theorem}\label{r=Mw}
	For any $ r =t M_{w} + a$ with $1 \leq t \leq n-1$ and $0 < a \leq M_w$, the number of $N$-tuples  $ x \in \mathbb{F}_{q}^N$ having $w_{(P,w,\pi)}(x)= r$  is 
	\begin{align*}
		|A_{tM_w + a}| =	\sum\limits_{i=0}^{t} \sum\limits_{j=1}^{n}  \sum\limits_{I \in  \mathcal{I}_j^{j+i}}
		\sum\limits_{(b_1,b_2,\ldots,b_j) \in PRT_{i}[r]}    \sum\limits_{(b_{m_{i_1}},b_{m_{i_2}},  \ldots , b_{m_{i_j}}) \in ARG[(b_1,b_2,\ldots,b_j)] }  |D_{b_{m_{i_1}}}^{k_{i_1}}|  \\  |D_{b_{m_{i_2}}}^{k_{i_2}}|  \cdots |D_{b_{m_{i_j}}}^{k_{i_j}}|   q^{k_{l_1} + k_{l_2} + \cdots + k_{l_i}} 
	\end{align*} 
	wherein, for an $I \in {\mathcal{I}_j^{j+i} }$,  $Max(I)=\{i_1,i_2,\ldots,i_j\}$ and  $I \setminus Max(I) = \{l_1, l_2, \dots, l_{i} \}$.
\end{theorem}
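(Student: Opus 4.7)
The plan is to mirror the proof of Theorem \ref{rlessthanMw}, but now admit the possibility of non-maximal elements in $I_x^{P,\pi}$, which appear precisely because $r > M_w$. For any $x \in A_{tM_w+a}$, write $I := \langle supp_\pi(x)\rangle \in \mathcal{I}_j^{j+i}$ where $j \geq 1$ counts the maximal elements and $i \geq 0$ the non-maximal ones. From the defining formula of the $(P,w,\pi)$-weight,
\begin{align*}
\sum_{s \in Max(I)} \tilde{w}^{k_s}(x_s) \;=\; r - iM_w \;=\; (t-i)M_w + a.
\end{align*}
Since each $\tilde{w}^{k_s}(x_s) \geq m_w \geq 1$ and $\leq M_w$, the left side is a partition of $r - iM_w$ into exactly $j$ parts, each part in $[1,M_w]$, i.e.\ an element of $PRT_i[r]$. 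The range $0 \leq i \leq t$ is forced by the requirement $r - iM_w \geq j \geq 1$, together with $a \leq M_w$.

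Next, I would fix $i$, $j$, an ideal $I \in \mathcal{I}_j^{j+i}$ with $Max(I)=\{i_1,\ldots,i_j\}$ and $I\setminus Max(I)=\{l_1,\ldots,l_i\}$, and a partition $b=(b_1,\ldots,b_j)\in PRT_i[r]$. Exactly as in Theorem \ref{rlessthanMw}, an arrangement $(b_{m_{i_1}},\ldots,b_{m_{i_j}}) \in ARG[b]$ distributes the weights $b_s$ to the maximal positions, so the number of valid $x_{i_s}$ at position $i_s \in Max(I)$ is $|D_{b_{m_{i_s}}}^{k_{i_s}}|$. The new ingredient is the choice at non-maximal positions: for each $l_r \in I\setminus Max(I)$, the block $x_{l_r}$ may be any element of $\mathbb{F}_q^{k_{l_r}}$, contributing a factor of $q^{k_{l_r}}$. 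At positions $s \notin I$, the block $x_s$ is forced to be $\bar 0$. Multiplying these contributions yields
\begin{align*}
|T_b[I]| \;=\; \sum_{(b_{m_{i_1}},\ldots,b_{m_{i_j}})\in ARG[b]} |D_{b_{m_{i_1}}}^{k_{i_1}}|\cdots|D_{b_{m_{i_j}}}^{k_{i_j}}|\,\cdot\, q^{k_{l_1}+\cdots+k_{l_i}},
\end{align*}
and summing $|T_b[I]|$ over $i$, $j$, $I$, and $b$ gives the announced formula.

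The main point that needs care (rather than a genuine obstacle) is the claim that at a non-maximal position $s \in I\setminus Max(I)$ the block $x_s$ is unconstrained, including the possibility $x_s = \bar 0$. This is legitimate because $s$ lies below some $s' \in Max(I)$, and at the maximal positions we have enforced $\tilde{w}^{k_{s'}}(x_{s'}) = b_{m_{s'}} \geq 1$, so $s' \in supp_\pi(x)$; hence $s \in \langle supp_\pi(x)\rangle$ regardless of whether $x_s$ vanishes. Combined with the forced zeros at $s \notin I$, this ensures $\langle supp_\pi(x)\rangle = I$ exactly, so no double counting or omission occurs across the disjoint union $A_r = \bigcup_{i,j,I,b} T_b[I]$ already recorded in the preceding proposition.
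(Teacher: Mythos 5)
Your argument is correct and follows essentially the same route as the paper's proof: the same decomposition of $A_r$ into the sets $T_b[I]$ indexed by $i$, $j$, $I \in \mathcal{I}_j^{j+i}$ and $b \in PRT_i[r]$, the same count $|T_b[I]| = \sum_{ARG[b]} |D_{b_{m_{i_1}}}^{k_{i_1}}|\cdots|D_{b_{m_{i_j}}}^{k_{i_j}}|\, q^{k_{l_1}+\cdots+k_{l_i}}$, and the same final summation with $0 \le i \le t$. Your added justifications (why $i \le t$, and why free, possibly zero, blocks at non-maximal positions still give $\langle supp_\pi(x)\rangle = I$ exactly) are sound refinements of points the paper states without elaboration.
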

\begin{proof}
	Let $x \in {A}_r $ and $I = \langle supp_{\pi} (x) \rangle $. As  $w_{(P,w,\pi)}(x) =r \geq M_w  $, $I  \in \mathcal{I}_j^{j+i}$ for some $j  \in \{1,2,\ldots,n\}$ and $ i \geq  0$.  
	Let $ Max(I) = \{i_1,i_2,\ldots,i_j\} $, $ I  \setminus Max(I) = \{l_1, l_2, \dots, l_{i} \}$ and $b=(b_1,b_2,\ldots,b_j)  \in PRT_{i}[r]$.  Thus, 
	\begin{equation*}
		{{\begin{split}
					T_b[I] =	 	\bigcup\limits_{ (b_{m_{i_1}},  b_{m_{i_2}}, \ldots , b_{m_{i_j}})    \in ARG[b] } 
					\bigg\{ &
					x= x_{1} \oplus  \cdots \oplus x_{n} \in \mathbb{F}_{q}^{N} :  	  
					x_s  = \\ & \hspace*{-1cm}\left\{ 
					\begin{array}{ll}
						\bar{0}  \in \mathbb{F}_q^{k_s}, & \text{for} ~s \notin I \\
						x_s \in	D_{b_{m_s}}^{k_s}, &  \text{for}  \ s \in Max(I) \\
						x_s	\in \mathbb{F}_q^{k_s}, &  \text{for}  \ s \in I  \setminus Max(I)
					\end{array}	\right\},  1 \leq s \leq n  \bigg\}  
		\end{split}}}
	\end{equation*}
	\noindent gives all the vectors $x \in \mathbb{F}_q^{N}$ having $w_{(P,w,\pi)}(x)= r$ such that  $\langle supp_{\pi} (x) \rangle  = I$ and $ \sum\limits_{i \in Max(I)}  \tilde{w}^{k_i}(x_i) = b_1 + b_2 +  \cdots + b_j$. So,   
	\begin{align*}
		|T_b [I]| =& \sum\limits_{(b_{m_{i_1}},  b_{m_{i_2}},  \ldots , b_{m_{i_j}})) \in ARG[b] } |D_{b_{m_{i_1}}}^{k_{i_1}}|  |D_{b_{m_{i_2}}}^{k_{i_2}}| \cdots |D_{b_{m_{i_j}}}^{k_{i_j}}| q^{k_{l_1} + k_{l_2} + \cdots + k_{l_i}}. 
	\end{align*}	 
	As   $0 \leq |I \setminus Max(I)| = i \leq t $, so 
	\begin{equation*}
		|A_{r}| 
		=  \sum\limits_{i=0}^{t} \sum\limits_{j=1}^{n} \sum\limits_{I \in  \mathcal{I}_j^{j+i}} \sum\limits_{(b_1,b_2,\ldots,b_j) \in PRT_i[r]} |T_b [I]|.
	\end{equation*}
\end{proof}
Based on the results arrived at, $|A_r|$ can be stated  for certain particular values of $r$: \\
When $r < M_w$, we have
\begin{align*}
	|A_1| &= \sum\limits_{I \in  \mathcal{I}_1^{1}}  |D_{1}^{k_{i_1}} |, \\
	|A_2| &= \sum\limits_{I \in  \mathcal{I}_1^{1}} |D_{2}^{k_{i_1}} | + \sum\limits_{I \in  \mathcal{I}_2^{2}}  |D_{1}^{k_{i_1}} | |D_{1}^{k_{i_2}} |\\ &\text{ and so on}.
\end{align*}	
When $r = M_w$, we have   
\begin{equation*}
	\begin{split}
		 |A_{M_w}| =  \sum\limits_{j=1}^{n} \sum\limits_{I \in  \mathcal{I}_j^{j}} 
		\sum\limits_{(b_1,b_2,\ldots,b_j) \in PRT_0{[M_w]}} 
		\sum\limits_{(b_{m_{i_1}}, b_{m_{i_2}},   \ldots , b_{m_{i_j}}) \in ARG[(b_1,b_2,\ldots,b_j)] }
		|D_{b_{m_{i_1}}}^{k_{i_1}}| \\ |D_{b_{m_{i_2}}}^{k_{i_2}}|  \cdots |D_{b_{m_{i_j}}}^{k_{i_j}}|  
	\end{split}
\end{equation*}
When  $r=M_{w}+a$ for  $1\leq a \leq M_{w} -1$, we have 
\begin{align*}
	|A_{M_{w} + a}|= 	\sum\limits_{j=1}^{n} \sum\limits_{I \in  \mathcal{I}_j^{j}} \sum\limits_{(b_1,b_2,\ldots,b_j) \in PRT_0[r]} 
	\sum\limits_{(b_{m_{i_1}}, b_{m_{i_2}},    \ldots , b_{m_{i_j}}) \in ARG[(b_1,b_2,\ldots,b_j)] }  |D_{b_{m_{i_1}}}^{k_{i_1}}| \\
	|D_{b_{m_{i_2}}}^{k_{i_2}}|   \cdots 
	|D_{b_{m_{i_j}}}^{k_{i_j}}|      +
	\sum\limits_{j=1}^{n} \sum\limits_{I \in  \mathcal{I}_j^{j+1}}   \sum\limits_{(b'_1,b'_2,\ldots,b'_j) \in PRT_{1}[r]}  \sum\limits_{(b'_{m_{i_1}},b'_{m_{i_2}},   \ldots , b'_{m_{i_j}}) \in ARG[(b'_1,b'_2,\ldots,b'_j)] } \\
	|D_{b'_{m_{i_1}}}^{k_{i_1}}|  |D_{b'_{m_{i_2}}}^{k_{i_2}}|   \cdots |D_{b'_{m_{i_j}}}^{k_{i_j}}|    q^{k_{l_1}}.		
\end{align*}
%%%%%%%%%%%%%%%%%%%%%%%%%%%%%%%%%%%%%%%%%%%%%%%%%%%%%%%%%%%%%%%%%%%%%%%%%%%%%%%%%%%%%%%%%%%%%%%%%%%%%%%%%%%%%%%%%%%%%%%%%%%%%%%%%%%%%%%%%%%%%%%%%%%%%%%%%%%%%%%%%%%%%%%%%%%%%%%%%%
\par Now we shall illustrate the obtained results by choosing the field as $\mathbb{Z}_{7}$ by considering the weight $w$ on it as the Lee weight.  Note that, if we let $L_j = \{ i \in [n] : | \langle i \rangle |= j\} $,	 $1 \leq j \leq n$, then  $|\mathcal{I}_j^j| = {|L _1| \choose j}$. If $m_0$ denote the number of minimal elements in the given poset $P$, then  $|L_1|=m_0$. 
\begin{example} \label{exa1}
	Let $\preceq $ be a partial order relation  on  $[5]=\{1,2,3,4,5\}$ such that $1 \preceq 2$. Consider $w$ as the Lee weight on $\mathbb{Z}_{7}$ and let $\mathbb{Z}_{7}^{13}=\mathbb{Z}_{7}^2 \oplus \mathbb{Z}_{7}^3 \oplus \mathbb{Z}_{7}^4 \oplus \mathbb{Z}_{7}^2 \oplus \mathbb{Z}_{7}^2$.  Here $I_1 = \{1\}$, $I_2 = \{3\}$, $I_3 = \{4\}$, $I_4 = \{5\}$  are the only ideals with cardinality one with respect to 
	$ \preceq$ and hence $\mathcal{I}_1^1 =\{I_1, I_2, I_3,I_4\}$. $I_5 = \{1,2\}$ is the ideal with cardinality two with one maximal element and $\mathcal{I}_1^{2} =\{I_5\}$.  
	$I_6 = \{1,3\}$, $I_7 = \{1,4\}$, $I_8 = \{1,5\}$, $I_9 = \{3,4\}$, $I_{10} = \{3,5\}$, $I_{11} = \{4,5\}$ are the ideals with cardinality $2$ with $2$ maximal elements and $\mathcal{I}_2 ^2 =\{I_6, I_7, I_8,I_{9},I_{10},I_{11}\}$. 
	$I_{12} = \{1,2,3\}$, $I_{13} = \{1,2,4\}$, $I_{14} = \{1,2,5\}$ are the ideals with cardinality $3$ with $2$ maximal elements and $\mathcal{I}_2^{3} =\{I_{12}, I_{13}, I_{14}\}$.
	$I_{15}= \{1,3,4\}$,  $I_{16}= \{1,3,5\}$, $I_{17}= \{1,4,5\}$, $I_{18}= \{3,4,5\}$ are the ideals with cardinality $3$ with $3$ maximal elements and $\mathcal{I}_3^3 = \{I_{15},I_{16}, I_{17}, I_{18} \}$. 
	$I_{19} = \{1,2,3,4\}$, $I_{20} = \{1,2,3,5\}$, $I_{21} = \{1,2,4,5\}$ are the ideals with cardinality $4$ with $3$ maximal elements and  $\mathcal{I}_3 ^4 =\{I_{19}, I_{20}, I_{21} \}$.  
	$I_{22} = \{1,3,4,5\}$ is the only ideal with cardinality $4$ with $4$ maximal elements and thus $\mathcal{I}_4 ^4 =\{I_{22} \}$.  $I_{23} = \{1,2,3,4,5\}$ is the only ideal with cardinality $5$ with $4$ maximal elements and hence $\mathcal{I}_4 ^5 =\{I_{23} \}$. And, $\mathcal{I}_1^3 = \mathcal{I}_1^4 = \mathcal{I}_1^5 = \mathcal{I}_2^4 = \mathcal{I}_2^5 =  \mathcal{I}_3^5 =  \mathcal{I}_5^5 = \{ \}$. 
	\begin{enumerate}[label=(\roman*)]
		\item   \sloppy{In $\mathbb{Z}_7$, $M_w=3$ and $|D_{1}| = |D_{2}| =|D_{3}| =2 $. 
			From Proposition \ref{D_r^j}, in $\mathbb{Z}_7^{13}$,} \\ 
		$|D_{1}^{k}| = (1+|D_{1}| )^{k} -1$, \\
		$|D_{2}^{k} | = (1+|D_{1}|+|D_{2}| )^{k} -(1+|D_{1}|)^k$, \\
		$|D_{3}^{k} | = (1+|D_{1}|+|D_{2}| +|D_{3}|)^{k} -(1+|D_{1}|+|D_{2}|)^k$, 
		where $k =2,3$, $4$.
		%%%%%%%%%%%%%%%%%%%%%%%%%%%%%%%%%%%%%%%%%%%%%%%%%%%%%%%%%%%%%%%%%%%%%%%%%%%%%%%%%%%%%%%%%%%%%%%%%%%%%%%%%%%%%%%%%%%%%%%%%%%%%%%%%%%%%%%%%%%%%%%%%%%
		\item 	    $PRT[3]=\{(1 ,1 ,1),  ( 1,2), (3)\}$ and   $ARG[( 1,2)] = \{ ( 1,2), ( 2,1) \}$. 
		Thus,	number of vectors $x \in  \mathbb{Z}_{7}^{13}$ having weight $3$ is 
		\begin{equation*}
			\begin{split}
				|A_3| 	=&	 \sum\limits_{j=1}^{3} \sum\limits_{I \in  \mathcal{I}_j^{j}} \sum\limits_{(b_1,b_2,\ldots,b_j) \in PRT_{0}[r]}  \sum\limits_{(b_{m_{i_1}}, b_{m_{i_2}},   \ldots , b_{m_{i_j}}) \in ARG[(b_1,\ldots,b_j)] }  |D_{b_{m_{i_1}}}^{k_{i_1}}| 
				\cdots |D_{b_{m_{i_j}}}^{k_{i_j}}|  \\ 
				=&   \sum\limits_{I \in  \mathcal{I}_1^{1}} |D_{3}^{k_{i_1}}| + \sum\limits_{I \in  \mathcal{I}_2^{2}} (|D_{1}^{k_{i_1}}|  |D_{2}^{k_{i_2}}| + |D_{2}^{k_{i_1}}|  |D_{1}^{k_{i_2}}|)    + \sum\limits_{I \in  \mathcal{I}_{3}^{3}}  |D_{1}^{k_{i_1}}|  |D_{1}^{k_{i_2}}|  |D_{1}^{k_{i_3}}|  					\\
				=  &	|D_{3}^{2} | + |D_{3}^{4} | + |D_{3}^{2} |  +|D_{3}^{2} | 
				+  |D_{1}^{2} | |D_{2}^{4} | +  |D_{2}^{2} | |D_{1}^{4} | 
				+ |D_{1}^{2} | |D_{2}^{2} | + |D_{2}^{2} | |D_{1}^{2} | + \\& 
				|D_{1}^{2} |  |D_{2}^{2} |  +    |D_{2}^{2} | |D_{1}^{2} | +  
				|D_{1}^{4} | |D_{2}^{2} |  +   |D_{2}^{4} | |D_{1}^{2} | +   |D_{1}^{4} | |D_{2}^{2} |  + |D_{2}^{4} | |D_{1}^{2} | + \\& |D_{1}^{2} |  |D_{2}^{2} | 		  
				+ |D_{2}^{2} | |D_{1}^{2} |  +     |D_{1}^{2} |   |D_{1}^{4} |  |D_{1}^{2} |  +   |D_{1}^{2} |   |D_{1}^{4} |   |D_{1}^{2} | 	+ 	 |D_{1}^{2} |   |D_{1}^{2} |  |D_{1}^{2}| 	
				+  \\&|D_{1}^{4} |  |D_{1}^{2} | |D_{1}^{2}|   
				\\
				= & 24 + 1776 + 24 + 24 +  8 \times 544 +  16 \times 80 +  8 \times 16 +  16 \times 8 +   8 \times 16   + \\& 16 \times 8 +  80 \times 16   +
				544 \times 8   +  80 \times 16 + 544 \times 8 +  8 \times 16 + 
				16 \times 8 +  \\&
				8 \times 80  \times 8 +     8 \times 80  \times 8
				+  8 \times 8  \times 8   + 80 \times 8  \times 8  \\
				= & 35,384.								  						
			\end{split}
		\end{equation*}
		%%%%%%%%%%%%%%%%%%%%%%%%%%%%%%%%%%%%%%%%%%%%%%%%%%%%%%%%%%%%%%%%%%%%%%%%%%%%%%%%%%%%%%%%%%%%%%%%%%%%%%%%%%%%%%%%%%%%%%%%%%%%%%%%%%
		\item      Also,  $PRT[14]=\{(3,3,3,3,2)\}$ and $PRT_1[14]=\{(3,3,3,2)\}$.   Consider the arrangements of $ (3,3,3,2)$ as $ARG[(3,3,3,2)] = \{ (3,3,3,2),(3,3,2,3),(3,2,3,3),(2,3,3,3)\}$. Now  $\mathcal{I}_4 ^5 =\{I_{23} \}$ and $\mathcal{I}_5^5 = \{\phi\}$.  Thus,	number of vectors $x \in  \mathbb{Z}_{7}^{13}$ having weight $14$ is  
		\begin{equation*}
			\begin{split}			
				|A_{14}| &= 		\sum\limits_{I \in  \mathcal{I}_{4}^{5}} 			  \sum\limits_{(b_1,b_2,b_3,b_4) \in PRT_{1}[14]}  \sum\limits_{(b_{m_{i_1}},  b_{m_{i_2}},  b_{m_{i_3}} , b_{m_{i_4}}) \in ARG[(3,3,3,2)] }  \\ & \hspace*{6.5cm} |D_{b_{m_{i_1}}}^{k_{i_1}}|    |D_{b_{m_{i_2}}}^{k_{i_2}}|  |D_{b_{m_{i_3}}}^{k_{i_3}}| |D_{b_{m_{i_4}}}^{k_{i_4}}|  q^{k_{l_1}} \\
				&=(|D_{3}^{3}|  |D_{3}^{4}| |D_{3}^{2}|  |D_{2}^{2}|   +  |D_{3}^{3}| |D_{3}^{4}|  |D_{2}^{2}|  |D_{3}^{2}| +  |D_{3}^{3}|  |D_{2}^{4}|  |D_{3}^{2}|  |D_{3}^{2}|  + \\&
				\hspace*{.7cm} |D_{2}^{3}|  |D_{3}^{4}| |D_{3}^{2}| |D_{3}^{2}|)  7^2\\ 
				&= (218 \times 1776 \times 24 \times 16 + 218 \times 1776 \times  16 \times 24  + 218 \times 544 \times 24 \times 24  + \\& \hspace*{.5cm} 98 \times 1776 \times 24 \times 24 )
				\times 49 \\
				&= 22,829,377,536.
			\end{split}
		\end{equation*}			
		%%%%%%%%%%%%%%%%%%%%%%%%%%%%%%%%%%%%%%%%%%%%%%%%%%%%%%%%%%%%%%%%%%%%%%%%%%%%%%%%%%%%%%%%%%%%%%%%%%%%%%%%%%%%%%%%%%%%%%%%%%%%%%%%%%%%%%	  
	\end{enumerate}
\end{example}
Calculating $ARG[b]$ for each  partition $ b= (b_1,b_2,\ldots,b_j)  \in 	PRT_{i-j}[r] $ is seemingly cumbersome. But, that is not the case  when   $k_i = k$ $\forall$ $i \in [n]$, as the  number of components $x_i \in \mathbb{F}_{q}^{k_i} $ having $\tilde{w}^{k_i}(x_i)=b_i$ will not then depend on the label $i \in [n]$ and thus it is not  necessary to  find  $ARG[b]$. If $ b= (b_1,b_2,\ldots,b_j)  \in 	PRT_{i-j}[r] $, let $t_1, t_2 , \ldots, t_l$ be the  distinct  $l$ elements from  the $j$ parts  $b_1,b_2,\ldots,  b_j$ with multiplicity   $r_1,r_2,\ldots,r_l$ respectively so that   $ \sum_{s=1}^{l} r_s  t_s=b_1+b_2+\dots+ b_j $.
\begin{theorem}\label{Ar without ARG}
	Let $k_i=k$ $\forall$ $i \in [n]$. Then, for any $ r \leq n M_w$, the number of $N$-tuples  $ x \in \mathbb{F}_{q}^N$ having $w_{(P,w,\pi)}(x)= r$  is $|A_{r}| =$
	\begin{align*}
			\sum\limits_{i=1}^{n} \sum\limits_{j=1}^{i} \sum\limits_{I \in \mathcal{I}_j^{i}}  \sum\limits_{(b_1,b_2,\ldots,b_j) \in PRT_{i-j}[r]} q^{k(i-j)}   
		\prod\limits_{s=1}^{l} |D_{t_s}^k | ^{r_s}     {j - (r_1+r_2+\ldots+r_{s-1})\choose r_s} 
	\end{align*}
	where  $r_s$ parts among the $j$ parts in  $b_1, b_2, \dots, b_j  $  are equal to $t_s$ 
	$(1 \leq s \leq l)$.
\end{theorem}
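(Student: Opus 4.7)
The plan is to specialize the general formulas from Theorem \ref{rlessthanMw} and Theorem \ref{r=Mw} to the case $k_i=k$ for all $i\in[n]$, and then to rewrite the arrangement-indexed inner sum as a multinomial-coefficient product. The key simplification is that once all block lengths agree, $|D_{b_{m_{i_s}}}^{k_{i_s}}|=|D_{b_{m_{i_s}}}^{k}|$ depends only on the value $b_{m_{i_s}}$ of the part and not on the position $i_s$. Likewise, for any ideal $I$ with $|I\setminus Max(I)|=i-j$ non-maximal elements, the factor $q^{k_{l_1}+k_{l_2}+\cdots+k_{l_{i-j}}}$ reduces to $q^{k(i-j)}$, again independent of which non-maximal positions are chosen.

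My first step is therefore to observe that for a fixed partition $b=(b_1,\ldots,b_j)\in PRT_{i-j}[r]$ and a fixed ideal $I\in\mathcal{I}_j^{i}$, every arrangement $(b_{m_{i_1}},\ldots,b_{m_{i_j}})\in ARG[b]$ produces the same product $\prod_{s=1}^{j}|D_{b_{m_{i_s}}}^{k}|$. Grouping by the $l$ distinct values $t_1,\ldots,t_l$ occurring with multiplicities $r_1,\ldots,r_l$ (so $r_1+\cdots+r_l=j$), this common product rewrites as $\prod_{s=1}^{l}|D_{t_s}^{k}|^{r_s}$. Consequently the inner sum over $ARG[b]$ collapses to $|ARG[b]|\cdot\prod_{s=1}^{l}|D_{t_s}^{k}|^{r_s}$, and by the counting formula for $|ARG[b]|$ recorded just before the theorem, $|ARG[b]|=\dfrac{j!}{r_1!\,r_2!\cdots r_l!}$.

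Next, I invoke the standard multinomial identity
\begin{equation*}
\frac{j!}{r_1!\,r_2!\cdots r_l!} \;=\; \prod_{s=1}^{l}\binom{j-(r_1+r_2+\cdots+r_{s-1})}{r_s},
\end{equation*}
which follows by telescoping: choose $r_1$ slots for the value $t_1$ out of the $j$ positions, then $r_2$ out of the remaining $j-r_1$ for $t_2$, and so on, using $r_1+\cdots+r_l=j$. Substituting this into the preceding step yields exactly the product displayed in the statement of the theorem.

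Finally, I reindex the outer summations to reach the compact form. In Theorem \ref{r=Mw} the index $i$ counts non-maximal elements and $I$ ranges over $\mathcal{I}_j^{j+i}$; setting $i'=j+i$ (the cardinality of the ideal), this becomes $I\in\mathcal{I}_j^{i'}$ with $1\le j\le i'\le n$, and $PRT_i[r]$ becomes $PRT_{i'-j}[r]$, while $q^{k(i-j)}$ in the inner factor is the same as $q^{k(i'-j)}$ after relabeling. The range $r\le M_w$ handled by Theorem \ref{rlessthanMw} is absorbed as the subcase $i'=j$ (no non-maximal elements), where $q^{k(i'-j)}=1$ and $PRT_{i'-j}[r]=PRT_0[r]$. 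Combining these gives precisely the asserted formula. The main obstacle is purely bookkeeping: keeping the roles of the ideal-cardinality index, the non-maximal-count index, and the partition-length index cleanly separated through the reindexing; no nontrivial combinatorial input beyond the multinomial identity is required.
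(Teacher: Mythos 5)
Your proposal is correct, and it reaches the formula by a genuinely (if mildly) different route than the paper. The paper proves Theorem \ref{Ar without ARG} by a fresh direct count: for a fixed ideal $I\in\mathcal{I}_j^{i}$ and a fixed partition $b\in PRT_{i-j}[r]$, it chooses, for each distinct part value $t_s$, which $r_s$ of the $j$ maximal positions carry a block of weight $t_s$ (giving the telescoping binomial factors directly, with $|D_{t_s}^k|^{r_s}$ choices of blocks) and multiplies by $q^{k(i-j)}$ for the non-maximal blocks; it never mentions $ARG[b]$. You instead treat the theorem as a corollary of Theorems \ref{rlessthanMw} and \ref{r=Mw}: since all $k_i=k$, the product over an arrangement depends only on the multiset of parts, so the $ARG[b]$-sum collapses to $|ARG[b]|\prod_s|D_{t_s}^k|^{r_s}$ with $|ARG[b]|=\frac{j!}{r_1!\cdots r_l!}$, which the multinomial/telescoping identity converts into the displayed product, and a reindexing $i'=i+j$ (with the empty-$PRT$ cases, including the $r\le M_w$ range, contributing nothing) unifies the two earlier statements into the asserted single sum. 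The underlying combinatorics is the same multinomial count, but your version buys economy and makes the logical dependence on the earlier theorems explicit, while the paper's direct count is self-contained and does not require tracking the arrangement-indexed bookkeeping; both are valid, and your handling of the reindexing and degenerate ranges is careful enough to close the argument.
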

\begin{proof}
	Let $I \in \mathcal{I}_{j}^{i}$ with  $|Max(I)|=j$ and let $ b=(b_1,b_2,\ldots,b_j)  \in PRT _{i-j}[r] $.  Let $t_1, t_2 , \ldots, t_l$ be the distinct $l$  elements from the set of $j$ parts in $b$ such that $r_s$ is the multiplicity of  $t_s$ ($1 \leq s \leq l$) so that $\sum\limits_{s=1}^{l}  r_s  = j $ and $\sum\limits_{s=1}^{l} r_s t_s  = b_1+b_2+\cdots +b_j =r - (i-j) M_w $.
	Since $t_s$  occurs $r_s$ times, the total number of choices for $x_s \in \mathbb{F}_{q}^{k_s} $ such that $\tilde{w}^{k_s} (x_s)=t_s$ in $r_s$ places of $Max(I)$ is $|D_{t_s}^k | ^{r_s} {j \choose r_s}$. Then for each $ (b_1,b_2,\ldots,b_j)  \in PRT _{i-j}[r] $ and $I \in \mathcal{I}_{j}^{i}$, the total number of $N$-tuples $x \in \mathbb{F}_q ^N$ having $w_{(P,w,\pi)}(x)= r $ is $ q^{k(i-j)} \prod\limits_{s=1}^{l}|D_{t_s}^k | ^{r_s} {j - (r_1+r_2+\cdots+r_{s-1})\choose r_s} $. 
	Hence, $|A_{r}| =$
	\begin{align*}
			\sum\limits_{i=1}^{n} \sum\limits_{j=1}^{i} \sum\limits_{I \in \mathcal{I}_j^{i}} \sum\limits_{ (b_1,b_2,\ldots,b_j)  \in PRT _{i-j} [r] }  q^{k(i-j)}  \prod\limits_{s=1}^{l} 
		|D_{t_s}^k | ^{r_s}  {j - (r_1+r_2+\ldots+r_{s-1})\choose r_s} 
	\end{align*} where  $r_0 = 0$ and $|D_{t_s}^k |= ( \sum\limits_{i=0}^{t_s} |D_{i}|)^{k}-( \sum\limits_{i=0}^{t_s-1} |D_{i}|)^k$.
\end{proof}
\begin{corollary}
	If $k_i=k$ $\forall$ $i \in [n]$ then the number of $N$-tuples  $ x \in \mathbb{F}_{q}^N$ having $w_{(P,w,\pi)}(x)= n M_w$  is 
	$
	({ q^{k}-( q -  |D_{M_{w}}|) ^{k} })^{t} q^{k(n-t)}  
	$ 
	where $t$ is the  number of maximal elements in the given poset $P$. 
\end{corollary}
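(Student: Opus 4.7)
The plan is to give a direct combinatorial argument, bypassing the heavier machinery of Theorem \ref{Ar without ARG}. Since $w_{(P,w,\pi)}(x) \leq nM_w$ for every $x \in \mathbb{F}_q^N$, attaining equality is very restrictive, and the structure of $x$ can be pinned down explicitly.

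First I would argue that any $x \in A_{nM_w}$ must satisfy $I_x^{P,\pi} = [n]$. Indeed, writing $I = I_x^{P,\pi} \in \mathcal{I}_j^i$, the weight expression gives
\[
nM_w = \sum_{s \in Max(I)} \tilde{w}^{k}(x_s) + (i-j)M_w \leq jM_w + (i-j)M_w = iM_w,
\]
forcing $i = n$. Thus the unique ideal contributing to $A_{nM_w}$ is $[n]$ itself, and $Max([n])$ is precisely the set of $t$ maximal elements of $P$. The same chain of inequalities shows $\tilde{w}^{k}(x_s) = M_w$ for every $s \in Max(P)$, i.e. $x_s \in D_{M_w}^k$.

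Next I would verify the converse: if we pick $x_s \in D_{M_w}^{k}$ for each $s \in Max(P)$ and pick $x_s \in \mathbb{F}_q^{k}$ arbitrarily for each non-maximal $s$, then $w_{(P,w,\pi)}(x) = nM_w$. The key point is that $x_s \in D_{M_w}^k$ implies $x_s \neq 0$, so $Max(P) \subseteq supp_\pi(x)$; since every element of $[n]$ lies below some maximal element, $\langle supp_\pi(x) \rangle = [n]$. The weight then evaluates to $tM_w + (n-t)M_w = nM_w$ regardless of the non-maximal block choices.

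Combining, $|A_{nM_w}| = |D_{M_w}^k|^t \cdot q^{k(n-t)}$, and Proposition \ref{D_r^j} gives $|D_{M_w}^k| = q^k - (q - |D_{M_w}|)^k$, which yields the claimed formula. The only subtle step is the converse direction, where one must be careful that the freedom in the non-maximal positions does not shrink the generated ideal — this is guaranteed precisely because all maximal blocks are forced to be nonzero.
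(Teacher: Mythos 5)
Your argument is correct. The paper obtains this corollary by specializing Theorem \ref{Ar without ARG} at $r=nM_w$: there the only ideal that can contribute is $I=[n]$ (so $i=n$, $j=t$), the only admissible partition in $PRT_{n-t}[nM_w]$ is $(M_w,M_w,\ldots,M_w)$ with $t$ parts, and the general formula collapses to $q^{k(n-t)}|D_{M_w}^k|^t\binom{t}{t}$, after which Proposition \ref{D_r^j} is invoked. You instead bypass that machinery with a direct extremal characterization: equality in $w_{(P,w,\pi)}(x)\le |I_x^{P,\pi}|M_w\le nM_w$ forces $I_x^{P,\pi}=[n]$ and $\tilde{w}^k(x_s)=M_w$ on $Max(P)$, and conversely any choice with $x_s\in D_{M_w}^k$ on the $t$ maximal positions (hence nonzero there, so $Max(P)\subseteq supp_\pi(x)$ and the generated ideal is all of $[n]$) and arbitrary blocks elsewhere has weight $tM_w+(n-t)M_w=nM_w$; both directions are sound, including the point you flag about the ideal not shrinking. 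What the paper's route buys is economy once Theorem \ref{Ar without ARG} is available (the corollary is a one-line specialization consistent with the general distribution); what your route buys is a self-contained, more transparent proof that in fact never uses the equal-block-length hypothesis in an essential way: the same argument gives $|A_{nM_w}|=\prod_{s\in Max(P)}|D_{M_w}^{k_s}|\cdot q^{\sum_{s\notin Max(P)}k_s}$ for arbitrary block lengths, of which the stated formula is the case $k_i=k$.
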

As the $d_{(P,w,\pi)}$-metric is translation invariant, $B_{r}(x) = x + B_{r}(0)$ for an $ x \in \mathbb{F}_{q}^N $ where $ 0 \in \mathbb{F}_{q}^N $.  Moreover,  $| B_{r}(0) | = 1 + \sum\limits_{t=1}^{r} 	|S_t (0)|$ and $ |S_t (0)| =|A_t| $. Thus,
\begin{proposition} \label{wpirball}
	The number of $N$-tuples in a  $(P,w,\pi)$-ball of radius $r$ centered at $ x \in \mathbb{F}_{q}^N $ is  $| B_{r}(x) | = 1 + \sum\limits_{t=1}^{r} 	|A_t|$. 
\end{proposition}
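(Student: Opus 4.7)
The plan is to prove this essentially by unwinding definitions, since almost all of the work has been done by earlier results in the section. The statement is really just the observation that the cardinality of an $r$-ball decomposes as the sum of the cardinalities of the $t$-spheres it contains for $0 \le t \le r$, plus the identification of each sphere count with the weight-distribution count $|A_t|$ computed in Theorems \ref{rlessthanMw} and \ref{r=Mw}.

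First, I would invoke translation invariance of $d_{(P,w,\pi)}$ (noted just before the statement) to reduce to the case $x = 0 \in \mathbb{F}_q^N$; this gives $|B_r(x)| = |B_r(0)|$. Next I would partition $B_r(0)$ into the spheres $S_t(0) = \{y \in \mathbb{F}_q^N : d_{(P,w,\pi)}(y,0) = t\}$ for $0 \le t \le r$, which are pairwise disjoint since a vector has a unique $(P,w,\pi)$-weight. This yields $|B_r(0)| = \sum_{t=0}^{r} |S_t(0)|$.

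Then I would observe that $S_t(0)$ consists of all $y \in \mathbb{F}_q^N$ with $w_{(P,w,\pi)}(y) = t$, which is precisely the set $A_t$ defined earlier in Section \ref{section:4}. Hence $|S_t(0)| = |A_t|$ for every $t \ge 1$, while $|S_0(0)| = |\{0\}| = 1 = |A_0|$. Combining these gives
\begin{equation*}
    |B_r(x)| = |B_r(0)| = |S_0(0)| + \sum_{t=1}^{r} |S_t(0)| = 1 + \sum_{t=1}^{r} |A_t|,
\end{equation*}
which is the claim. There is no real obstacle here: the only nontrivial ingredient is the translation invariance of $d_{(P,w,\pi)}$, which follows at once from the fact that $w_{(P,w,\pi)}$ depends on $y$ only through $I_y^{P,\pi}$ and the block values $y_i$, so that $d_{(P,w,\pi)}(x,y) = w_{(P,w,\pi)}(x-y) = w_{(P,w,\pi)}((x+z)-(y+z))$. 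The remainder is bookkeeping, and the quantitative content sits entirely in the formulas for $|A_t|$ already established.
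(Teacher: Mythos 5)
Your proposal is correct and is essentially the paper's own argument: the paper obtains the proposition directly from translation invariance ($B_r(x)=x+B_r(0)$), the decomposition $|B_r(0)|=1+\sum_{t=1}^r|S_t(0)|$, and the identification $|S_t(0)|=|A_t|$, exactly as you do. The only implicit point in both versions is that the $(P,w,\pi)$-weights are integer-valued, so that summing over integer radii $t\le r$ exhausts the ball.
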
 
Note that the results obtained in this section are in some sense generalizations to those obtained by M. M. Skriganov in \cite{skri}. 
In the remaining part of this section, we show how  $|A_r| $ can be arrived at for various spaces viz., $(P,w)$-space, $(P,\pi)$-space, $\pi$-space and $P$-space.
%%%%%%%%%%%%%%%%%%%%%%%%%%%%%%%%%%%%%%%%%%%%%%%%%%%%%%%%%%%%%%%%%%%%%%%%%%%%%%%%%%%%%%%%%%%%%%%%%
\subsection{$(P,w)$-space}
The  $(P,w,\pi)$-space becomes $(P,w)$-space \cite{wcps} by considering   $k_i =1$ for all $i \in [n]$ (see Remark \ref{becomes}). Here $N=\sum\limits_{i=1}^{n} k_i =n$ . For $u =( u_1 , u_2 , \ldots ,u_n ) \in \mathbb{F}_{q}^{n} $,  $ supp_{\pi}( u)=\{i \in [n] : u_i \neq 0\} = supp(u)$ and $ \tilde{w}^{k_i}{(u_i)} =  w{(u_i)} ~ \forall ~i \in [n]$. Thus,    
\begin{equation*}
	w_{(P,w)}(u)= \sum\limits_{i \in M_{u}^P}  w{(u_i)} + \sum\limits_{i \in {I_u}^{P} \setminus M_{u}^P} M_w.  
\end{equation*}
\par 
Now, $A_r = \{u= ( u_1 , u_2 , \ldots ,u_n ) \in \mathbb{F}_{q}^{n}  : w_{(P,w)}(u)=r  \}$
and by Theorem \ref{rlessthanMw}, for any $1 \leq r \leq M_{w} -1$, the number of $N$-tuples  in $ \mathbb{F}_{q}^n$ having $w_{(P,w)}(x)= r$  is
\begin{align*}
	|A_{r}| 
	&= \sum\limits_{j=1}^{n} \sum\limits_{I \in  \mathcal{I}_j^{j}} \sum\limits_{(b_1,b_2,\ldots,b_j) \in PRT_0[r]} \sum\limits_{(b_{m_{i_1}},  b_{m_{i_2}},  \ldots , b_{m_{i_j}} ) \in ARG[b] } 
	|D_{b_{m_{i_1}}}| |D_{b_{m_{i_2}}}| \cdots |D_{b_{m_{i_j}}}| \\
	&= \sum\limits_{j=1}^{n} \sum\limits_{I \in  \mathcal{I}_j^{j}} \sum\limits_{(b_1,b_2,\ldots,b_j) \in PRT_0[r]}  |D_{b_{m_{i_1}}}|  |D_{b_{m_{i_2}}}| \cdots |D_{b_{m_{i_j}}}| 
	|ARG[(b_1,b_2,\ldots,b_j)]|
\end{align*}
wherein, for an $I \in \mathcal{I}_{j}^{j}$, $\{i_1,i_2,\ldots,i_j\}$ is the set of maximal elements of $I$.
\par By Theorem \ref{r=Mw}, for $ r =t M_{w} + a$; $1 \leq t \leq n-1$, $0 \leq a \leq M_w$, we can get, 
\begin{align*}
	|A_{tM_w + a}|  &= \sum\limits_{i=0}^{t} \sum\limits_{j=1}^{n} \sum\limits_{I \in  \mathcal{I}_{j}^{i+j}} \sum\limits_{(b_1,b_2\ldots,b_j) \in PRT_{i}[r]}  
	\sum\limits_{(b_{m_{i_1}}, b_{m_{i_2}},   \ldots , b_{m_{i_j}} ) \in ARG[b] }  |D_{b_{m_{i_1}}}| \\& \hspace*{7cm} |D_{b_{m_{i_2}}}| \cdots |D_{b_{m_{i_j}}}|  q^{i} \\ 	&
	= \sum\limits_{i=0}^{t} \sum\limits_{j=1}^{n} \sum\limits_{I \in  \mathcal{I}_{j}^{i+j}} \sum\limits_{(b_1,b_2,\ldots,b_j) \in PRT_{i}[r]} |D_{b_{m_{i_1}}}|  |D_{b_{m_{i_2}}}|  \cdots |D_{b_{m_{i_j}}}|   q^{i} \\& \hspace*{7cm}
	|ARG[(b_1,b_2,\ldots,b_j)]|
\end{align*}
\par Alternatively,  from Theorem \ref{Ar without ARG},  we can determine  the cardinality of $A_r$ without finding $ARG[b]$, as $k_i=1 ~\forall ~ i \in [n]$. 
\begin{corollary} \label{Ar for pw dist without ARG}
	For any $1 \leq r \leq n M_{w} $, the number of $n$-tuples  $u \in \mathbb{F}_{q}^n$ having $w_{(P,w)}(u)= r$  is 
	\begin{align*}
		|A_{r}| =	\sum\limits_{i=1}^{n} \sum\limits_{j=1}^{i} \sum\limits_{I \in \mathcal{I}_j^{i}}  \sum\limits_{ (b_1,b_2,\ldots,b_j)  \in PRT _{i-j} [r] } q^{i-j} 
		\prod\limits_{s=1}^{l}  {|D_{t_s}|}^{r_s} {j - (r_1+r_2+\ldots+r_{s-1})\choose r_s} 
	\end{align*}
	where  $r_s$ parts among the $j$ parts in  $b_1, b_2, \dots, b_j  $  are equal to $t_s$  ($1 \leq s \leq l$).
\end{corollary}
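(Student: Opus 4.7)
The plan is to deduce this corollary directly from Theorem \ref{Ar without ARG} by specializing the label map. First I would invoke Remark \ref{becomes}\,(i), which says that when $k_i = 1$ for every $i \in [n]$, the $(P,w,\pi)$-space collapses to the $(P,w)$-space on $\mathbb{F}_q^n$: the block support $supp_\pi(u)$ coincides with the ordinary support $supp(u)$, and for each block of length one we have $\tilde{w}^{1}(u_i) = w(u_i)$. Consequently $w_{(P,w,\pi)}(u) = w_{(P,w)}(u)$ on $\mathbb{F}_q^n$, and the set $A_r$ of vectors of $(P,w,\pi)$-weight $r$ counted in Theorem \ref{Ar without ARG} is exactly the set of $n$-tuples of $(P,w)$-weight $r$ required here.

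Next I would apply Theorem \ref{Ar without ARG} verbatim with the common block length $k = 1$ and $N = n$. In that formula the only quantities that depend on $k$ are the factor $q^{k(i-j)}$ and the cardinalities $|D_{t_s}^{k}|$ of the weight classes in $\mathbb{F}_q^{k}$. With $k = 1$, the factor becomes $q^{i-j}$, while by Proposition \ref{D_r^j} (or directly from the definition of $D_r^k$) we have $|D_{t_s}^{1}| = |D_{t_s}|$. Every other ingredient, namely the ideal indexing over $\mathcal{I}_j^{i}$, the partition set $PRT_{i-j}[r]$, the multiplicities $r_s$ of the distinct parts $t_s$, and the multinomial factors $\binom{j - (r_1+\cdots+r_{s-1})}{r_s}$, is intrinsic to the pair $(P,w)$ and carries over unchanged. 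Substituting these simplifications into the expression for $|A_r|$ in Theorem \ref{Ar without ARG} yields exactly the stated formula.

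There is essentially no technical obstacle: the corollary is a bookkeeping specialization. The one point worth double-checking during the write-up is that the bound on the number of parts $j$ in the inner partitions is consistent, i.e.\ $PRT_{i-j}[r]$ restricts $j$ to $j \le n-(i-j)+j = n - i + 2j$ in the original statement, and after taking $k = 1$ this is still the correct range since the outer sums over $i$ and $j$ with $I \in \mathcal{I}_j^{i}$ automatically enforce $1 \le j \le i \le n$. Once this is verified, the corollary follows immediately and no further argument is needed.
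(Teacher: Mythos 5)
Your proposal is correct and matches the paper's own treatment: the corollary is stated there precisely as the specialization of Theorem \ref{Ar without ARG} to $k_i=1$ for all $i\in[n]$ (via Remark \ref{becomes}(i)), with $q^{k(i-j)}$ becoming $q^{i-j}$ and $|D_{t_s}^{1}|=|D_{t_s}|$. No further argument is needed beyond this bookkeeping substitution.
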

\begin{corollary}
	The number of $n$-tuples  $ x \in \mathbb{F}_{q}^n$ having the $(P,w,\pi) $-weight $nM_w$  is 	$ {|D_{M_w}|}^{t} q^{n-t}  $ where $t$ is the  number of maximal elements of the given poset $P$. 
\end{corollary}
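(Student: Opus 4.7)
The statement is the $k=1$ specialization of the immediately preceding corollary (which handles uniform block size $k$). Thus the fastest route is to substitute $k=1$ into the formula $\bigl(q^{k}-(q-|D_{M_{w}}|)^{k}\bigr)^{t}\, q^{k(n-t)}$, which collapses to $|D_{M_w}|^{t}\, q^{n-t}$. I would present this as a one-line deduction, noting that when $k_i=1$ for every $i\in[n]$ the $(P,w,\pi)$-weight on $\mathbb{F}_q^N$ coincides with the $(P,w)$-weight on $\mathbb{F}_q^n$ (see Remark \ref{becomes}), so the previous corollary applies verbatim.

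For transparency, I would also sketch a direct combinatorial argument. The key characterization is: $w_{(P,w)}(u)=nM_w$ if and only if (i) every maximal element of $P$ lies in $\mathrm{supp}(u)$, and (ii) $w(u_i)=M_w$ for every maximal $i$. Indeed, $w_{(P,w)}(u)\le |I_u^P|\, M_w \le n M_w$, with equality forcing $I_u^P=P$ and $w(u_i)=M_w$ on the maximal layer. Conversely, if the maximal elements of $P$ all lie in $\mathrm{supp}(u)$, then $\langle\mathrm{supp}(u)\rangle=P$ (every $p\in P$ lies below some maximal), so $M_u^P$ equals the set of maximal elements of $P$ and both sums in $w_{(P,w)}$ saturate.

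Counting is then immediate. Each of the $t$ maximal positions contributes independently a coordinate in $D_{M_w}$, giving $|D_{M_w}|^{t}$ choices; each of the $n-t$ non-maximal positions can carry an arbitrary element of $\mathbb{F}_q$ (its value is irrelevant to $w_{(P,w)}$ once the maximal layer is fixed), giving $q^{n-t}$ choices. Multiplying yields $|D_{M_w}|^{t}\, q^{n-t}$.

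There is essentially no obstacle here. The only point requiring a line of justification is the observation that a maximal element $m$ of $P$ belongs to $\langle \mathrm{supp}(u)\rangle$ only if $m\in\mathrm{supp}(u)$ (since $m\preceq s$ and $m$ maximal force $m=s$); this is what makes condition (i) necessary rather than merely sufficient, and it is what ensures the count $|D_{M_w}|^{t}\, q^{n-t}$ is exact with no overcounting or missing configurations.
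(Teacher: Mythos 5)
Your proposal is correct and takes essentially the route the paper intends: the corollary is left unproved there precisely because it is the $k=1$ specialization of the preceding uniform-block-size corollary, which is your first paragraph verbatim. Your supplementary direct count (maximal positions must carry elements of $D_{M_w}$, giving $|D_{M_w}|^{t}$, non-maximal positions arbitrary, giving $q^{n-t}$) is also sound and agrees with the general machinery of Theorem \ref{Ar without ARG}.
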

%%%%%%%%%%%%%%%%%%%%%%%%%%%%%%%%%%%%%%%%%%%%%%%%%%%%%%%%%%%%%%%%%%%%%%%%%%%%%%%%%%%%%%%%%%%%%%%%%%%%%%%%%%%%%%%%%%%
If $P=([n],\preceq)$ is a chain (here $1 \preceq 2 \preceq \ldots \preceq n$), each ideal $I \in \mathcal{I}(P)$  will have only one maximal element.  Let  $v \in \mathbb{F}_q^n$,  $I_v^P=  \langle supp (v) \rangle $ and $j$ be the maximal element of $I_v^P$. Then, 
$	w_{(P,w)}(v)= w{(v_j)} + (|I_v^P|-1) M_w $.   Moreover, 
\begin{itemize}
	\item   For $1 \leq r \leq M_{w} $, we have $v \in A_r$ iff $| \langle supp (v) \rangle | = 1 $, and hence $	|A_{r}| = |D_{r}|$.
	\item	For $r= tM_{w}+ a $, where $0 < a \leq M_{w}$ and $1 \leq t \leq n-1$, we have $v \in A_r$ iff $| \langle supp (v) \rangle | = t + 1 $. Thus,
	$|A_{tM_{w} +  a }| =  q^{t} |D_{a}|$.
\end{itemize} 
\subsection{$(P,\pi)$-space}
Given a poset $P=([n],\preceq)$, a label map $\pi$ on $[n]$, and considering  $w$ to be the Hamming weight on  $ \mathbb{F}_{q}$,  the $(P,w,\pi)$-space becomes poset block or  $(P,\pi)$-space \cite{Ebc} (see Remark \ref{becomes}). Now $M_w= 1$, and hence we have
$|D_0| = 1$, $|D_1| =  q-1$ and $|D_1^{k_j}| = q ^{k_j}- 1$ for $1 \leq j \leq n$.  
Clearly,  $x$ $\in A_r$ iff $ | \langle supp_{\pi} (x) \rangle | =r $.  
\par   As $M_w= 1$, we have $|PRT_{i-j}[r]|=1$  for any $r \leq n$, for      $PRT_{i-j}[r]=\{(1,1,\ldots,1) : 1+1+\cdots+1=r-(i-j)\}$.   
From Theorem  \ref{r=Mw},   the number of $N$-tuples  $ x \in \mathbb{F}_{q}^N$ having $w_{(P,\pi)}(x)= r$  is
\begin{align*}
	\sum\limits_{i=0}^{r-1} \sum\limits_{j=1}^{n} \sum\limits_{I \in  \mathcal{I}_j^{j+i}} \sum\limits_{(1,1,\ldots,1) \in PRT_{i}[r]}   |D_{1}^{k_{i_1}}|  |D_{1}^{k_{i_2}}| \cdots |D_{1}^{k_{i_j}}|  q^{k_{l_1} + k_{l_2} + \cdots + k_{l_i}} 
\end{align*}
Since $x$ $\in A_r$ iff $  I_x^P \in \mathcal{I}_j^{r} $ for some $ j \leq r$, we have
\begin{align*}
	|A_{r}| =   \sum\limits_{j=1}^{r} \sum\limits_{I \in  \mathcal{I}_j^{r}}  (q^{k_{i_1}}-1) (q^{k_{i_2}}-1)  \cdots (q^{k_{i_j}}-1)    q^{k_{l_1} + k_{l_2} + \cdots + k_{l_{r-j}}} 
\end{align*}
where $ Max(I) = \{i_1,i_2,\ldots,i_j\} $ and $ I  \setminus Max(I) = \{l_1, l_2, \dots, l_{r-j} \}$ for a given  $I  \in \mathcal{I}_j^{r}$.
Moreover, if $k_i = k$ for all $i \in [n]$, then 
$
|A_{r}| =   \sum\limits_{j=1}^{r} \sum\limits_{I \in  \mathcal{I}_j^{r}}  (q^{k}-1)^{j}   q^{k(r-j)} . $

\begin{corollary}
	Let	$k_i = k$ for all $i \in [n]$.	Then, $x \in A_n$ iff  $I_x^{P,\pi}  \in  \mathcal{I}_t^{n} $ where $t$ is the  number of maximal elements of the given poset $P$. Moreover,
	$ | A_n | =  (q^{k}-1)^{t} q^{k(n-t)}  $. 
\end{corollary}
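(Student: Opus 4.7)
The plan is to obtain this corollary as the boundary case $r = n$ of the formula $|A_r| = \sum_{j=1}^{r} \sum_{I \in \mathcal{I}_j^r} (q^k - 1)^j q^{k(r-j)}$ established immediately above for the equal-block-size case.

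The key preliminary step is to identify the classes $\mathcal{I}_j^n$ explicitly. Since $P$ is itself the unique ideal of $P$ of cardinality $n$, and by the very definition of $t$ the poset $P$ has exactly $t$ maximal elements, the only nonempty such class is $\mathcal{I}_t^n = \{P\}$, with $\mathcal{I}_j^n = \emptyset$ for every $j \neq t$. This immediately yields the first assertion: if $x \in A_n$ then $|I_x^{P,\pi}| = n$, forcing $I_x^{P,\pi} = P \in \mathcal{I}_t^n$, while the converse is obvious since $|P| = n$. Substituting $r = n$ into the displayed formula then collapses the double sum to the single term with $j = t$ and $I = P$, giving $|A_n| = (q^k - 1)^t q^{k(n-t)}$.

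There is essentially no obstacle here, as the argument is just a one-step specialization of the preceding identity. As a sanity check one can also proceed directly: $x \in A_n$ iff $\langle supp_\pi(x) \rangle = P$, which (because nothing in $P$ lies strictly above a maximal element) forces every maximal block of $x$ to be a nonzero element of $\mathbb{F}_q^k$, while leaving the $n - t$ non-maximal blocks completely free; this reproduces the factors $(q^k-1)^t$ and $q^{k(n-t)}$ and matches the formula.
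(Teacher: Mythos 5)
Your proposal is correct and follows essentially the same route as the paper, which states the corollary as an immediate specialization of the preceding formula $|A_{r}| = \sum_{j=1}^{r}\sum_{I\in\mathcal{I}_j^{r}}(q^{k}-1)^{j}q^{k(r-j)}$: since the only ideal of cardinality $n$ is $[n]$ itself, with exactly $t$ maximal elements, the sum collapses to the single term $(q^{k}-1)^{t}q^{k(n-t)}$. Your direct sanity check (maximal blocks forced nonzero, non-maximal blocks free, using that every element of a finite poset lies below some maximal element) is a valid independent confirmation of the same count.
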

If  $P=([n],\preceq)$ is  a chain, there is only one  maximal element in any ideal $I \in \mathcal{I}(P)$ and $|\mathcal{I}_1^{r}|=1$ for $1 \leq r \leq n $. Let  $v \in \mathbb{F}_q^n$ and  $I_v^{P,\pi}= \langle supp_{\pi}(v) \rangle $ with maximal element $j$. Then, 
\begin{align*}
	w_{(P,w,\pi)}(v) &= \tilde{w}^{k_j}{(v_j)} + (|I_v^{P,\pi}|-1) M_w  \\& = |I_v^{P,\pi}| \\ &= |\langle j \rangle| 
	\\& = w_{(P,\pi)}(v).
\end{align*}
This weight coincides with the RT-weight if $k_i=1$  for each $i \in [n]$. 	  
We have $x$ $\in A_r$ iff $ | \langle supp_{\pi} (x) \rangle | =r $, and thus 
\begin{align*}
	|A_{r}| = (q^{k_r}-1)  q^{k_1+k_2+\dots+k_{r-1}} \text{ for } 1 \leq r \leq n . 
\end{align*}
\par Moreover, if $k_i = k$ for each $i \in [n]$, then $
|A_{r}| = (q^{k}-1)  q^{k({r-1})} \text{ for }1  \leq r \leq n .$
%%%%%%%%%%%%%%%%%%%%%%%%%%%%%%%%%%%%%%%%%%%%%%%%%%%%%%%%%%%%%%%%%%%%%%%%%%%%%%%%%%%%%%%%%%%%%%%%%%%%%%%%%%%%%%%%%%%%%%%%%%%%%%%%%%%%%%%%%%%%%%%
\subsection{ $\pi$-space}
If $P=([n],\preceq)$ is an anti-chain and $w$ is the Hamming weight on $\mathbb{F}_q$, the $(P,w,\pi)$-space becomes the classical $( \mathbb{F}_q^N,d_{\pi})$-space \cite{fxh} (see Remark \ref{becomes}).
Now $A_r = \{x= x_{1} \oplus x_{2} \oplus \ldots \oplus x_{s} \in \mathbb{F}_{q}^{N}  : w_{\pi}(x)=r  \}$ for each $1 \leq r \leq n$. Clearly,  $x$ $\in A_r$ iff $ | supp_{\pi} (x) | =r $.  
\par Now each subset of $[n] $ is an ideal. Let $P_r$ be the collection of all subsets $B$ of $[n]$ such that $|B|=r$. If $B=\{i_1,i_2,\ldots,i_r\} \subseteq [n]$ then  
\begin{align*}
	| A_r | =  \sum\limits_{  B \in P_r} (q^{k_{i_1}}-1)(q^{k_{i_2}}-1)\cdots(q^{k_{i_r}}-1) .
\end{align*}
\par 	If $k_i=k$ for each $i \in [n]$, we have 
\begin{align*}
	| A_r | =  {n \choose r} (q^{k}-1)^r \text{ for } 1 \leq r \leq n .  
\end{align*}
%%%%%%%%%%%%%%%%%%%%%%%%%%%%%%%%%%%%%%%%%%%%%%%%%%%%%%%%%%%%%%%%%%%%%%%%%%%%%%%%%%%%%%%%%%%%%%%%%%%%%%%%%%%%%%%%%%%
\subsection{ $P$-space}
The $(P,w,\pi)$-space becomes the classical $(\mathbb{F}_{q}^{n}, d_{P})$-poset space \cite{Bru}  by considering $w$ 
as the Hamming weight on $\mathbb{F}_q$ and $k_i=1$ $\forall$ $i \in [n]$ (see Remark \ref{becomes}). 
As $M_w=1$,  $\tilde{w}^{k_s}(x_s)=w(x_s)=1$  for all $x_s \in \mathbb{F}_q^{k_s}\setminus \{0\}$, and  $|PRT_{i-j}[r]|=1$  for any $r \leq n$  so that $ARG[b]=PRT_{i-j}[r]$.  
\par  For  each $ 1 \leq r \leq n$, if  $A_r = \{x= (x_{1},  x_{2}, \ldots,  x_{n}) \in \mathbb{F}_{q}^{n}  : w_{P}(x)=r  \}$, then  $x$ $\in A_r$ iff $ | \langle supp (x) \rangle | =r $. Therefore, 
\begin{align*}
	|A_{r}| = \sum\limits_{j=1}^{r} \sum\limits_{I \in  \mathcal{I}_j^{r}}  ( q-1 )^{j}  q^{r-j}  . 
\end{align*}		
\par	In particular,   $ A_n= (q-1)^{t} q^{n-t}$ where $t$ is the number of maximal elements in  $P$. If  $P=([n],\preceq)$ is a chain, then 
\begin{align*}
	|A_{r}| ={q}^{r-1} {(q-1)}   \text{ for } 1 \leq r \leq n .
\end{align*}		
%%%%%%%%%%%%%%%%%%%%%%%%%%%%%%%%%%%%%%%%%%%%%%%%%%%%%%%%%%%%%%%%%%%%%%%%%%%%%%%%%%%%%%%%%%%%%%%%%%%%%%%%%%%%%%%%%%%%%%%%%%%%%%%%%%%%%%%%%%%%%%%%%%%%%%%%%%%%%%%%%%%%%%%%%%%%%%%%%%%%%%%%%%%%%%%%%%%%%%%%%%%%%%%%%%%%%%%%%%5%%%%%%%%%%%%
\section{Weighted Coordinates Hierarchical Poset Block Spaces} 
\label{section:5}
Let $P=([n],\preceq)$ be a poset on $[n]$. 
The height $h(i)$ of an element $i \in P $ is the cardinality of the largest chain having $i$ as the maximal element. The height $h(P)$ of the poset $P$ is the maximal height of its elements,
i.e., $h(P) = \max \{h(i) : i \in [n]\}$. The $i$-th level $\Gamma_i^P  $ of a poset 
$P$ is the set of all elements with height $i$, i.e.,
\begin{align*}
	\Gamma_i^P  = \{ j \in [n] : h(j)=i \}.
\end{align*}
\par Let $|\Gamma_i^P |=n_i$ so that and $\dot\bigcup_{i=1}^{h(P)} \Gamma_i^P  = [n]$ and $n=n_1+n_2+\ldots+n_{h(P)}$. Clearly, each level of a poset is an antichain and the number of levels in a poset $P$ is $h(P)$.
\par  A poset $P = ([n],\preceq)$ is said to be \textit{hierarchical} if elements at different levels are comparable, i.e., if $ t_i \in \Gamma_i^P $ and $ t_j \in \Gamma_j^P$ with $i \neq j$  then $t_i \preceq t_j$ if and only if $i \preceq j$.
\par  Thus, when $P$ is an hierarchical poset on $[n]$ with $h$ levels, the following hold:
\begin{itemize}
	\item If $I$ is an ideal of $P$ then  $\max(I) \subseteq 	\Gamma_i^P  $ for some $i$.
	\item  If $I \in \mathcal{I}^{t}$ and $t = n_1 + n_2 + \ldots +n_{j-1}+l$ where $1 \leq l \leq n_{j}$ then $|\max(I)| =l$.
	\item  If $i = n_1 + n_2 + \ldots +n_{j}$ where $1 \leq j \leq h$, then $| \mathcal{I}^i |=1  $.
\end{itemize}

\begin{proposition}\label{n choose l}
	Let $t = n_1 + n_2 + \ldots +n_{j-1}+l$ where $1 \leq l \leq n_{j}$. Then 
	$	|\mathcal{I}_l^{t}| = {n_{j} \choose l}$.
\end{proposition}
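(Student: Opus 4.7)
The plan is to use the hierarchical structure to show that an ideal of cardinality $t = n_1 + n_2 + \cdots + n_{j-1} + l$ is uniquely determined by an $l$-element subset of the $j$-th level $\Gamma_j^P$, whence $|\mathcal{I}_l^{t}| = \binom{n_j}{l}$.

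First I would invoke the three bulleted properties of hierarchical posets listed just above the proposition. The second bullet already tells us that every $I \in \mathcal{I}^t$ with this particular $t$ has exactly $l$ maximal elements, so in fact $\mathcal{I}_l^{t} = \mathcal{I}^t$ and the question reduces to counting ideals of cardinality $t$.

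Next, for the counting step I would argue as follows. Let $I \in \mathcal{I}^t$. By the first bullet, $\max(I) \subseteq \Gamma_s^P$ for some level $s$. The hierarchical property implies that any element in $\Gamma_i^P$ with $i < s$ lies below every element of $\Gamma_s^P$, so $I$ must contain $\Gamma_1^P \cup \Gamma_2^P \cup \cdots \cup \Gamma_{s-1}^P$ in full, and it contains exactly $|\max(I)|$ elements of $\Gamma_s^P$ (namely its maximal elements); no element of higher levels can lie in $I$ because such an element would dominate some element of $\Gamma_s^P$ in $I$, contradicting maximality. Comparing cardinalities with $t = n_1 + \cdots + n_{j-1} + l$ forces $s = j$ and $|\max(I)| = l$. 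Conversely, given any $l$-subset $S \subseteq \Gamma_j^P$, the set
\[
I_S := \Gamma_1^P \cup \Gamma_2^P \cup \cdots \cup \Gamma_{j-1}^P \cup S
\]
is downward closed (again by the hierarchical property) and thus an ideal of cardinality $t$ with $\max(I_S) = S$.

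Therefore the map $I \mapsto \max(I)$ is a bijection from $\mathcal{I}_l^{t}$ onto the collection of $l$-subsets of $\Gamma_j^P$, giving $|\mathcal{I}_l^{t}| = \binom{n_j}{l}$. The main (mild) obstacle is just making precise the assertion that no element above level $j$ can appear in $I$; this follows immediately once one observes that such an element would sit strictly above some chosen element of $\Gamma_j^P \cap I$, contradicting the latter being maximal in $I$. Everything else is bookkeeping with the level decomposition $[n] = \dot\bigcup_{i=1}^{h(P)} \Gamma_i^P$.
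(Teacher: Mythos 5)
Your proof is correct and follows essentially the same route as the paper's: identify that $\max(I)\subseteq\Gamma_j^P$ with $|\max(I)|=l$, and count ideals by choosing $l$ maximal elements from the $n_j$ elements of $\Gamma_j^P$. You merely spell out the bijection $I\mapsto\max(I)$ (and its inverse $S\mapsto \Gamma_1^P\cup\cdots\cup\Gamma_{j-1}^P\cup S$) that the paper's brief proof leaves implicit.
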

\begin{proof}
	Let $I \in \mathcal{I}_l^{t}$. Then, $\max(I) \subseteq \Gamma_j^P $ and $|\max(I)| =l$.
	Thus, $| \mathcal{I}_l^{t}|$ is nothing but the number of ways $l$ maximal elements can be chosen from $\Gamma_j^P$. As $|\Gamma_j^P|=n_j$, the proof follows.
\end{proof}
\par Now, let $x =  x_1 \oplus x_2 \oplus \cdots \oplus x_n \in  \mathbb{F}_{q}^{N} $ where $x_i \in  \mathbb{F}_{q}^{k_i} $. As $P$ is hierarchical with $h$ levels, $M_x^{P,\pi} \subseteq \Gamma_i^P $ for some $i \in \{1,2,\ldots,h\}$. The  $(P,w,\pi)$-weight of  $x \in \mathbb{F}_q^N$ is
\begin{equation*}
	w_{(P,w,\pi)}(x) := \sum\limits_{j \in M_x^{P,\pi} \subseteq \Gamma_i^P} \tilde{w}^{k_j}{(x_j)} + (n_1+n_2+\ldots+n_{i-1}) M_w
\end{equation*} 
where $n_0=0$. 
This weight can be called as $(P,w,\pi)$-hierarchical weight and the corresponding $(P,w,\pi)$-metric $d_{(P,w,\pi)}$ is called as the $(P,w,\pi)$-hierarchical metric on $\mathbb{F}_{q}^{N} $.   We call the space $ (\mathbb{F}_{q}^N, ~d_{(P,w,\pi)} )$ with  hierarchical poset $P$ as  \textit{weighted coordinates hierarchical poset block space} or \textit{$(P,w,\pi)$-hierarchical block space}. 
\begin{proposition}
	Let $r= tM_w + s_0$ where $t = \sum\limits_{i=1}^{j-1} n_i$ and $s_0 \in \{1,2,\ldots,n_jM_w\}$. If $x \in A_r$ then $M_x^{P,\pi} \subseteq \Gamma_j^P$.
\end{proposition}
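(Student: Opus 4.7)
The plan is to fix any $x\in A_r$, invoke the hierarchical structure to locate $M_x^{P,\pi}$ in some level $\Gamma_i^P$, and then pin down $i=j$ by comparing the resulting two-sided bound on $w_{(P,w,\pi)}(x)$ with the prescribed value $r=tM_w+s_0$.

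First, because $x\in A_r$ has positive weight, $M_x^{P,\pi}$ is nonempty. Since $P$ is hierarchical, the set of maximal elements of any ideal is contained in a single level, so there is a unique $i\in\{1,2,\dots,h\}$ with $M_x^{P,\pi}\subseteq\Gamma_i^P$. Writing $l=|M_x^{P,\pi}|$, the hierarchical property forces every element of levels $1,\dots,i-1$ to lie below each of the $l$ maximal elements, so
\begin{equation*}
I_x^{P,\pi}\setminus M_x^{P,\pi}=\Gamma_1^P\cup\Gamma_2^P\cup\cdots\cup\Gamma_{i-1}^P,
\end{equation*}
an object of size $n_1+n_2+\cdots+n_{i-1}$. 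Substituting into the definition of the $(P,w,\pi)$-weight gives
\begin{equation*}
w_{(P,w,\pi)}(x)=\sum_{k\in M_x^{P,\pi}}\tilde{w}^{k_k}(x_k)+(n_1+\cdots+n_{i-1})M_w.
\end{equation*}

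The next step is to bracket this quantity. Each maximal block contributes at least $m_w\ge 1$ and at most $M_w$ to the first sum, and $1\le l\le n_i$, so
\begin{equation*}
(n_1+\cdots+n_{i-1})M_w+1\;\le\;w_{(P,w,\pi)}(x)\;\le\;(n_1+\cdots+n_i)M_w.
\end{equation*}
These intervals, indexed by $i$, tile the set of possible positive weights into disjoint windows of length $n_iM_w$. Because $r=(n_1+\cdots+n_{j-1})M_w+s_0$ with $1\le s_0\le n_jM_w$, the value $r$ lies strictly above $(n_1+\cdots+n_{j-1})M_w$ and no higher than $(n_1+\cdots+n_j)M_w$, i.e.\ exactly in the $j$-th window.

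Finally, suppose for contradiction that $i\ne j$. If $i<j$, then $(n_1+\cdots+n_i)M_w\le(n_1+\cdots+n_{j-1})M_w<r$, violating the upper bound. If $i>j$, then $(n_1+\cdots+n_{i-1})M_w\ge(n_1+\cdots+n_j)M_w\ge r$, violating the lower bound $(n_1+\cdots+n_{i-1})M_w+1\le r$. Hence $i=j$ and $M_x^{P,\pi}\subseteq\Gamma_j^P$. The only subtle step is the identification of $I_x^{P,\pi}\setminus M_x^{P,\pi}$ with the union of lower levels, which relies squarely on the hierarchical comparability of distinct levels; everything after that is arithmetic on disjoint intervals.
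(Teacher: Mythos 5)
Your proof is correct, and it follows exactly the reasoning the paper leaves implicit: the paper states this proposition without a separate proof, immediately after noting that in a hierarchical poset $M_x^{P,\pi}\subseteq\Gamma_i^P$ for a single level $i$ and that then $w_{(P,w,\pi)}(x)=\sum_{j\in M_x^{P,\pi}}\tilde{w}^{k_j}(x_j)+(n_1+\cdots+n_{i-1})M_w$, which is precisely the formula you derive and then combine with the interval comparison to force $i=j$. Your filling in of the identification $I_x^{P,\pi}\setminus M_x^{P,\pi}=\Gamma_1^P\cup\cdots\cup\Gamma_{i-1}^P$ and the bracketing $(n_1+\cdots+n_{i-1})M_w+1\le w_{(P,w,\pi)}(x)\le(n_1+\cdots+n_i)M_w$ is a sound and complete justification of the paper's claim.
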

\par The results obtained in Section \ref{section:4} can also be applied to hierarchical poset. Let $n_0=0$.
Thus, the weight distribution of the weighted coordinates hierarchical poset block space $ (\mathbb{F}_{q}^N, ~d_{(P,w,\pi)} )$ can be deduced from Theorem \ref{r=Mw} and Theorem \ref{Ar without ARG} which is stated as the following results:
%%%%%%%%%%%%%%%%%%%%%%%%%%%%%%%%%%%%%%%%%%%%%%%%%%%%%%%%%%%%%%%%%%%%%%%%%%%%%%%%%%%%%%%%%%%%%%%%%%%%%%%%%%%%%%%%%%% 
\begin{theorem}\label{r=Mw Hierarchical}
	For any $ r =t M_{w} + s_0$ with  $t = n_1 + n_2 + \ldots +n_{j-1}$ and $0 < s_0 \leq n_jM_w$, the number of $N$-tuples  $ x \in \mathbb{F}_{q}^N$ having $w_{(P,w,\pi)}(x)= r$  is
	\begin{align*}
		|A_{tM_w + s_0}| =	\sum\limits_{j=1}^{n_j}   \sum\limits_{I \in  \mathcal{I}_j^{j+t}}
		\sum\limits_{(b_1,b_2,\ldots,b_j) \in PRT_{t}[r]}  \sum\limits_{(b_{m_{i_1}},b_{m_{i_2}},  \ldots , b_{m_{i_j}}) \in ARG[(b_1,b_2,\ldots,b_j)] }  |D_{b_{m_{i_1}}}^{k_{i_1}}| \\  |D_{b_{m_{i_2}}}^{k_{i_2}}|  \cdots |D_{b_{m_{i_j}}}^{k_{i_j}}|   q^{k_{l_1} + k_{l_2} + \cdots + k_{l_t}} 
	\end{align*} 
	wherein, for an $I \in {\mathcal{I}_j^{j+t} }$,  $Max(I)=\{i_1,i_2,\ldots,i_j\}$ and  $I \setminus Max(I) = \{l_1, l_2, \dots, l_{t} \}$.
\end{theorem}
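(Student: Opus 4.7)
The plan is to derive Theorem \ref{r=Mw Hierarchical} as a direct specialization of Theorem \ref{r=Mw} by exploiting the rigid level-structure of a hierarchical poset. The two key inputs are the preceding proposition (which says that for $x\in A_r$ with $r=tM_w+s_0$ and $t=n_1+\cdots+n_{j-1}$, one has $M_x^{P,\pi}\subseteq\Gamma_j^P$) and Proposition \ref{n choose l} (which counts the ideals on each level).

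First I would fix an $x\in A_r$ and let $I=\langle \mathrm{supp}_\pi(x)\rangle$. By the proposition, $M_x^{P,\pi}\subseteq\Gamma_j^P$, so if $|M_x^{P,\pi}|=l$ then $l\in\{1,\ldots,n_j\}$. Next, I would use the defining property of a hierarchical poset: every element at a level below $j$ is comparable to (and precedes) every element at level $j$. Because $I$ is an ideal containing at least one element of $\Gamma_j^P$, it must contain all of $\Gamma_1^P\cup\cdots\cup\Gamma_{j-1}^P$, giving exactly $t=n_1+\cdots+n_{j-1}$ non-maximal elements. Hence $I\in \mathcal{I}_l^{l+t}$, and conversely every such $I$ arises this way. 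This identifies the only ideals that contribute to $A_r$.

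With this structural observation in hand, I would then plug into Theorem \ref{r=Mw}. In that formula the outer sum runs over $i\in\{0,1,\ldots,t\}$, where $i$ is the number of non-maximal elements of $I$. In the hierarchical setting $i$ is forced to equal $t$: for any $i<t$ the collection $\mathcal{I}_l^{l+i}$ whose maximal elements lie on $\Gamma_j^P$ is empty, because such an ideal would necessarily contain all of $\Gamma_1^P\cup\cdots\cup\Gamma_{j-1}^P$ as non-maximal elements. Thus the outer sum collapses to the single term $i=t$, and the summation index that was called $j$ in Theorem \ref{r=Mw} now ranges over $l\in\{1,\ldots,n_j\}$, i.e., over the possible sizes of $M_x^{P,\pi}\subseteq\Gamma_j^P$. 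The partition set $PRT_{i-j}[r]$ becomes $PRT_t[r]$, and the counts $|D_{b_{m_{i_s}}}^{k_{i_s}}|$ for maximal positions together with the factor $q^{k_{l_1}+\cdots+k_{l_t}}$ for the fully free non-maximal positions transfer verbatim, since the combinatorics at each ideal is identical to the general case.

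The main obstacle is essentially a bookkeeping one: the index $j$ plays a double role in the statement (it fixes the level $\Gamma_j^P$ through the hypothesis on $t$, and simultaneously it is the outer summation index counting maximal elements). I would clarify this by renaming the summation index (say to $l$) in the exposition while preserving the displayed formula. Once this notational clash is untangled, the proof is just the observation ``$i=t$ is forced'' applied to Theorem \ref{r=Mw}, together with Proposition \ref{n choose l} implicitly controlling how many ideals $\mathcal{I}_l^{l+t}$ exist at each level; no new combinatorial identity is required.
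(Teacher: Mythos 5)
Your proposal is correct and matches the paper's intent: the paper gives no separate proof, stating only that the result is deduced from Theorem \ref{r=Mw} via the level structure of a hierarchical poset (the preceding proposition forcing $M_x^{P,\pi}\subseteq\Gamma_j^P$, hence exactly $t$ non-maximal elements), which is precisely the specialization you spell out. The only minor imprecision is your remark that the outer sum in Theorem \ref{r=Mw} runs over $i\in\{0,\ldots,t\}$, which is literally true only when $s_0\leq M_w$; for $s_0>M_w$ one must re-parametrize $r$, but your collapse-to-$i=t$ argument goes through unchanged.
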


\begin{theorem}\label{Ar without ARG Heirarchical}
	Let $k_i=k$ $\forall$ $i \in [n]$ and $t = n_1 + n_2 + \ldots +n_{j-1}$. For $r=  tM_w + s_0$ where $s_0 \in \{1,2,\ldots,n_jM_w\}$, the number of $N$-tuples  $ x \in \mathbb{F}_{q}^N$ having $w_{(P,w,\pi)}(x)= r$ is 
	\begin{align*}
		|A_{r}| &=	\sum\limits_{j=1}^{n_j} \sum\limits_{I \in \mathcal{I}_j^{j+t}}  \sum\limits_{(b_1,b_2,\ldots,b_j) \in PRT_{t}[r]} q^{k(t-j)} 
		\prod\limits_{s=1}^{l} |D_{t_s}^k | ^{r_s}     {j - (r_1+r_2+\ldots+r_{s-1})\choose r_s} 
	\end{align*}
	where  $r_s$ parts among the $j$ parts in  $b_1, b_2, \dots, b_j  $  are equal to $t_s$ 
	$(1 \leq s \leq l)$.
\end{theorem}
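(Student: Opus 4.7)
The plan is to deduce Theorem \ref{Ar without ARG Heirarchical} as a direct specialization of Theorem \ref{Ar without ARG} by exploiting the rigidity imposed on ideals by the hierarchical assumption. The core combinatorics (choosing ordered block weights inside $\max(I)$) will be inherited unchanged from the general result; only the outer index set needs to be restricted.

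First, I would invoke the preceding Proposition to conclude that for $r = tM_w + s_0$ with $t = \sum_{i=1}^{j-1} n_i$ and $0 < s_0 \leq n_j M_w$, every $x \in A_r$ satisfies $M_x^{P,\pi} \subseteq \Gamma_j^P$. Next, using the definition of a hierarchical poset, I would show that $I_x^{P,\pi}$ then has a rigid shape: since every element of $\Gamma_1^P \cup \cdots \cup \Gamma_{j-1}^P$ lies below every maximal element of $I_x^{P,\pi}$, the ideal must contain all of these lower levels in full, so
\[
I_x^{P,\pi} = \Gamma_1^P \cup \cdots \cup \Gamma_{j-1}^P \cup S
\]
for some nonempty $S \subseteq \Gamma_j^P$ with $\max(I_x^{P,\pi}) = S$. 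Writing $m = |S|$, this gives $|I_x^{P,\pi}| = t + m$, so $I_x^{P,\pi} \in \mathcal{I}_m^{m+t}$ with $1 \leq m \leq n_j$, and Proposition \ref{n choose l} supplies $|\mathcal{I}_m^{m+t}| = \binom{n_j}{m}$.

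Third, I would specialize the outer double sum in Theorem \ref{Ar without ARG}. The general formula runs $\sum_{i=1}^n \sum_{m=1}^i \sum_{I \in \mathcal{I}_m^i}$, but the observation above forces $i = m + t$ with $1 \leq m \leq n_j$. Thus the double sum collapses to a single sum over $m$, the number of non-maximal elements inside each admissible $I$ is exactly $t$, the factor $q^{k(i-m)}$ becomes $q^{kt}$, and $PRT_{i-m}[r]$ becomes $PRT_t[r]$. The internal product $\prod_{s=1}^l |D_{t_s}^k|^{r_s}\binom{m-(r_1+\cdots+r_{s-1})}{r_s}$ carries over verbatim, because its derivation in Theorem \ref{Ar without ARG} depends only on the uniform block length $k$ and on how the $m$ parts of a chosen partition $b \in PRT_t[r]$ are distributed over the $m$ maximal block positions, a counting problem independent of the poset structure.

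Substituting these specializations yields the formula of Theorem \ref{Ar without ARG Heirarchical} (identifying the summation index $j$ in the target statement with my $m$, i.e.\ the number of maximal elements, while the symbol $j$ appearing in $t = \sum_{i=1}^{j-1} n_i$ and in $n_j$ refers to the fixed level of $P$). The main obstacle is purely notational: the letter $j$ is simultaneously used for the fixed level index of $P$ and as the running summation variable for $|\max(I)|$, and keeping the two roles disjoint in the write-up is the principal care point. No genuinely new combinatorial input is required beyond Theorem \ref{Ar without ARG} and Proposition \ref{n choose l}.
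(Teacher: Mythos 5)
Your proposal is correct and follows essentially the same route as the paper, which gives no separate proof but explicitly states that this theorem is deduced from Theorem \ref{Ar without ARG} by restricting to the hierarchical situation via the preceding proposition (which forces $M_{x}^{P,\pi}\subseteq\Gamma_j^P$, hence $I_{x}^{P,\pi}\in\mathcal{I}_m^{m+t}$ with the non-maximal part equal to the full lower levels) together with Proposition \ref{n choose l}, exactly the specialization you carry out. One remark: your derivation correctly produces the factor $q^{kt}$, since every admissible ideal has exactly $t$ non-maximal elements, so the exponent $q^{k(t-j)}$ printed in the theorem is a typo rather than a gap on your side, as comparison with Theorem \ref{r=Mw Hierarchical}, whose factor $q^{k_{l_1}+k_{l_2}+\cdots+k_{l_t}}$ equals $q^{kt}$ for equal block lengths, confirms.
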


\begin{corollary} \label{wpirball hierarchical}
	The number of $N$-tuples in a  $(P,w,\pi)$-ball of radius $r$ centered at $ x \in \mathbb{F}_{q}^N $ is  $| B_{r}(x) | = 1 + \sum\limits_{t=1}^{r} 	|A_t|$. 
\end{corollary}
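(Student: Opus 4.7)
The plan is to mirror the argument used for Proposition \ref{wpirball}, since the hierarchical setting is a special case of the general $(P,w,\pi)$-space and the corollary depends only on translation invariance of the metric together with the already-established identification of spheres with the sets $A_t$. First I would invoke the fact that $d_{(P,w,\pi)}$ is translation invariant (this is part of Theorem \ref{t1}), which immediately yields $B_r(x) = x + B_r(0)$ for every $x \in \mathbb{F}_q^N$, and hence $|B_r(x)| = |B_r(0)|$.

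Next I would write the ball $B_r(0)$ as a disjoint union of spheres centered at $0$, namely
\begin{equation*}
B_r(0) = \{0\} \,\dot\cup\, \bigcup_{t=1}^{r} S_t(0),
\end{equation*}
where $S_t(0) = \{y \in \mathbb{F}_q^N : d_{(P,w,\pi)}(y,0) = t\}$. Because $d_{(P,w,\pi)}(y,0) = w_{(P,w,\pi)}(y)$, the sphere $S_t(0)$ consists precisely of the $N$-tuples of $(P,w,\pi)$-weight $t$, which by definition is the set $A_t$. Thus $|S_t(0)| = |A_t|$.

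Combining these two observations gives
\begin{equation*}
|B_r(x)| = |B_r(0)| = 1 + \sum_{t=1}^{r} |S_t(0)| = 1 + \sum_{t=1}^{r} |A_t|,
\end{equation*}
which is the claimed formula. I anticipate no real obstacle: the statement is essentially a restatement of Proposition \ref{wpirball} for the hierarchical subfamily, and the quantities $|A_t|$ in this setting are supplied explicitly by Theorems \ref{r=Mw Hierarchical} and \ref{Ar without ARG Heirarchical}, so the corollary follows with no additional combinatorial work beyond citing translation invariance and the sphere-decomposition of a metric ball.
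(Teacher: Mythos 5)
Your proof is correct and is essentially the paper's own argument: the paper derives this (via Proposition \ref{wpirball}) exactly from translation invariance of $d_{(P,w,\pi)}$, the decomposition of $B_r(0)$ into the origin and the spheres $S_t(0)$, and the identification $|S_t(0)|=|A_t|$. No further comment is needed.
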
 
\subsection{Hierarchical poset block space}
In this subsection, we consider $w$ as the Hamming weight on $\mathbb{F}_q$ and $P$ as the hierarchical poset of height $h$ on $[n]$. Then for any $x \in \mathbb{F}_q^N$, $M_x^{P,\pi} \subseteq \Gamma_i^P$ for some $i \in  \{1,2,\ldots,h\}$ and thus the $(P,w,\pi)$-hierarchical weight of  $x \in \mathbb{F}_q^N$ is
\begin{align*}
	w_{(P,w,\pi)}(x) &:= \sum\limits_{j \in M_x^{P,\pi} \subseteq \Gamma_i^P} w_H{(x_j)} + (n_1+n_2+\ldots+n_{i-1}) \\
	&=|M_x^{P,\pi} |+n_1+n_2+\ldots+n_{i-1}
\end{align*}
which is the $(P,\pi)$-hierarchical poset block weight of $x$.
If $k_i=1$ $\forall$ $i \in [n]$, it becomes the \textit{hierarchical poset weight} as defined in \cite{hierarchical}. Hence, the $(P,w,\pi)$-space with $P$ as an hierarchical poset generalizes the \textit{hierarchical poset space described in \cite{hierarchical}}.        
\par Now, we give the complete weight distribution for hierarchical poset block metric space.  

\begin{theorem}
	For $ r = n_1 + n_2 + \ldots +n_{j-1}+a$ where $1 \leq a \leq n_{j}$, the number of $N$-tuples  $ x \in \mathbb{F}_{q}^N$ having $w_{(P,w,\pi)}(x)= r$  is
	\begin{align*}
		|A_{r}| =	\sum\limits_{I \in  \mathcal{I}_{a}^{r}}  {(q^{k_{i_1}}-1)}   {(q^{k_{i_2}}-1)} \cdots {(q^{k_{i_a}}-1)}  
		q^{k_{l_1} + k_{l_2} + \cdots + k_{l_{r-a}}} 
	\end{align*} 
	wherein, for an $I \in {\mathcal{I}_{a}^{r} }$,  $Max(I)=\{i_1,i_2,\ldots,i_{a}\}$ and  $I \setminus Max(I) = \{l_1, l_2, \dots, l_{r-a} \}$. In particular, if $k_i=k$ $\forall$ $i \in [n]$ then $
	|A_{r}| =	\sum\limits_{I \in  \mathcal{I}_{a}^{r}}  {(q^k-1)}^{a}  q^{k{(r-a)}} $.		
\end{theorem}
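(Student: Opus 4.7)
The plan is to obtain this as a direct specialization of Theorem~\ref{r=Mw Hierarchical} by taking $w$ to be the Hamming weight on $\mathbb{F}_q$. In that case $M_w = 1$, the partition $\mathbb{F}_q = D_0 \sqcup D_1$ satisfies $|D_0| = 1$ and $|D_1| = q-1$, and Proposition~\ref{D_r^j} gives $|D_1^k| = q^k - 1$ for every $k \geq 1$.

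First I would align parameters by writing $r = tM_w + s_0$ with $t = n_1 + n_2 + \cdots + n_{j-1}$ and $s_0 = a$, so that $0 < s_0 \leq n_j M_w$ since $1 \leq a \leq n_j$ and $M_w = 1$. Next I would observe that with $M_w = 1$ the set $PRT_t[r]$ contains only the partition $(1, 1, \ldots, 1)$ of length $a$, because each part must lie in $\{1, \ldots, M_w\} = \{1\}$ and the parts sum to $r - tM_w = a$. The outer index in Theorem~\ref{r=Mw Hierarchical} (which records the number of maximal elements and hence the number of parts of the chosen partition) is therefore forced to equal $a$, while $ARG[(1, 1, \ldots, 1)]$ is a singleton. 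Three of the four summations in Theorem~\ref{r=Mw Hierarchical} thus collapse, and what remains is the sum over $I \in \mathcal{I}_a^{a+t} = \mathcal{I}_a^r$ of $|D_1^{k_{i_1}}| \cdots |D_1^{k_{i_a}}| \, q^{k_{l_1} + \cdots + k_{l_{r-a}}}$. Substituting $|D_1^{k_{i_s}}| = q^{k_{i_s}} - 1$ yields the stated formula, and the specialization $k_i = k$ turns each summand into $(q^k - 1)^a q^{k(r-a)}$, matching the particular form claimed.

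I do not anticipate a serious obstacle; the argument is essentially bookkeeping, and the only subtle check is that $M_w = 1$ really does force $PRT_t[r]$ to be trivial so that exactly the layer with $j = a$ of the outer sum in Theorem~\ref{r=Mw Hierarchical} contributes. As a sanity-check alternative, one can count directly: any $x \in A_r$ is determined by its generated ideal $I \in \mathcal{I}_a^r$ (forced by the hierarchical weight formula $w_{(P,w,\pi)}(x) = |M_x^{P,\pi}| + n_1 + \cdots + n_{i-1}$ with $M_x^{P,\pi} \subseteq \Gamma_i^P$, matched against $r = n_1 + \cdots + n_{j-1} + a$ to give $i = j$ and $|M_x^{P,\pi}| = a$), together with a nonzero block in $\mathbb{F}_q^{k_{i_s}}$ at each maximal position of $I$ and an arbitrary block in $\mathbb{F}_q^{k_{l_s}}$ at each non-maximal position; multiplying and summing over $I$ reproduces the product directly.
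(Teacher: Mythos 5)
Your proposal is correct and follows essentially the same route as the paper, which presents this result as an immediate specialization of the hierarchical weight-distribution formula (Theorem \ref{r=Mw Hierarchical}) once $w$ is the Hamming weight, so that $M_w=1$, $|D_1^{k}|=q^{k}-1$, the partition and arrangement sums collapse, and only ideals with exactly $a$ maximal elements contribute. Your closing direct count is likewise the same bookkeeping the paper relies on, so there is nothing to add.
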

\begin{remark}
	In fact, $|A_r| = {n_j \choose a} {(q^k-1)}^{a}  q^{k{(r-a)}}$ when $k_i=k$ $\forall$ $i \in [n]$ by Proposition \ref{n choose l}. 
\end{remark}
\begin{corollary}
	For any $ r = t+a$  with $ t = n_1 + n_2 + \ldots +n_{j-1}$ and $1 \leq a \leq n_{j}$, the number of $N$-tuples in a $(P,w,\pi)$-ball of radius $r$ centered at $ x \in \mathbb{F}_{q}^N $ is  
	\begin{align*}
		| B_{r}(x) | =  q^{k_{l_1} + k_{l_2} + \cdots + k_{l_{r-a}}} \bigg(1 + \sum\limits_{l=1}^{a}  \sum\limits_{I \in  \mathcal{I}_{l}^{t+l}}   {(q^{k_{i_1}}-1)}   {(q^{k_{i_2}}-1)} \cdots {(q^{k_{i_l}}-1)}  \bigg).
	\end{align*}
	In particular, if $k_i=k$ $\forall$ $i \in [n]$ then 
	$| B_{r}(x) | =  q^{k(r-a)} \big(1 + \sum\limits_{l=1}^{a}  \sum\limits_{I \in  \mathcal{I}_{l}^{t+l}}   (q^k-1)^l  \big)$.
\end{corollary}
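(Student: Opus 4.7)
The plan is to combine translation invariance with a direct structural characterization of the ball, exploiting the rigidity of ideals in a hierarchical poset. Since $d_{(P,w,\pi)}$ is translation invariant, I would first reduce the problem to counting $|B_r(0)|$. One could mechanically invoke Corollary \ref{wpirball hierarchical} and sum $1 + \sum_{s=1}^{r} |A_s|$ using the theorem just established for $|A_s|$, but it is cleaner to describe $B_r(0)$ directly. In the hierarchical block space with Hamming weight, the weight formula reads $w_{(P,w,\pi)}(x) = |M_x^{P,\pi}| + n_1 + \cdots + n_{i'-1}$ whenever $M_x^{P,\pi} \subseteq \Gamma_{i'}^P$.

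The key observation is that for $r = t+a$ with $t = n_1 + \cdots + n_{j-1}$ and $1 \leq a \leq n_j$, a vector $x \in \mathbb{F}_q^N$ satisfies $w_{(P,w,\pi)}(x) \leq r$ if and only if three conditions hold: (i) $x_i = 0$ for every $i \in \Gamma_{j+1}^P \cup \cdots \cup \Gamma_h^P$; (ii) the blocks $x_i$ with $i \in \Gamma_1^P \cup \cdots \cup \Gamma_{j-1}^P$ are arbitrary; (iii) at most $a$ of the blocks $x_i$ with $i \in \Gamma_j^P$ are nonzero. Condition (iii) encodes exactly the cases $M_x^{P,\pi} \subseteq \Gamma_j^P$ with $|M_x^{P,\pi}| \leq a$, together with the degenerate subcase $|M_x^{P,\pi}| = 0$ (which also subsumes any vector supported purely in levels $< j$).

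Next I would count. Condition (ii) contributes exactly $q^{k_{l_1} + \cdots + k_{l_{r-a}}}$ choices, where $\{l_1, \ldots, l_{r-a}\} = \Gamma_1^P \cup \cdots \cup \Gamma_{j-1}^P$ and $r-a = t$. For the $\Gamma_j^P$-part I would stratify by the number $l \in \{0, 1, \ldots, a\}$ of nonzero blocks: the $l = 0$ case contributes $1$; for $l \geq 1$, Proposition \ref{n choose l} puts the $l$-subsets of $\Gamma_j^P$ in bijection with ideals $I \in \mathcal{I}_l^{t+l}$, and each such choice, with maximal elements indexed $\{i_1, \ldots, i_l\}$, yields $\prod_{u=1}^{l}(q^{k_{i_u}}-1)$ vectors. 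Summing these produces the bracketed factor. The specialization $k_i = k$ then follows by collecting equal factors together with Proposition \ref{n choose l}.

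The main obstacle is essentially bookkeeping: one must verify that the prefactor $q^{k_{l_1}+\cdots+k_{l_{r-a}}}$ really factors out uniformly over $0 \leq l \leq a$. This rests on the hierarchical property that for any ideal $I \in \mathcal{I}_l^{t+l}$ with maximal elements lying in $\Gamma_j^P$, the non-maximal part $I \setminus \mathrm{Max}(I)$ is forced to coincide with the fixed set $\Gamma_1^P \cup \cdots \cup \Gamma_{j-1}^P$ of size $t$, independent of the choice of maximal set. Once this rigidity is made explicit, the factorization is immediate and the formula falls out.
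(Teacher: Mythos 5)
Your proof is correct, and it is organized a little differently from the paper's. The paper's own proof stays inside the ``ball equals one plus a sum of spheres'' framework: it writes $|B_r(x)| = \bigl(1+\sum_{i=1}^{t}|A_i|\bigr) + \sum_{l=1}^{a}|A_{t+l}|$, asserts $1+\sum_{i=1}^{t}|A_i| = \prod_{i\in J}q^{k_i}$ for the unique ideal $J$ of cardinality $t$, plugs in the hierarchical weight-distribution theorem proved just before for $|A_{t+l}|$, and then pulls out the common factor $q^{k_{l_1}+\cdots+k_{l_{r-a}}}$ using the same rigidity you isolate, namely $I\setminus \mathrm{Max}(I)=J$ for every $I\in\mathcal{I}_{l}^{t+l}$. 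You instead bypass the weight distribution entirely and characterize $B_r(0)$ directly as a product set (arbitrary blocks on levels below $j$, at most $a$ nonzero blocks on level $j$, zero blocks above level $j$), then stratify the level-$j$ part by the number $l$ of nonzero blocks and use Proposition \ref{n choose l} to identify the $l$-subsets of $\Gamma_j^P$ with $\mathcal{I}_{l}^{t+l}$. What your route buys is self-containment and transparency: it proves, rather than quotes, the identity $1+\sum_{i\le t}|A_i| = \prod_{i\in J}q^{k_i}$, and it makes the factorization of the prefactor structurally obvious instead of an algebraic afterthought; what the paper's route buys is brevity, since it reuses results already established. One small wording slip: your phrase ``the degenerate subcase $|M_x^{P,\pi}|=0$'' literally applies only to $x=0$; the $l=0$ stratum of your condition (iii) in fact consists of all vectors supported in levels strictly below $j$ (whose $M_x^{P,\pi}$ lies in a lower level, not in $\Gamma_j^P$). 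Your parenthetical shows you intend exactly this, and the count is unaffected, but the statement of the stratification should be phrased in terms of the number of nonzero level-$j$ blocks rather than in terms of $M_x^{P,\pi}$.
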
	
\begin{proof}
	Let $ t = n_1 + n_2 + \ldots +n_{j-1}$. As $|\mathcal{I}^t|=1$, let $J \in \mathcal{I}^t$.
	Now $r=t+a$ and  thus for any   $I \in {\mathcal{I}_{a}^{r} }$,  we have $J \subseteq I$ and $J=I \setminus Max(I)$.	Since 
	\begin{align*}
		| B_{r}(x) | = (1 + \sum\limits_{i=1}^{t} |A_i| ) + \sum\limits_{l=1}^{a} |A_{t+l}|  \text{ and }  1 + \sum\limits_{i=1}^{t} |A_i| = \prod\limits_{i \in J}q^{k_i},
	\end{align*}
	we have
	\begin{align*}
		| B_{r}(x) | =  q^{k_{l_1} + k_{l_2} + \cdots + k_{l_{r-a}}} \bigg(1 + \sum\limits_{l=1}^{a}  \sum\limits_{I \in  \mathcal{I}_{l}^{t+l}}   {(q^{k_{i_1}}-1)}  {(q^{k_{i_2}}-1)} \cdots {(q^{k_{i_{l}}}-1)} \bigg). 
	\end{align*}
\end{proof}
\subsection{Hierarchical poset space}
The $(P,w,\pi)$-space becomes the hierarchical poset space of \cite{hierarchical} if $P$ is a  hierarchical, $w$ is the Hamming weight on $\mathbb{F}_q$ and $\pi(i)=1$ $\forall$ $i \in [n]$ (see Remark \ref{becomes}).   	For $ r = n_1 + n_2 + \ldots +n_{j-1}+a$ where $1 \leq a \leq n_{j}$, we have 
\begin{align*}
	|A_{r}| = 
	\sum\limits_{I \in  \mathcal{I}_{a}^{r}}  {(q-1)}^{a}  q^{r-a} = {n_j \choose a} (q-1)^{a}  q^{r-a}.
\end{align*}
Further, the cardinality of an $r$-ball centered at $ x \in \mathbb{F}_{q}^n $ is  
\begin{align*}
	| B_{r}(x) | = 	 q^{r-a} \bigg(1 + \sum\limits_{l=1}^{a} {n_j \choose l} (q-1)^{l} \bigg).
\end{align*}
%%%%%%%%%%%%%%%%%%%%%%%%%%%%%%%%%%%%%%%%%%%%%%%%%%%%%%%%%%%%%%%%%%%%%%%%%%%%%%%%%%%%%%%%%%%%%%%%%%%%%%%%%%%%%%%%%%%%%%%%%%%%%%%%%%%%%%%%%%%%%%%%%%%%%%%%%%%%%%%%%%%%%%%%%
\section{$I$-Perfect block codes and $t$-perfect block codes}\label{perfect codes in P,w,pi-space}
In this section, we first recall the balls in poset block space and then we define it for the $(P,w,\pi)$-space. We investigated the characteristics of $I$-perfect codes for an ideal $I$ and discussed the relationship with $t$-perfect codes and MDS $(P,w,\pi)$-codes.
Moreover, the duality theorem with respect to $(P,w,\pi)$-metric is obtained. 
\par   For an ideal (or subset) $I$ in $P$, the $I$-ball centered at $u$ with respect to $(P,\pi)$-metric is $B_{I,(P,\pi)}(u) \triangleq \{ v \in  \mathbb{F}_q^N :  supp_\pi(u- v) \subseteq I \}$. If $I$ is an ideal and $v \in B_{I,(P,\pi)}(u) $ then $\langle supp_\pi(u- v) \rangle \subseteq I$. For each ideal $I$ in $P$, $B_{I,(P,\pi)}(0)$ is a linear subspace of $\mathbb{F}_q^N$ and the space $\mathbb{F}_q^N$ can be partitioned into $I$-balls. Also, $|B_{I,(P,\pi)}{(u)}|=q^{\sum_{i\in I} k_i}$.
For $u \in \mathbb{F}_q^N$, a 
$t$-ball centered at $u$ with respect to $(P,\pi)$-metric is  $B_{(P,\pi)}(u,t) \triangleq \{ v \in \mathbb{F}_q^N :  d_{(P,\pi)}(u- v)  \leq t \}$ where $ t \leq n$. Every $t$-ball is the union of all $I$-balls where $I \in \mathcal{I}^t$, i.e. $	B_{(P,\pi)}(u,t)= \bigcup\limits_{I \in  \mathcal{I}^t}B_{I,(P,\pi)}(u)$. For more details, one can see \cite{bkdnsr}. 
\par \sloppy  The $(P,w,\pi)$-ball centered at a point $y \in \mathbb{F}_q^N$ with radius $r$ is the set $ B_{(P,w,\pi)}(y,r)=\{x \in \mathbb{F}_{q}^{N} : d_{(P,w,\pi)}(y,x) \leq r\}$.
The $(P,w,\pi)$-sphere centered at $y$  with radius $r$ is  $S_{(P,w,\pi)}(y,r)=\{x \in \mathbb{F}_{q}^{N} : d_{(P,w,\pi)}(y,x) = r\}$. As the $d_{(P,w,\pi)}$-metric is translation invariant, we have  $B_{(P,w,\pi)}(y,r) = y + B_{(P,w,\pi)}(0,r)$ where $0 \in \mathbb{F}_q^N$. Clearly, $ | B_{(P,w,\pi)}(0,r) | =  1 + \sum\limits_{t=1}^{r} | S_{(P,w,\pi)}(0,t) |$ and $ | S_{(P,w,\pi)}(0,t) |$ is equal to the number of $x \in \mathbb{F}_{q}^{N}$ such that $w_{(P,w,\pi)}(x)=t$. 
For $y \in \mathbb{F}_q^N$, let $r= tM_{w}+ s $, $0 < s \leq M_{w}$ and $1 \leq t \leq n-1$. If $v \in B_{(P,\pi)}(u, t+1) $ then $ supp_\pi(u-v) \subseteq I$ for an ideal $I \in \mathcal{I}^{t+1}$. Thus, $ w_{(P,w,\pi)}(u -v) \leq (t+1) M_w$ and
$v \in B_{(P,w,\pi)}(u, (t+1) M_w) $. Hence, we have
\begin{theorem}\label{pwpiChept-I-Bp inside Bp,w}
	Let $w$ be a weight on $\mathbb{F}_{q} $, $r=tM_w+s$, $0 < s \leq M_{w}$ and $1 \leq t \leq n-1$. Then, for any $u \in \mathbb{F}_q^N $, we have 
	$	B_{(P,\pi)}(u, t+1) \subseteq B_{(P,w,\pi)}(u, (t+1) M_w)$  and  $ B_{(P,\pi)}(u, t) \subseteq B_{(P,w,\pi)}(u, r)$.
\end{theorem}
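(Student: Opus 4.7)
The plan is to reduce both inclusions to the single bound
\begin{equation*}
w_{(P,w,\pi)}(x) \leq M_w \cdot w_{(P,\pi)}(x) \qquad \text{for every } x \in \mathbb{F}_q^N,
\end{equation*}
and then translate this into a statement about distances via the translation invariance of both metrics. This inequality is the natural one to expect, since the $(P,w,\pi)$-weight is obtained from the $(P,\pi)$-weight by weighting each block at a maximal position with a number between $0$ and $M_w$ (instead of counting it as $1$) and each non-maximal position as $M_w$ (instead of $1$).

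First I would prove the weight inequality. Given $x \in \mathbb{F}_q^N$, set $I = I_x^{P,\pi} = \langle supp_\pi(x)\rangle$ and let $M = M_x^{P,\pi}$ denote the maximal elements of $I$. Using the definition of $w_{(P,w,\pi)}$ together with the trivial bound $\tilde{w}^{k_i}(x_i) \leq M_w$ (which follows directly from the definitions of $\tilde{w}^{k_i}$ and $M_w$), I would estimate
\begin{equation*}
w_{(P,w,\pi)}(x) = \sum_{i\in M}\tilde{w}^{k_i}(x_i) + \sum_{i\in I\setminus M} M_w \leq |M|\, M_w + (|I|-|M|)\, M_w = |I|\, M_w = w_{(P,\pi)}(x)\, M_w,
\end{equation*}
which is the desired bound.

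Next, I would deduce the first inclusion. If $v \in B_{(P,\pi)}(u,t+1)$, then by definition $w_{(P,\pi)}(u-v) = d_{(P,\pi)}(u,v) \leq t+1$. Applying the weight inequality to $x = u-v$ yields $d_{(P,w,\pi)}(u,v) = w_{(P,w,\pi)}(u-v) \leq (t+1)M_w$, so $v \in B_{(P,w,\pi)}(u,(t+1)M_w)$. The second inclusion follows the same way: if $v \in B_{(P,\pi)}(u,t)$ then $w_{(P,\pi)}(u-v)\leq t$, giving $w_{(P,w,\pi)}(u-v)\leq tM_w \leq tM_w + s = r$, so that $v \in B_{(P,w,\pi)}(u,r)$.

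There is no genuine obstacle here; the only step that needs care is making sure that $\tilde{w}^{k_i}(x_i) \leq M_w$ is justified from the definitions (it is, since $\tilde{w}^{k_i}(x_i) = \max_t w(x_{i_t})$ and each $w(x_{i_t}) \leq M_w$ by the definition of $M_w$). Everything else is bookkeeping, and the translation invariance of $d_{(P,w,\pi)}$ and $d_{(P,\pi)}$ ensures we may work with weights rather than distances throughout.
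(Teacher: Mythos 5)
Your proof is correct and follows essentially the same route as the paper: the paper's argument (given in the paragraph preceding the theorem) also bounds $w_{(P,w,\pi)}(u-v)$ by $M_w$ times the cardinality of the ideal containing $supp_\pi(u-v)$, which is exactly your inequality $w_{(P,w,\pi)}(x)\leq M_w\, w_{(P,\pi)}(x)$ made explicit. Your write-up merely spells out this key bound and the $s>0$ step for the second inclusion in more detail than the paper does.
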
  
\par  Now, with respect to $(P,w,\pi)$-metric, for an ideal (or subset) $I$ in $P$, we define the $I$-ball centered at $u \in \mathbb{F}_{q}^N$ as $B_{I,(P,w,\pi)}(u) \triangleq \{ v \in \mathbb{F}_q^N :  supp_\pi(u- v)  \subseteq I \}$.  As the $I$-ball is independent of the weights of the coordinates, it is same as the $I$-ball with respect to $(P,\pi)$-metric defined in \cite{bkdnsr,hkmdspc} with similar properties. Thus, $B_{I,(P,w,\pi)}(u) = B_{I,(P,\pi)}(u)$. 
In both the metrics, we shall denote an $I$-ball centered at  $u$ as $B_{I}(u)$ (further, $B_I$ denotes the $I$-ball centered at zero).
But, there are differences in characterization of MDS block codes in terms of $I$-perfect block codes as we will see in the subsequent results.  
\par 	A $(P,w,\pi)$-code $\mathcal{C}$ of length $N$ over $\mathbb{F}_q$  is said to be $I$-perfect if the  $I$-balls centered at the codewords of $\mathcal{C}$ are pairwise disjoint and their union covers the entire space $\mathbb{F}_q^N$.
It yields the following Lemma:  
\begin{lemma}\label{pwpiChept-I-rperfect to Iperf}
	Let $I$ be an ideal in $\mathcal{I}(P)$ and let $\mathcal{C}$ be a $k$-dimensional $(P,w,\pi)$-code of length $N$ over $\mathbb{F}_q$. Then the following are equivalent:
	\begin{enumerate}[label=(\roman*)]
		\item   $\mathcal{C}$ is an $I$-perfect code
		\item  $\sum_{i \in I} k_i= N-k$ (the covering condition)	and  $ | B_I \cap \mathcal{C} | =1 $ (the packing condition)
		\item 	$|B_I(x) \cap \mathcal{C}|=1 $ for all $x \in \mathbb{F}_q^N$; it means that each $n$-tuple of $\mathbb{F}_q^N$ belongs to exactly one $I$-ball centered at a codeword of $\mathcal{C}$.
	\end{enumerate}
\end{lemma}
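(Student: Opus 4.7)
The plan is to prove the equivalence by a cyclic chain of implications (i) $\Rightarrow$ (ii) $\Rightarrow$ (iii) $\Rightarrow$ (i), exploiting two structural facts already established: first, that $B_I$ is a linear subspace of $\mathbb{F}_q^N$ (since $I$-balls are independent of the weight and coincide with the $(P,\pi)$-balls, for which this is known), and second, that $|B_I(u)| = q^{\sum_{i\in I}k_i}$ for every $u$ by translation invariance. Together with $|\mathcal{C}|=q^k$, these give the counting identity that drives the entire argument.

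For (i) $\Rightarrow$ (ii), assume the $I$-balls $\{B_I(c):c\in\mathcal{C}\}$ partition $\mathbb{F}_q^N$. Summing cardinalities yields $q^k\cdot q^{\sum_{i\in I}k_i}=q^N$, hence $\sum_{i\in I}k_i=N-k$. For the packing condition, note that $0\in B_I\cap\mathcal{C}$, and if some nonzero $c\in\mathcal{C}$ lay in $B_I$ then $B_I(c)=c+B_I=B_I$ (since $B_I$ is a subspace), contradicting disjointness of the balls centered at $0$ and at $c$. Therefore $|B_I\cap\mathcal{C}|=1$.

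For (ii) $\Rightarrow$ (iii), fix $x\in\mathbb{F}_q^N$ and suppose $c_1,c_2\in\mathcal{C}$ satisfy $x\in B_I(c_1)\cap B_I(c_2)$. Then $x-c_1,x-c_2\in B_I$, so by linearity of $B_I$, $c_1-c_2\in B_I\cap\mathcal{C}=\{0\}$, forcing $c_1=c_2$. This gives uniqueness, so the $I$-balls centered at codewords are disjoint, and therefore cover at most $|\mathcal{C}|\cdot|B_I|=q^k\cdot q^{N-k}=q^N=|\mathbb{F}_q^N|$ points. Since equality holds, the balls cover the whole space, and each $x$ lies in exactly one such ball, proving $|B_I(x)\cap\mathcal{C}|=1$. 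Finally, (iii) $\Rightarrow$ (i) is essentially a restatement: the uniqueness in (iii) gives pairwise disjointness of the balls, and the existence in (iii) gives that their union equals $\mathbb{F}_q^N$, which is the definition of $I$-perfectness.

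The main subtlety to be careful about is in (i) $\Rightarrow$ (ii) and (ii) $\Rightarrow$ (iii), where one must repeatedly invoke that $B_I$ is a genuine linear subspace (so that $B_I(c)=c+B_I$ collapses to $B_I$ whenever $c\in B_I$, and so that $B_I-B_I=B_I$). No part of the argument requires the weight $w$ itself, which is consistent with the remark that the $I$-ball is the same under $(P,\pi)$- and $(P,w,\pi)$-metrics; the differences between the two metrics will only surface later when one attempts to relate $I$-perfectness to $t$-perfectness and to MDS properties.
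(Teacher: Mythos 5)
Your proof is correct, and it is exactly the standard counting-plus-subspace argument the paper has in mind: the paper states this lemma without proof (presenting it as an immediate consequence of the facts it quotes just before, namely that $B_I$ is a linear subspace independent of $w$, that $|B_I(u)|=q^{\sum_{i\in I}k_i}$, and that $|\mathcal{C}|=q^k$, as in the cited poset-block references). Your write-up simply supplies the routine details the paper omits, so there is nothing to correct.
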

\begin{theorem}
	For each ideal $I$ in $\mathcal{I}(P)$, there exists an $I$-perfect code with respect to both $(P,w,\pi)$-metric and $(P,\pi)$-metric.
\end{theorem}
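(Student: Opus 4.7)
The plan is to give an explicit linear construction and then invoke Lemma \ref{pwpiChept-I-rperfect to Iperf} to conclude $I$-perfectness simultaneously for both metrics. Since the excerpt already observes that $B_{I,(P,w,\pi)}(u)=B_{I,(P,\pi)}(u)$ for every ideal $I$ and every $u$, it suffices to exhibit one code satisfying the covering and packing conditions with respect to the common notion of $I$-ball; $I$-perfectness then follows in both the $(P,w,\pi)$- and $(P,\pi)$-settings.

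Given an ideal $I\in\mathcal{I}(P)$, the natural candidate is the ``complementary'' block code
\begin{equation*}
\mathcal{C}_I \;=\; \{\, x=x_1\oplus x_2\oplus\cdots\oplus x_n\in\mathbb{F}_q^{N}\;:\; x_i=\bar 0\in\mathbb{F}_q^{k_i}\text{ for every } i\in I\,\}.
\end{equation*}
This is clearly a linear subspace of $\mathbb{F}_q^N$ of dimension $k=\sum_{i\in[n]\setminus I}k_i$, so that $\sum_{i\in I}k_i=N-k$, which is precisely the covering condition in Lemma \ref{pwpiChept-I-rperfect to Iperf}\,(ii).

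For the packing condition, recall $B_I=B_I(0)=\{v\in\mathbb{F}_q^N:\operatorname{supp}_\pi(v)\subseteq I\}$, which means $B_I$ consists exactly of the vectors whose blocks vanish outside $I$. Comparing with the definition of $\mathcal{C}_I$ (blocks vanish inside $I$), the intersection forces every block to vanish, hence $B_I\cap\mathcal{C}_I=\{0\}$ and $|B_I\cap\mathcal{C}_I|=1$. Applying Lemma \ref{pwpiChept-I-rperfect to Iperf} yields that $\mathcal{C}_I$ is $I$-perfect with respect to $(P,\pi)$-metric, and because the $I$-balls are identical for the two metrics, the same code is also $I$-perfect with respect to $(P,w,\pi)$-metric.

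There is essentially no obstacle here: the construction is a direct-sum decomposition $\mathbb{F}_q^N=B_I\oplus\mathcal{C}_I$, which automatically provides both the covering (every $x\in\mathbb{F}_q^N$ lies in some translate $c+B_I$ with $c\in\mathcal{C}_I$) and the packing (different cosets are disjoint). The only mild thing to be careful about is to state the equivalence of $I$-balls between the two metrics explicitly before transferring the conclusion, so the single construction suffices for both claims in the theorem.
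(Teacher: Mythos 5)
Your proof is correct, and it takes a somewhat different route from the paper's. The paper does not exhibit an explicit code: it observes that the $I$-balls (the cosets of $B_I(0)$, which has $q^{\sum_{i\in I}k_i}$ elements) partition $\mathbb{F}_q^N$ in either metric, and then builds the code by choosing one representative from each of the $q^{N-\sum_{i\in I}k_i}$ balls — an arbitrary transversal, so the resulting code need not be linear. You instead take the canonical linear complement $\mathcal{C}_I$ consisting of vectors whose blocks vanish on $I$, check the covering condition $\sum_{i\in I}k_i=N-k$ and the packing condition $B_I\cap\mathcal{C}_I=\{0\}$, and invoke Lemma \ref{pwpiChept-I-rperfect to Iperf} (or, equivalently, the direct-sum decomposition $\mathbb{F}_q^N=B_I\oplus\mathcal{C}_I$) to get $I$-perfectness; the transfer between the two metrics is the same in both arguments, resting on the identity $B_{I,(P,w,\pi)}(u)=B_{I,(P,\pi)}(u)$. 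What your version buys is an explicit construction and the slightly stronger conclusion that a \emph{linear} $I$-perfect code always exists (and, since Lemma \ref{pwpiChept-I-rperfect to Iperf} is phrased for $k$-dimensional codes, linearity is in fact needed to quote it verbatim, which your $\mathcal{C}_I$ satisfies); what the paper's transversal argument buys is that it makes visible the full family of (possibly nonlinear) $I$-perfect codes, one for every choice of coset representatives. Both are complete proofs of the stated existence claim.
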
  
\begin{proof}
	Let $I \in \mathcal{I}(P)$. As $I$-balls are same in both $(P,w,\pi)$-metric and $(P,\pi)$-metric, 
	$\mathbb{F}_q^N$ can be partitioned into $I$-balls in both metric spaces.
	If $t$ is the number of $I$-balls in this partition then $t = q^{N -\sum_{i \in I} k_i}$. One can construct a code $\mathcal{C}$ of length $N$ over $\mathbb{F}_q$ with cardinality $t$ by choosing one $N$-tuple from each $I$-ball. Now, the  $I$-balls with respect to $(P,w,\pi)$-metric (or $(P,\pi)$-metric) centered at codewords of $\mathcal{C}$ are disjoint and cover $\mathbb{F}_q^N$. Hence,  $\mathcal{C}$ is an $I$-perfect $(P,w,\pi)$-code $\mathcal{C}$ (or $(P,\pi)$-code).  
\end{proof} 

\par  If we consider $k_i=1$ $\forall$ $i \in [n]$, we recover	the result in  \cite{aks} as a corollary.
\begin{corollary}
	For each ideal $I$ in $\mathcal{I}(P)$, there exists an $I$-perfect code with respect to both $(P,w)$-metric and $P$-metric. 
\end{corollary}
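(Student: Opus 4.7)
The plan is to obtain the corollary as a direct specialization of the preceding theorem by setting $k_i = 1$ for every $i \in [n]$. Under this choice, Remark \ref{becomes}\,(i) identifies the $(P,w,\pi)$-space with the $(P,w)$-space, and Remark \ref{becomes}\,(ii) identifies the $(P,\pi)$-space (the case where $w$ is the Hamming weight) with the classical $P$-space of \cite{Bru}. So any existence result for an $I$-perfect code established at the $(P,w,\pi)$ and $(P,\pi)$ levels should descend to the $(P,w)$ and $P$ levels via this specialization.

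Concretely, I would begin by fixing $I \in \mathcal{I}(P)$ and observing that with $k_i = 1$, the $\pi$-support of a vector is its ordinary support, and hence the $I$-ball $B_I(u) = \{v \in \mathbb{F}_q^n : \mathrm{supp}(u-v) \subseteq I\}$ depends only on $I$ (not on $w$); in particular $|B_I(u)| = q^{|I|}$ and the translates $\{B_I(u) : u \in \mathbb{F}_q^n\}$ partition $\mathbb{F}_q^n$ into $q^{n - |I|}$ cosets of the linear subspace $B_I(0)$. This partition is precisely the partition used in the proof of the preceding theorem, restricted to the $k_i = 1$ case.

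Next, choose one representative from each of these $q^{n - |I|}$ $I$-balls to form a code $\mathcal{C} \subseteq \mathbb{F}_q^n$. By construction, the $I$-balls centered at the codewords of $\mathcal{C}$ are pairwise disjoint and cover $\mathbb{F}_q^n$, which is exactly the defining property of an $I$-perfect code. Since $I$-balls are weight-independent, $\mathcal{C}$ is simultaneously $I$-perfect with respect to the $(P,w)$-metric and, taking $w$ to be the Hamming weight, with respect to the $P$-metric.

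There is essentially no hard step here; the only thing to check is the consistency of notation between the general block setting and its specialization, and that Lemma \ref{pwpiChept-I-rperfect to Iperf} collapses correctly (the covering condition $\sum_{i\in I} k_i = N-k$ becomes $|I| = n - k$, and the packing condition $|B_I \cap \mathcal{C}| = 1$ is preserved verbatim). Once this bookkeeping is in place, the conclusion is immediate from the preceding theorem.
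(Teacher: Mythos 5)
Your proposal is correct and follows the paper's own route: the paper obtains this corollary exactly by specializing the preceding theorem to $k_i=1$ for all $i\in[n]$, and your explicit construction (one representative per $I$-ball, using the fact that $I$-balls are weight-independent and partition the space) is precisely the argument used in the paper's proof of that theorem. No gaps.
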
  
Now, the following results are immediate:
\begin{theorem}\label{pwpiChept-I-pwtoP perfect}
	Let $\mathcal{C} $ be a code of length $N$ over $\mathbb{F}_q$. Then $\mathcal{C} $ is $I$-perfect with respect to $(P,w,\pi)$-metric if and only if $\mathcal{C}$ is $I$-perfect with respect to $(P,\pi)$-metric. 
\end{theorem}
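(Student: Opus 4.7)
The plan is to exploit the observation already recorded in the paragraph immediately preceding the theorem: the $I$-ball centered at $u$ is defined purely in terms of the $\pi$-support being contained in $I$, namely
\begin{equation*}
B_{I,(P,w,\pi)}(u) = \{v \in \mathbb{F}_q^N : \operatorname{supp}_\pi(u-v) \subseteq I\} = B_{I,(P,\pi)}(u),
\end{equation*}
so the $I$-ball does not see the weight function $w$ at all. Hence the two metrics induce exactly the same family of $I$-balls, and I would simply transport the notion of $I$-perfectness across this equality.

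Concretely, first I would invoke Lemma \ref{pwpiChept-I-rperfect to Iperf}, which characterizes $I$-perfectness of a code $\mathcal{C}$ purely in terms of the intersection $B_I(x) \cap \mathcal{C}$ for $x \in \mathbb{F}_q^N$; the covering/packing condition $|B_I(x) \cap \mathcal{C}| = 1$ for every $x$ is expressed without any reference to the distance function beyond the $I$-ball itself. Since the set $B_I(x)$ is the same object whether we use $d_{(P,w,\pi)}$ or $d_{(P,\pi)}$, the condition $|B_I(x)\cap\mathcal{C}|=1$ for all $x$ is literally the same statement in the two metric spaces. Therefore $\mathcal{C}$ satisfies the $I$-perfectness criterion with respect to $d_{(P,w,\pi)}$ if and only if it does so with respect to $d_{(P,\pi)}$.

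Since both directions of the biconditional follow at once from this identification, there is essentially no obstacle; the whole proof is a one-line consequence of the equality of $I$-balls together with Lemma \ref{pwpiChept-I-rperfect to Iperf}. The only subtlety worth remarking on is that this equivalence is genuinely special to $I$-balls (where the defining condition is a containment of supports) and would not hold if one replaced $I$-perfectness by $t$-perfectness, since the $t$-ball $B_{(P,w,\pi)}(u,t)$ does depend on the weight $w$ and so differs from $B_{(P,\pi)}(u,t)$ in general, as already indicated by Theorem \ref{pwpiChept-I-Bp inside Bp,w}. I would close the proof with a brief remark to that effect, to underscore why the equivalence here is not a tautological phenomenon for the $t$-perfect notion.
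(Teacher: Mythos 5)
Your proposal is correct and follows essentially the same route as the paper, which treats the theorem as immediate from the equality $B_{I,(P,w,\pi)}(u)=B_{I,(P,\pi)}(u)$, since $I$-perfectness is defined solely through $I$-balls centered at codewords being pairwise disjoint and covering $\mathbb{F}_q^N$. One minor remark: you do not actually need Lemma \ref{pwpiChept-I-rperfect to Iperf} (which is stated for linear codes of dimension $k$), as the definition of $I$-perfectness itself already makes the equivalence tautological once the $I$-balls coincide.
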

\begin{corollary}
	Let $\mathcal{C} $ be a code of length $n$ over $\mathbb{F}_q$. Then $\mathcal{C} $ is $I$-perfect with respect to $(P,w)$-metric if and only if $\mathcal{C}$ is $I$-perfect with respect to $P$-metric. 
\end{corollary}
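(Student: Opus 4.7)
The plan is to derive this corollary as a direct specialization of Theorem \ref{pwpiChept-I-pwtoP perfect} to the unit-block setting. First I would take $k_i = 1$ for every $i \in [n]$, so that $N = n$, and then invoke Remark \ref{becomes}: in this situation the $(P,w,\pi)$-weight collapses to the $(P,w)$-weight on $\mathbb{F}_q^n$ (because $\tilde w^{1}(v_i) = w(v_i)$ and $\mathrm{supp}_\pi(v) = \mathrm{supp}(v)$), while under the additional choice that matches the Hamming weight the $(P,\pi)$-weight collapses to the $P$-weight. Thus the two metrics we need to compare, $d_{(P,w)}$ and $d_{P}$, are exactly the $k_i = 1$ instances of $d_{(P,w,\pi)}$ and $d_{(P,\pi)}$ respectively.

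Next I would observe that the notion of an $I$-ball is unchanged by this reduction: since $B_{I,(P,w,\pi)}(u)$ is defined purely through $\mathrm{supp}_\pi(u-v) \subseteq I$ (independent of the weight $w$), setting $k_i = 1$ reduces it to $B_{I,(P,w)}(u) = \{v \in \mathbb{F}_q^n : \mathrm{supp}(u-v) \subseteq I\}$, which coincides with the $I$-ball in the $P$-metric. Therefore the defining condition of $I$-perfectness (pairwise disjoint $I$-balls centered at codewords covering $\mathbb{F}_q^n$) is literally the same set-theoretic condition in both metric spaces, and is exactly the unit-block specialization of the condition used in Theorem \ref{pwpiChept-I-pwtoP perfect}.

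Finally, I would apply Theorem \ref{pwpiChept-I-pwtoP perfect} to the code $\mathcal{C}$ viewed as a code of length $N = n$ with the label map $\pi(i) = 1$: the theorem gives $I$-perfectness with respect to $(P,w,\pi)$ iff $I$-perfectness with respect to $(P,\pi)$, which under our identifications reads as $I$-perfectness with respect to $(P,w)$ iff $I$-perfectness with respect to $P$. There is no real obstacle here; the only thing to verify carefully is that Lemma \ref{pwpiChept-I-rperfect to Iperf}, which underlies Theorem \ref{pwpiChept-I-pwtoP perfect}, specializes properly, i.e.\ the covering condition $\sum_{i \in I} k_i = N - k$ becomes $|I| = n - k$ and the packing condition $|B_I \cap \mathcal{C}| = 1$ is metric-independent. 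Both are immediate, so the corollary follows by a one-line appeal to the theorem.
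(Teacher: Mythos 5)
Your proposal is correct and follows the paper's own route: the corollary is obtained exactly by specializing Theorem \ref{pwpiChept-I-pwtoP perfect} to $k_i=1$ for all $i\in[n]$ (via Remark \ref{becomes}), using the fact that $I$-balls depend only on support containment and hence coincide in both metrics. The appeal to Lemma \ref{pwpiChept-I-rperfect to Iperf} is an unnecessary extra (and that lemma assumes linearity, which the corollary does not), but the main argument stands without it.
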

%\begin{definition}
A $(P,w,\pi)$-code $\mathcal{C}$ of length $N$ over $\mathbb{F}_q$  is said to be an $r$-error correcting code if the  $r$-balls centered at the codewords of $\mathcal{C}$ are pairwise disjoint.
It is said to be $r$-perfect if the  $r$-balls centered at the codewords of $\mathcal{C}$ are pairwise disjoint and their union covers the entire space $\mathbb{F}_q^N$.
%\end{definition}
\begin{theorem}
	Let $\mathcal{C}$ be a $(P,w,\pi)$-code of length $N$ over $\mathbb{F}_q$. If $\mathcal{C}$ is a $tM_w$-error correcting code then
	for any two distinct codewords $x,y \in \mathcal{C}$, $x-y \notin B_{I \cup J}$ $\forall$  $I, J \in \mathcal{I}^{t}$. 
\end{theorem}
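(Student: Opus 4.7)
The plan is to translate the $tM_w$-error correcting hypothesis into a minimum-distance inequality, and then bound the $(P,w,\pi)$-weight of any element of $B_{I\cup J}$ from above by exactly $2tM_w$, yielding a contradiction. First, I would argue that if the $tM_w$-balls centered at distinct codewords are pairwise disjoint, then $d_{(P,w,\pi)}(x,y) > 2tM_w$ for any distinct $x,y \in \mathcal{C}$: if a common point $z$ existed, the triangle inequality for the translation-invariant metric $d_{(P,w,\pi)}$ would give $d_{(P,w,\pi)}(x,y) \le 2tM_w$, which is impossible. So the hypothesis is equivalent to $w_{(P,w,\pi)}(x-y) > 2tM_w$ for distinct codewords $x,y$.

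Next, suppose toward contradiction that there exist $I, J \in \mathcal{I}^t$ with $x-y \in B_{I\cup J}$. I would first observe that the union of two ideals in $P$ is again an ideal: if $a \preceq b$ and $b \in I \cup J$, then $b$ lies in $I$ or in $J$, and by the ideal property $a$ lies in the same one, hence in $I \cup J$. Thus $I \cup J$ is an ideal of cardinality at most $|I|+|J| = 2t$. From $x-y \in B_{I \cup J}$ we get $supp_\pi(x-y) \subseteq I \cup J$, and since $I \cup J$ is an ideal this yields $I_{x-y}^{P,\pi} = \langle supp_\pi(x-y)\rangle \subseteq I \cup J$, whence $|I_{x-y}^{P,\pi}| \le 2t$.

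Finally, I would use the definition of the $(P,w,\pi)$-weight together with the bound $\tilde{w}^{k_i}(v_i) \le M_w$ for every block, which follows immediately from the definition $\tilde{w}^{k_i}(v_i) = \max\{w(v_{i_s}) : 1 \le s \le k_i\}$ and $M_w = \max_{\alpha \in \mathbb{F}_q} w(\alpha)$. This gives
\begin{align*}
w_{(P,w,\pi)}(x-y) &= \sum_{i \in M_{x-y}^{P,\pi}} \tilde{w}^{k_i}(x_i-y_i) + \sum_{i \in I_{x-y}^{P,\pi} \setminus M_{x-y}^{P,\pi}} M_w \\
&\le |M_{x-y}^{P,\pi}|\, M_w + \bigl(|I_{x-y}^{P,\pi}| - |M_{x-y}^{P,\pi}|\bigr) M_w = |I_{x-y}^{P,\pi}|\, M_w \le 2tM_w,
\end{align*}
contradicting $w_{(P,w,\pi)}(x-y) > 2tM_w$. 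Hence $x-y \notin B_{I \cup J}$ for every pair $I, J \in \mathcal{I}^t$.

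There is no serious obstacle in this argument; the only two points that require a sentence of justification rather than a formula are the equivalence of ``$r$-balls disjoint'' with strict inequality $d > 2r$ (a purely metric statement) and the closure of the collection of ideals under finite unions. Both are straightforward, so the whole proof is essentially the single weight estimate displayed above.
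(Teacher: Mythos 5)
There is a genuine gap, and it is at the very first step. You claim that ``$\mathcal{C}$ is $tM_w$-error correcting'' is \emph{equivalent} to $w_{(P,w,\pi)}(x-y) > 2tM_w$ for all distinct codewords. The triangle inequality only gives one direction: if the balls $B_{(P,w,\pi)}(x,tM_w)$ and $B_{(P,w,\pi)}(y,tM_w)$ share a point, then $d_{(P,w,\pi)}(x,y)\le 2tM_w$. The direction you actually use — disjointness of the balls forces $d_{(P,w,\pi)}(x,y) > 2tM_w$ — is the converse, and it requires \emph{constructing} a common point whenever $d_{(P,w,\pi)}(x,y)\le 2tM_w$. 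In this discrete, non-geodesic metric such a midpoint need not exist, and in fact the claim is false. Take $P$ a chain on $[n]$, $k_i=1$, $q=7$ with the Lee weight ($m_w=1$, $M_w=3$), $t=1$, and $x-y=(1,1,0,\dots,0)$. Then $w_{(P,w,\pi)}(x-y)=1+M_w=4\le 2tM_w=6$, yet any $z$ with $d_{(P,w,\pi)}(z,x)\le 3$ and $d_{(P,w,\pi)}(z,y)\le 3$ would have both $z-x$ and $z-y$ supported in $\{1\}$, forcing $supp(x-y)\subseteq\{1\}$, a contradiction; so $\mathcal{C}=\{x,y\}$ is $tM_w$-error correcting while $d_{(P,w,\pi)}(x,y)\le 2tM_w$. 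Hence the ``contradiction'' you reach at the end does not contradict the hypothesis, and the proof collapses. (Your second half is fine: $I\cup J$ is an ideal of cardinality at most $2t$ and every element of $B_{I\cup J}$ has $(P,w,\pi)$-weight at most $2tM_w$; but that numeric bound is too weak to be played against error correction.)

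The paper's proof uses the structural hypothesis $x-y\in B_{I\cup J}$ rather than the weight bound it implies: assuming $supp_\pi(x-y)\subseteq I\cup J$, it exhibits an explicit common point $z=x-(x-y)_{I\setminus J}$, where $(x-y)_{I\setminus J}$ keeps the blocks of $x-y$ indexed by $I\setminus J$ and kills the rest. Then $x-z$ has $\pi$-support inside the ideal $I$ and $z-y$ has $\pi$-support inside the ideal $J$, so each has $(P,w,\pi)$-weight at most $tM_w$; thus $z$ lies in both $tM_w$-balls, contradicting disjointness. The key point your route discards is exactly this decomposition of $x-y$ into a part supported in $I$ and a part supported in $J$, each of cardinality $t$ as an ideal: it is available from $supp_\pi(x-y)\subseteq I\cup J$ but not from the mere inequality $w_{(P,w,\pi)}(x-y)\le 2tM_w$. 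To repair your argument you would have to replace the first step by this construction, at which point it becomes the paper's proof.
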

\begin{proof}
	Let $r=tM_w$ and $\mathcal{C}$ be an $r$-error correcting $(P,w,\pi)$-code. Let $x,y \in \mathcal{C}$ such that $x \neq y$. Suppose that $x-y \in B_{I \cup J}$ for some  $I, J \in \mathcal{I}^t$. Choose $ z \in \mathbb{F}_q^N$ such that $ z= x - (x-y)_{I \setminus J} $ where
	$(x-y)_{I \setminus J}$ denotes the $N$-tuple vector in which $x_j-y_j=0$ $\forall$ $j \in I \cap J $. % and $j \in I \cap J$.
	Then, $d_{(P,w,\pi)}(z,x) \leq | \langle supp_\pi (x-z) \rangle | M_w= | \langle supp_\pi ( x-x+(x-y)_{I \setminus J} ) \rangle| M_w   \leq  |I | M_w=t M_w$. Thus, $z \in B_{(P,w,\pi)}(x,r)$. Now, $d_{(P,w,\pi)}(z,y) \leq  | \langle supp_\pi ( (x-y)- (x-y)_{I \setminus J} ) \rangle|M_w \leq  tM_w $, so that $z \in B_{(P,w,\pi)}(y,r)$ as well. Therefore, $\mathcal{C}$ is not an $r$-error-correcting code, which contradicts the assumption.
\end{proof}
\begin{remark}\label{pwpiChept-I-terror}
	For a particular case when $w$ is the Hamming weight (that is, with respect to poset (block) metric), one can see that the converse of above Theorem also holds which is described in 
	[ref. \cite{bkdnsr}, Theorem $4.1$]: if $\mathcal{C}$ is a $(P,\pi)$-code of length $N$ over $\mathbb{F}_q$, then $\mathcal{C}$ is a $t$-error correcting code if and only if for any two distinct codewords $x,y \in \mathcal{C}$, $x-y \notin B_{I \cup J}$ $\forall$  $I, J \in \mathcal{I}^t$. 
	Thus,  if $\mathcal{C} $ is a $tM_w$-error correcting code with respect to $(P,w,\pi)$-metric then $\mathcal{C}$ is a $t$-error correcting code with respect to $(P,\pi)$-metric.
\end{remark}
\par 
In the following example, we will see that there is a block code $\mathcal{C} $ which is MDS with respect to $(P,\pi)$-metric but $\mathcal{C} $ is not MDS with respect to $(P,w,\pi)$-metric. Moreover, 
$\mathcal{C} $ is $I$-perfect for all $I \in \mathcal{I}^t $, but  $\mathcal{C}$ is not a $tM_w$-perfect $(P,w,\pi)$-code when $|\mathcal{I}^t | \geq 2$ for some $t  \leq n$.
\begin{example}\label{pwpiChept-I-ex1}
	Let $\preceq$ to be the partial order relation on the set $[5]=\{1,2,3,4,5\}$ such that $ i \preceq 4$ when $i=1,2$ and $ 3 \preceq 5$. Consider $w$ as the Lee weight on $\mathbb{Z}_7$ and let $\mathbb{Z}_7^8 = \mathbb{Z}_7^3 \oplus \mathbb{Z}_7^2 \oplus \mathbb{Z}_7 \oplus \mathbb{Z}_7 \oplus \mathbb{Z}_7$ where  $k_1=3,k_2=2,k_3=1,k_4=1,k_5=1$.
	Let $\mathcal{C} = \{(0,0,0,0,0,0,a,a) \in \mathbb{Z}_7^8 : a \in \mathbb{Z}_7 \}$, a linear code of dimension $k=1$ over $\mathbb{Z}_7$.  
	We have $m_w=1$, $M_w=3$ and $d_{(P,\pi)}(\mathcal{C})=5$. Also, $N-k=7$, $d_{(P,w,\pi)}(\mathcal{C})=11$ and  $r_{\tilde{\omega}}=\big\lfloor \frac{d_{(P,w,\pi)} (\mathcal{C})-m_w}{M_w} \big\rfloor = 3$.
	\par Here, $\mathcal{I}^4 = \{I_1,I_2\}$ where  $I_1=\{1,2,3,4\}$ and $I_2=\{1,2,3,5\}$ are the ideals of cardinality $4$.
	Since $ \max_{J \in  \mathcal{I}^{d_{(P,\pi)}(\mathcal{C}) - 1}} \big\{\sum_{i \in J} k_{i}\big\} =7=N-k$, $\mathcal{C} $ is MDS with respect to $(P,\pi)$-metric.
	But, $\mathcal{I}^3 = \{I_3,I_4,I_5,I_6\}$ where  $I_3=\{1,2,3\}$, $I_4=\{1,2,4\}$,  $I_5=\{1,3,5\}$ and $I_6=\{2,3,5\}$ are the ideals of cardinality $3$.
	Here, 
	$ \max_{J \in  \mathcal{I}^{r_{\tilde{\omega}}}}    \big\{\sum_{i \in J} k_{i}\big\} =\max\{6,6,5,4\}=6 < N-k$, and thus $\mathcal{C} $ is not MDS with respect to $(P,w,\pi)$-metric. 
	\par Now, we consider $\mathcal{I}^4 = \{I_1,I_2\}$. Then, $I_1$-ball centered at $(0,0,0,0,0,0,a,a) \in \mathcal{C}$, $B_I(0,0,0,0,0,0,a,a)= \{(x_1,x_2,x_3,x_4,x_5,x_6,x_7,a) : x_i \in \mathbb{Z}_{7}, 1 \leq i \leq 8 \}$ and $I_2$-ball centered at $(0,0,0,0,0,0,a,a)$, $B_J(0,0,0,0,0,0,a,a)= \{(x_1,x_2,x_3,x_4,x_5,x_6,a,x_7) : x_i \in \mathbb{Z}_{7}, 1 \leq i \leq 7 \}$. One can see that $\mathcal{C}$ is $I$-perfect for all $I \in \mathcal{I}^{4}$.  Now, by  taking $r=4M_w=12$, we see a non-zero codeword $c= (0,0,0,0,0,0,1,1) \in B_{(P,w,\pi)}(0,12)$. As $0,c \in \mathcal{C}\cap B_{(P,w,\pi)}(0,12)$,  $\mathcal{C} $ is not $12$-perfect with respect to $(P,w,\pi)$-metric but $\mathcal{C}$ is $I$-perfect for all $I \in \mathcal{I}^{4}$. \qed
\end{example} 
Now, we shall establish certain relationships between $r$-perfect and $I$-perfect block codes in the sequel.
\begin{theorem}\label{pwpiChept-I-rperfect}
	Let $I$ be the unique ideal of $P$ with cardinality $t$. 
	Let $\mathcal{C} $ be a $(P,w,\pi)$-code of length $N$ over $\mathbb{F}_q$. 
	Then $\mathcal{C} $ is $tM_w$-perfect with respect to $(P,w,\pi)$-metric iff $\mathcal{C}$ is $I$-perfect. 
\end{theorem}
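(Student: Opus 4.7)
The plan is to show the stronger statement that the $(P,w,\pi)$-ball of radius $tM_w$ about any point equals the $I$-ball about that point, i.e.\ $B_{(P,w,\pi)}(u, tM_w) = B_{I}(u)$ for all $u \in \mathbb{F}_q^N$. Once this identity is established, the theorem follows immediately because the $tM_w$-balls centered at codewords of $\mathcal{C}$ coincide with the $I$-balls, and partitioning $\mathbb{F}_q^N$ by the former is the same as partitioning $\mathbb{F}_q^N$ by the latter. The boundary case $t = n$ is trivial because both sides equal $\mathbb{F}_q^N$, so assume $t \leq n-1$ so that Theorem~\ref{pwpiChept-I-J insdie I minusMaxI} applies.

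The inclusion $B_I(u) \subseteq B_{(P,w,\pi)}(u, tM_w)$ is straightforward: if $v \in B_I(u)$ then $\langle supp_\pi(u-v) \rangle \subseteq I$, whence $w_{(P,w,\pi)}(u-v)$ is at most $|I| M_w = tM_w$. This step is essentially a bookkeeping observation and should present no obstacle.

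The main work is the reverse inclusion. Given $v \in B_{(P,w,\pi)}(u, tM_w)$, set $K = \langle supp_\pi(u-v) \rangle$; I will argue that $|K| \leq t$, which by Proposition~\ref{pwpiChept-I-J in I will exist} combined with the uniqueness of $I$ at cardinality $t$ forces $K \subseteq I$ and hence $v \in B_I(u)$. To force $|K| \leq t$, I argue by contradiction: suppose $|K| = t+s$ with $s \geq 1$. The key leverage is Theorem~\ref{pwpiChept-I-J insdie I minusMaxI}(ii), which says $I \subseteq K \setminus \mathrm{Max}(K)$; since $|K \setminus I| = s$, this pins the number of maximal elements of $K$ at $j := |\mathrm{Max}(K)| \leq s$. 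Then
\begin{align*}
w_{(P,w,\pi)}(u-v) &= \sum_{i \in \mathrm{Max}(K)} \tilde{w}^{k_i}(u_i - v_i) + (|K|-j) M_w \\
&\geq j\, m_w + (t+s-j) M_w \\
&= tM_w + sM_w - j(M_w - m_w) \\
&\geq tM_w + s m_w > tM_w,
\end{align*}
using $j \leq s$ and $m_w \geq 1$; this contradicts $v \in B_{(P,w,\pi)}(u, tM_w)$.

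The main obstacle I expect is making sure the counting of maximal elements of $K$ is correctly tied to the structural result in Theorem~\ref{pwpiChept-I-J insdie I minusMaxI} — in particular, that uniqueness of $I$ at cardinality $t$ not only forces $I \subsetneq K$ whenever $|K| > t$ but also forces $I$ to sit inside the non-maximal part of $K$. Once that structural input is properly cited, the weight lower bound above closes the argument, and the equality $B_{(P,w,\pi)}(u, tM_w) = B_I(u)$ immediately converts $tM_w$-perfectness into $I$-perfectness and back.
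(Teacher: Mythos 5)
Your proposal is correct and follows essentially the same route as the paper: the whole theorem is reduced to the identity $B_{(P,w,\pi)}(u,tM_w)=B_I(u)$, obtained from Theorem~\ref{pwpiChept-I-J insdie I minusMaxI} via the uniqueness of the ideal of cardinality $t$, after which $tM_w$-perfectness and $I$-perfectness are the same condition. Your write-up merely makes explicit (via the bound $|\mathrm{Max}(K)|\le s$ and the estimate $jm_w+(t+s-j)M_w> tM_w$) the ball equality that the paper cites more tersely, so no further comment is needed.
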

\begin{proof}
	As $ \mathcal{C} $ is $tM_w$-perfect,  $d_{(P,w,\pi)}(c_1,c_2) > tM_w$ $\forall$ $c_1,c_2 \in \mathcal{C} $. By Theorem \ref{pwpiChept-I-J insdie I minusMaxI},  $B_{(P,w,\pi)}(c,tM_w) = B_I(c)$ $\forall$ $c \in \mathcal{C} $ and  $| \langle supp_{\pi}(c_1-c_2) \rangle | > t$ $\forall$ $c_1,c_2 \in \mathcal{C} $. Thus, $\bigcup\limits_{c \in \mathcal{C}}B_I(c)=\bigcup\limits_{c \in \mathcal{C}}B_{(P,w,\pi)}(c,tM_w)=\mathbb{F}_q^N$. Hence,
	$ \mathcal{C} $ is $I$-perfect. 
	Conversely,  let $ \mathcal{C} $ be an $I$-perfect $(P,w,\pi)$-code. Then $| \langle supp_\pi(c_1-c_2) \rangle | > t$ $\forall$ $c_1,c_2 \in \mathcal{C} $ by Theorem \ref{pwpiChept-I-pwtoP perfect} and $\bigcup\limits_{c \in \mathcal{C}}B_I(c)= \mathbb{F}_q^N$. As  $B_{(P,w,\pi)}(c,tM_w) = B_I(c)$, it follows that  $\mathcal{C}$ is  $r$-perfect.
\end{proof}
\begin{corollary}\label{pwpiChept-I-rperfectcoro1}
	Let $I$ be the unique ideal of $P$ with cardinality $t$.  
	Let $\mathcal{C} $ be a $(P,w,\pi)$-code of length $N$ over $\mathbb{F}_q$. 
	Then $\mathcal{C} $ is $r$-perfect with respect to $(P,w,\pi)$-metric iff $\mathcal{C}$ is  $t$-perfect with respect to $(P,\pi)$-metric. 
\end{corollary}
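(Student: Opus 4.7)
The plan is to chain together three equivalences, using $r = tM_w$ as dictated by Theorem~\ref{pwpiChept-I-rperfect}. First, I would invoke Theorem~\ref{pwpiChept-I-rperfect} directly: since $I$ is the unique ideal of cardinality $t$, the hypothesis of that theorem is met, so $\mathcal{C}$ is $tM_w$-perfect with respect to the $(P,w,\pi)$-metric if and only if $\mathcal{C}$ is $I$-perfect. Next, applying Theorem~\ref{pwpiChept-I-pwtoP perfect}, $\mathcal{C}$ is $I$-perfect with respect to the $(P,w,\pi)$-metric if and only if $\mathcal{C}$ is $I$-perfect with respect to the $(P,\pi)$-metric.

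The last link is to show that, under the uniqueness hypothesis, $\mathcal{C}$ being $I$-perfect with respect to $(P,\pi)$ is equivalent to $\mathcal{C}$ being $t$-perfect with respect to $(P,\pi)$. This reduces to the identity $B_{(P,\pi)}(u,t) = B_I(u)$ for every $u \in \mathbb{F}_q^N$. The inclusion $B_I(u) \subseteq B_{(P,\pi)}(u,t)$ is immediate from $|I|=t$ and the definitions of the two balls. For the reverse inclusion, take $v \in B_{(P,\pi)}(u,t)$ and let $J = \langle supp_\pi(u-v)\rangle$; then $|J| \le t$, and by Proposition~\ref{pwpiChept-I-J in I will exist} there exists an ideal $I' \in \mathcal{I}^t$ with $J \subseteq I'$. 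By uniqueness $I' = I$, whence $supp_\pi(u-v) \subseteq J \subseteq I$, i.e.\ $v \in B_I(u)$. So the two balls coincide, and therefore the packing/covering conditions for $\{B_{(P,\pi)}(c,t)\}_{c\in\mathcal{C}}$ are literally the same as those for $\{B_I(c)\}_{c\in\mathcal{C}}$.

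Concatenating these three equivalences yields the corollary. I do not foresee a genuine obstacle: the real content already lives in Theorems~\ref{pwpiChept-I-rperfect} and \ref{pwpiChept-I-pwtoP perfect}, and the only new piece is the trivial but essential identification $B_{(P,\pi)}(u,t) = B_I(u)$ produced by the uniqueness assumption together with Proposition~\ref{pwpiChept-I-J in I will exist}. If anything requires care, it is making sure to state this reduction explicitly so the reader sees why the $t$-ball collapses onto the unique $I$-ball in both metrics simultaneously.
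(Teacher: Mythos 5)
Your proof is correct and follows essentially the same route the paper intends for this corollary (which it leaves as an immediate consequence of the preceding results): chain Theorem~\ref{pwpiChept-I-rperfect} with Theorem~\ref{pwpiChept-I-pwtoP perfect}, then note that under the uniqueness hypothesis the $t$-ball in the $(P,\pi)$-metric collapses to the single $I$-ball, so $I$-perfect and $t$-perfect coincide. Your reading $r=tM_w$ is the intended one, and your explicit verification of $B_{(P,\pi)}(u,t)=B_I(u)$ via Proposition~\ref{pwpiChept-I-J in I will exist} is simply a spelled-out version of the paper's observation that every $t$-ball is the union of the $I$-balls with $I\in\mathcal{I}^t$.
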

\par For $r=(N-k)M_w$, a complete characterization of an $(N-k)M_w$-perfect $(P,w,\pi)$-code is given as: 
\begin{theorem}\label{pwpiChept-I-n-k/s perfect imply I-perfect}
	Let $\mathcal{C} $ be a linear $[N,k]$ $(P,w,\pi)$-code of length $N$ over $\mathbb{F}_q$. Then 
	$\mathcal{C}$ is $(N-k)M_w$-perfect iff $\mathcal{I}^{N-k}= \{I\}$ and  $\mathcal{C}$ is $I$-perfect. 
\end{theorem}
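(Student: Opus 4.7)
The reverse direction is immediate from Theorem~\ref{pwpiChept-I-rperfect}: setting $t=N-k$, if $\mathcal{I}^{N-k}=\{I\}$, then $I$ is the unique ideal of cardinality $t$, so $I$-perfectness of $\mathcal{C}$ gives $(N-k)M_w$-perfectness directly. The whole content of the theorem therefore lies in the forward direction, and the main step there is to extract the uniqueness $|\mathcal{I}^{N-k}|=1$ from the perfectness assumption.

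For the forward direction, assume $\mathcal{C}$ is $(N-k)M_w$-perfect. Since $\mathcal{C}$ is linear of dimension $k$, the $(N-k)M_w$-balls centered at codewords partition $\mathbb{F}_q^N$ into $q^k$ translates of $B_{(P,w,\pi)}(0,(N-k)M_w)$, so
\[
|B_{(P,w,\pi)}(0,(N-k)M_w)| = q^{N-k}.
\]
Now pick any $I' \in \mathcal{I}^{N-k}$. For $v \in B_{I'}$ one has $\mathrm{supp}_\pi(v)\subseteq I'$, hence $\langle \mathrm{supp}_\pi(v)\rangle \subseteq I'$ (since $I'$ is an ideal), and therefore
\[
w_{(P,w,\pi)}(v) \;=\; \sum_{i\in M_v^{P,\pi}} \tilde w^{k_i}(v_i) + |I_v^{P,\pi}\setminus M_v^{P,\pi}|\,M_w \;\leq\; |I'|M_w \;=\; (N-k)M_w,
\]
so $B_{I'}\subseteq B_{(P,w,\pi)}(0,(N-k)M_w)$. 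On the other hand, $|B_{I'}|=q^{\sum_{i\in I'}k_i}\geq q^{|I'|}=q^{N-k}$ since each $k_i\geq 1$. Combining the two bounds forces $|B_{I'}|=q^{N-k}$ and $B_{I'}=B_{(P,w,\pi)}(0,(N-k)M_w)$; as a by-product, $k_i=1$ for every $i\in I'$.

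Finally, if $I_1,I_2\in \mathcal{I}^{N-k}$ then both $B_{I_1}$ and $B_{I_2}$ equal $B_{(P,w,\pi)}(0,(N-k)M_w)$, hence $B_{I_1}=B_{I_2}$. Since an $I$-ball at the origin determines $I$ uniquely (one recovers $I$ as the union of the block-supports of its elements: $i\in I$ iff some $v\in B_I$ has $v_i\neq 0$), we get $I_1=I_2$. Therefore $\mathcal{I}^{N-k}=\{I\}$ is a singleton, and Theorem~\ref{pwpiChept-I-rperfect} applied with $t=N-k$ yields that $\mathcal{C}$ is $I$-perfect, finishing the proof. The only delicate point is the counting argument of the middle paragraph, which simultaneously pins down the cardinality $q^{N-k}$ of both the weighted ball and any $I$-ball with $|I|=N-k$; everything else is a clean reduction to Theorem~\ref{pwpiChept-I-rperfect}.
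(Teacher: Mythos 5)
Your proof is correct, and it reaches the conclusion by a slightly different decomposition than the paper's. The paper argues the forward direction in two stages: first it shows $\mathcal{C}$ must be $I$-perfect for every $I\in\mathcal{I}^{N-k}$, by extracting from a failure of $I$-perfectness two distinct codewords $b,c$ with $\langle supp_\pi(b-c)\rangle\subseteq I$, hence $d_{(P,w,\pi)}(b,c)\le (N-k)M_w$, contradicting perfectness; then it rules out $|\mathcal{I}^{N-k}|\ge 2$ by observing that the weighted ball would strictly contain an $I$-ball of cardinality $q^{N-k}$. You reverse the order: perfectness of a linear $[N,k]$ code pins $|B_{(P,w,\pi)}(0,(N-k)M_w)|$ at exactly $q^{N-k}$, and since every $I'$-ball with $|I'|=N-k$ lies inside this ball and has cardinality $q^{\sum_{i\in I'}k_i}\ge q^{N-k}$, each such $I'$-ball must coincide with the weighted ball; this yields uniqueness of the ideal first, after which $I$-perfectness falls out of Theorem~\ref{pwpiChept-I-rperfect} (or directly, since $B_I(c)=B_{(P,w,\pi)}(c,(N-k)M_w)$ for every codeword $c$). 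The counting ingredients are the same, but your ordering buys something: the forced equality $\sum_{i\in I}k_i=N-k$ (equivalently $k_i=1$ for $i\in I$) is precisely what justifies the paper's unexplained assertion $|B_I(b)|=q^{N-k}$, which in general is only a lower bound $q^{\sum_{i\in I}k_i}\ge q^{N-k}$; your argument also absorbs the case, glossed over in the paper, where the $I$-balls at codewords are disjoint but fail to cover. Like the paper, you implicitly assume $\mathcal{I}^{N-k}\neq\emptyset$ (i.e.\ $N-k\le n$); for $k\ge 1$ this actually follows from your size computation, since a radius of at least $nM_w$ would make the ball all of $\mathbb{F}_q^N$, so no objection there.
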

\begin{proof}
	Let $ \mathcal{C} $ be $(N-k)M_w$-perfect.  Suppose that $\mathcal{C}$ is not $I$-perfect for some $I \in \mathcal{I}^{N-k}$. Then there exist two distinct codewords  $b, c \in \mathcal{C} $ such that 
	$ \langle supp_\pi(b-c) \rangle \subseteq I$. Thus, $c \in B_{(P,w,\pi)}{(b, (N-k)M_w)}$ which is a contradiction, as $ \mathcal{C} $ is $(N-k)M_w$-perfect. Therefore, $\mathcal{C}$ is $I$-perfect for every $I \in \mathcal{I}^{N-k}$.
	Now, suppose that $\{I\} \subsetneq \mathcal{I}^{N-k}$. This implies $ |B_{(P,w,\pi)}{(b, (N-k)M_w)}| > |B_{I}{(b)}| = q^{N-k}$ for each $b \in \mathcal{C}$ which means $ \mathcal{C} $ is not $(N-k)M_w$-perfect, a contradiction. The converse follows from Theorem \ref{pwpiChept-I-rperfect}.
\end{proof}
\begin{corollary}
	Let $\mathcal{C} $ be a linear $[n,k]$ $(P,w)$-code of length $n$ over $\mathbb{F}_q$. Then 
	$\mathcal{C}$ is $(n-k)M_w$-perfect iff $\mathcal{I}^{n-k}= \{I\}$ and  $\mathcal{C}$ is $I$-perfect. 
\end{corollary}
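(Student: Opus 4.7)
The plan is to deduce this corollary directly from the preceding Theorem \ref{pwpiChept-I-n-k/s perfect imply I-perfect} by specializing the label map. Recall from Remark \ref{becomes}(i) that choosing $k_i = 1$ for every $i \in [n]$ collapses the $(P,w,\pi)$-space onto the $(P,w)$-space: the block support reduces to the ordinary support, $\tilde{w}^{k_i}$ reduces to $w$, and the $(P,w,\pi)$-weight of a vector reduces to its $(P,w)$-weight. Under this specialization we have $N = \sum_{i=1}^{n} k_i = n$, so the hypothesis becomes a linear $[n,k]$ $(P,w)$-code.

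First I would verify that the notion of $I$-perfectness carries over unchanged: the $I$-ball in $(P,w,\pi)$-metric was defined independently of the weights (it coincides with the $(P,\pi)$-ball), and when $k_i=1$ this is exactly the $I$-ball in the $(P,w)$/$P$-space. Likewise, an $r$-ball in $(P,w,\pi)$-metric becomes the usual $r$-ball in $(P,w)$-metric since the distance functions agree. So ``$(N-k)M_w$-perfect with respect to $(P,w,\pi)$-metric'' becomes ``$(n-k)M_w$-perfect with respect to $(P,w)$-metric,'' and $\mathcal{I}^{N-k} = \mathcal{I}^{n-k}$.

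With these identifications in place, the corollary is an immediate application of Theorem \ref{pwpiChept-I-n-k/s perfect imply I-perfect}: $\mathcal{C}$ is $(n-k)M_w$-perfect in the $(P,w)$-space if and only if $\mathcal{I}^{n-k} = \{I\}$ and $\mathcal{C}$ is $I$-perfect.

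There is essentially no obstacle to the argument; the only thing to be careful about is the bookkeeping that confirms every object appearing in Theorem \ref{pwpiChept-I-n-k/s perfect imply I-perfect} specializes correctly when all block lengths equal one. Once that is checked, no further work is required and the statement follows as a corollary with a one-line proof.
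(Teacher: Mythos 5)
Your proposal is correct and matches the paper's intent: the corollary is obtained from Theorem \ref{pwpiChept-I-n-k/s perfect imply I-perfect} precisely by specializing to $k_i=1$ for all $i\in[n]$, under which $N=n$, the $(P,w,\pi)$-metric reduces to the $(P,w)$-metric (Remark \ref{becomes}), and the $I$-balls and $r$-balls coincide with their $(P,w)$-space counterparts. No further argument is needed.
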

\begin{remark}
	If $ |\mathcal{I}^{N-k}| \geq 2 $ then there does not exist any $(N-k)M_w$-perfect linear  $(P,w,\pi)$-code  $ \mathcal{C} $ of cardinality $q^k$ over  $ \mathbb{F}_{q}$.
\end{remark}
\par In the remaining part of this paper, whenever we consider the blocks of equal sizes (i.e. $k_i = s$ $\forall$ $ i \in [n]$) and if we choose a $(P,w, \pi)$-code of length $N = ns$ with cardinality $q^k$ for some $k > 0$, then $s$ and $k$ shall be chosen such that $s $ divides $k$. Then, an ideal $ I \in \mathcal{I}^{ n-\frac{k}{s} }(P)$ will be of cardinality $n-\frac{k}{s}$ where $\frac{k}{s}$ is an integer. 
\section{MDS and $I$-perfect $(P,w,\pi)$-codes with equal block  length}\label{MDS-P,w,pi-space}
%%%%%%%%%%%%%%%%%%%%%%%%%%%%%%%%%%%%%%%%%%%%%%%%%%%%%%%%%%%%%%%%%%%%%%%%%%%%%%%%%%%%%%%%%%%%%%%%%%%%%%%%%%%%%%%%%%%%%%%%%%%%%%%%%%%%%%%%%%%%%%%%%%%%%%%%%%%%% 
\par In \cite{aks}, we established the Singleton bound for any $(P,w)$-code $\mathcal{C}$ of length $n$ over $\mathbb{F}_q$ and investigated the relation between MDS codes and $I$-perfect codes for any ideal $I$. In this section, we extend this to weighted coordinates poset block codes when
all the blocks are of same length (that is, $\pi(i)= k_i=s ~\forall ~ i \in [n]$). 
\par From [ref. \cite{bkdnsr}, Theorem $5.1$], an $[N, k]$-code  $\mathcal{C}$ is MDS in $(P,\pi)$-space with label map $\pi(i)=s ~\forall ~ i$ if and only if $\mathcal{C}$ is $I$-perfect for every $I \in \mathcal{I}^{n-\frac{k}{s}}$. However, this is not the case with $(P,w,\pi)$-space.  But, we are in a position to relate the MDS codes and $I$-perfect codes in $(P,w,\pi)$-space.
\begin{theorem} \label{pwpiChept-I-Iperfectmdscode}	 
	Let $\mathcal{C}$ be an $[N,k]$ $(P,w,\pi)$-code of length $N$ over $\mathbb{F}_q$. If $\mathcal{C} $ is MDS with respect to $(P,w,\pi)$-metric then $\mathcal{C}$ is $I$-perfect for all $I \in \mathcal{I}^{n - \frac{k}{s}}$.   
\end{theorem}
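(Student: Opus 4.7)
The plan is to verify the two conditions of Lemma~\ref{pwpiChept-I-rperfect to Iperf} for an arbitrary $I \in \mathcal{I}^{n - k/s}$ and conclude $I$-perfectness. Fix such an $I$. Since all blocks have equal size $s$, the covering condition is immediate:
\[
\sum_{i \in I} k_i \;=\; s \cdot |I| \;=\; s\left(n - \frac{k}{s}\right) \;=\; N - k.
\]
So the real content lies in establishing the packing condition $|B_I \cap \mathcal{C}| = 1$.

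For the packing condition I would argue by contradiction. Suppose there exists a nonzero $c \in \mathcal{C} \cap B_I$. Since $supp_\pi(c) \subseteq I$ and $I$ is an ideal, $I_c^{P,\pi} = \langle supp_\pi(c) \rangle \subseteq I$, whence $|I_c^{P,\pi}| \le n - k/s$. Bounding the $(P,w,\pi)$-weight crudely via $\tilde{w}^{k_i}(c_i) \le M_w$ on each maximal coordinate,
\[
w_{(P,w,\pi)}(c) \;=\; \sum_{i \in M_c^{P,\pi}} \tilde{w}^{k_i}(c_i) \;+\; |I_c^{P,\pi} \setminus M_c^{P,\pi}|\,M_w \;\le\; |I_c^{P,\pi}|\,M_w \;\le\; \left(n - \tfrac{k}{s}\right) M_w.
\]
On the other hand, $\mathcal{C}$ being MDS attains the Singleton bound in Theorem~\ref{pwpiChept-I-sbwpic}, so $\lfloor (d_{(P,w,\pi)}(\mathcal{C}) - m_w)/M_w \rfloor = n - k/s$, which gives $d_{(P,w,\pi)}(\mathcal{C}) \ge (n - k/s)\,M_w + m_w$. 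Since $c$ is a nonzero codeword of a linear code, $w_{(P,w,\pi)}(c) \ge d_{(P,w,\pi)}(\mathcal{C})$, and combining these inequalities forces $m_w \le 0$, contradicting $m_w \ge 1$. Hence no nonzero codeword of $\mathcal{C}$ lies in $B_I$, and the packing condition holds.

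Applying Lemma~\ref{pwpiChept-I-rperfect to Iperf} with both conditions verified shows that $\mathcal{C}$ is $I$-perfect, and since $I$ was arbitrary in $\mathcal{I}^{n - k/s}$ the conclusion follows. There is no substantial obstacle here; the only point requiring care is noticing that the extra $+\,m_w$ slack in the Singleton bound is exactly what rules out a nonzero codeword sitting in an $I$-ball with $|I| = n - k/s$. Without that slack (as in the $(P,\pi)$-setting, where $m_w = M_w = 1$ coincide), the argument would collapse into an equivalence rather than a one-way implication, which is consistent with the remarks following Example~\ref{pwpiChept-I-ex1} in the paper.
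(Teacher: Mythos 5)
Your proposal is correct and follows essentially the same route as the paper: the paper also argues by contradiction, taking two distinct codewords $u,v$ with $supp_\pi(u-v)\subseteq I$ and noting $d_{(P,w,\pi)}(u,v)\le (n-\tfrac{k}{s})M_w$, which contradicts the bound $d_{(P,w,\pi)}(\mathcal{C})\ge (n-\tfrac{k}{s})M_w+m_w$ forced by MDS. Your version merely makes explicit the covering condition and the reduction (via linearity and Lemma~\ref{pwpiChept-I-rperfect to Iperf}) to excluding a nonzero codeword from $B_I$, which the paper leaves implicit.
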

\begin{proof}
	Suppose $\mathcal{C}$ is not $I$-perfect for some $I \in \mathcal{I}^{n - \frac{k}{s}}$. Then, there exist two distinct codewords $ u , v \in \mathcal{C}$ such that $supp_\pi(u-v) \subseteq I$. Then, $d_{(P,w,\pi)}(u-v) \leq (n - \frac{k}{s})M_w$ which is a contradiction to $\mathcal{C}$ being MDS.
\end{proof}
\begin{corollary} 
	Let $\mathcal{C}$ be an $[n,k]$ $(P,w)$-code of length $n$ over $\mathbb{F}_q$. If $\mathcal{C} $ is MDS with respect to $(P,w)$-metric then $\mathcal{C}$ is $I$-perfect for all $I \in \mathcal{I}^{n - k}$.   
\end{corollary}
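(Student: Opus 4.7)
The plan is to deduce this corollary directly from Theorem \ref{pwpiChept-I-Iperfectmdscode} by specializing the label map. Since the corollary concerns $(P,w)$-codes, I would first invoke Remark \ref{becomes}(i), which says that taking $k_i = 1$ for every $i \in [n]$ causes the $(P,w,\pi)$-space to coincide with the $(P,w)$-space: the block support reduces to the ordinary support, and $w_{(P,w,\pi)}$ coincides with $w_{(P,w)}$. Thus an $[n,k]$ $(P,w)$-code can be regarded as an $[N,k]$ $(P,w,\pi)$-code with $N = \sum_{i=1}^n k_i = n$ and $s = k_i = 1$ for all $i$.

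With this identification, the hypothesis that $\mathcal{C}$ is MDS with respect to $(P,w)$-metric is exactly the hypothesis that $\mathcal{C}$ is MDS with respect to $(P,w,\pi)$-metric under the trivial label map. Moreover, substituting $s = 1$ into the index set $\mathcal{I}^{n - k/s}$ of Theorem \ref{pwpiChept-I-Iperfectmdscode} yields $\mathcal{I}^{n-k}$, which matches the index set in the corollary's statement. Applying Theorem \ref{pwpiChept-I-Iperfectmdscode} then gives that $\mathcal{C}$ is $I$-perfect for every $I \in \mathcal{I}^{n-k}$, which is the conclusion.

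There is essentially no obstacle here: the corollary is a pure specialization of the theorem. If one wanted a self-contained argument rather than a one-line deduction, I would mimic the proof of the theorem, namely: suppose for contradiction that $\mathcal{C}$ fails to be $I$-perfect for some $I \in \mathcal{I}^{n-k}$. Then by Lemma \ref{pwpiChept-I-rperfect to Iperf} (the covering condition $\sum_{i \in I} 1 = n-k$ is automatic), the packing condition fails, so there exist distinct codewords $u, v \in \mathcal{C}$ with $\mathrm{supp}(u-v) \subseteq I$. Then $\langle \mathrm{supp}(u-v)\rangle \subseteq I$, so $w_{(P,w)}(u-v) \leq |I| M_w = (n-k)M_w$, whence $\lfloor (d_{(P,w)}(\mathcal{C}) - m_w)/M_w \rfloor \leq n - k - 1$, contradicting the Singleton bound being attained with equality (i.e.\ $\mathcal{C}$ being MDS, which requires $\lfloor (d_{(P,w)}(\mathcal{C}) - m_w)/M_w \rfloor = n - \lceil \log_q |\mathcal{C}|\rceil = n - k$). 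Either route delivers the result in a few lines.
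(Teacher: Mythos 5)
Your proposal is correct and matches the paper's intent exactly: the corollary is stated as an immediate specialization of Theorem \ref{pwpiChept-I-Iperfectmdscode} obtained by taking $k_i=1$ (hence $s=1$, $N=n$), and your self-contained fallback argument reproduces the same contradiction used in the theorem's own proof (distinct codewords with support in an ideal of size $n-k$ force $d_{(P,w)}(\mathcal{C})\le (n-k)M_w$, violating attainment of the Singleton bound). No gaps.
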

\par  In the following example, we will see that converse of the above Theorem \ref{pwpiChept-I-Iperfectmdscode} is not true. 
\begin{example}
	Let $\preceq$ be the partial order relation on the set $[5]=\{1,2,3,4,5\}$ such that $ i \preceq 4$ and $ i \preceq 5$ for $i=1,2,3$. 
	Here, $\mathcal{I}^4 = \{I,J\}$ where  $I=\{1,2,3,4\}$ and $J=\{1,2,3,5\}$ are the ideals of cardinality $4$. Consider $\mathbb{Z}_7^{10} \equiv \mathbb{Z}_7^2 \oplus \mathbb{Z}_7^2 \oplus \mathbb{Z}_7^2 \oplus \mathbb{Z}_7^2 \oplus \mathbb{Z}_7^2 $ with $k_i=2$ $\forall$ $i \in [7]$.
	Considering $w$ as the Lee weight on $\mathbb{Z}_7$ and $\mathcal{C} = \{(0,0,0,0,0,0,a,b,a,b) \in \mathbb{Z}_7^{10} : a,b \in \mathbb{Z}_7 \}$, a linear code of dimension $k=2$ over $\mathbb{Z}_7$, 
	we have $s=2$, $n-\frac{k}{s}=4$, $m_w=1$, $M_w=3$ and $d_{(P,\pi)}(\mathcal{C})=5=n-\frac{k}{s}+1$. So, $\mathcal{C} $ is MDS with respect to $(P,\pi)$-metric and hence, $\mathcal{C}$ is $I$-perfect for all $I \in \mathcal{I}^{n - \frac{k}{s}}$ by Theorem \ref{pwpiChept-I-Iperfectmdscode}. But,  $d_{(P,w,\pi)}(\mathcal{C})=11$ and  $r_{\tilde{\omega}}=\big\lfloor \frac{d_{(P,w,\pi)} (\mathcal{C})-m_w}{M_w} \big\rfloor = 3< n-\frac{k}{s}$. Therefore
	$\mathcal{C} $ is not MDS with respect to $(P,w,\pi)$-metric.
\end{example}
\par But if we consider $w$ as the Hamming weight (as in the case of poset block space), we recover	the result in  \cite{bkdnsr} as a corollary.
\begin{corollary}\label{pwpiChept-I-posetmdsiff}
	Let $\mathcal{C}$ be an $[N,k]$ $(P,w,\pi)$-code of length $N$ over $\mathbb{F}_q$. If $w$ is the Hamming weight on $\mathbb{F}_q$, then $\mathcal{C} $ is MDS if and only if $\mathcal{C}$ is $I$-perfect for all $I \in \mathcal{I}^{n-\frac{k}{s}}$.  
\end{corollary}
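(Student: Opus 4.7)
The plan is to exploit the fact that when $w$ is the Hamming weight on $\mathbb{F}_q$, the $(P,w,\pi)$-metric collapses onto the $(P,\pi)$-metric, so the desired equivalence reduces to a statement about the $(P,\pi)$-metric that can be handled directly via the packing characterization of $I$-perfect codes. Specifically, under this choice one has $m_w = M_w = 1$, and by Remark \ref{becomes}\ref{rev0}\footnote{meant item (iii)} the weight $w_{(P,w,\pi)}(x)$ agrees with $w_{(P,\pi)}(x) = |I_x^{P,\pi}|$. Consequently $r_{\tilde{\omega}} = d_{(P,w,\pi)}(\mathcal{C}) - 1 = d_{(P,\pi)}(\mathcal{C}) - 1$, and the Singleton bound of Theorem \ref{pwpiChept-I-sbwpic} specializes to $d_{(P,\pi)}(\mathcal{C}) - 1 \le n - \frac{k}{s}$, so $\mathcal{C}$ is MDS precisely when $d_{(P,\pi)}(\mathcal{C}) = n - \frac{k}{s} + 1$.

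The forward direction is immediate: if $\mathcal{C}$ is MDS, then Theorem \ref{pwpiChept-I-Iperfectmdscode} already delivers $I$-perfectness for every $I \in \mathcal{I}^{n - k/s}$, regardless of which weight on $\mathbb{F}_q$ is chosen. So the only substantive content is the converse, which requires the Hamming-weight hypothesis to force equality in the Singleton bound.

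For the converse, I would argue by contradiction. Assume $\mathcal{C}$ is $I$-perfect for every $I \in \mathcal{I}^{n - k/s}$ but fails to be MDS, so there exist distinct $u, v \in \mathcal{C}$ with $d_{(P,\pi)}(u,v) = |\langle supp_\pi(u-v)\rangle| \le n - \frac{k}{s}$. Setting $J = \langle supp_\pi(u-v)\rangle \in \mathcal{I}^{|J|}$ with $|J| \le n - k/s$, Proposition \ref{pwpiChept-I-J in I will exist} produces an ideal $I \in \mathcal{I}^{n - k/s}$ with $J \subseteq I$. Then $supp_\pi(u-v) \subseteq J \subseteq I$ shows $u - v \in B_I$, and since $\mathcal{C}$ is linear the set $B_I \cap \mathcal{C}$ contains both $0$ and the non-zero vector $u - v$. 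This contradicts the packing condition $|B_I \cap \mathcal{C}| = 1$ guaranteed by $I$-perfectness via Lemma \ref{pwpiChept-I-rperfect to Iperf}.

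There is no genuine obstacle in this argument; the only subtlety is noticing that the Hamming assumption is what allows the conversion of the hypothesis on $I$-balls into a lower bound on $d_{(P,\pi)}(\mathcal{C})$ that actually matches the Singleton bound, whereas for a general weight $w$ the gap between $d_{(P,\pi)}(\mathcal{C})$ and $d_{(P,w,\pi)}(\mathcal{C})$ leaves room for $\mathcal{C}$ to be $I$-perfect without being MDS in the $(P,w,\pi)$ sense, as illustrated by the example immediately preceding the corollary.
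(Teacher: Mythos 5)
Your proof is correct, and the overall skeleton matches the paper's: the direction ``MDS $\Rightarrow$ $I$-perfect'' is taken from Theorem \ref{pwpiChept-I-Iperfectmdscode} in both cases, and both arguments hinge on the observation that $m_w=M_w=1$ collapses $d_{(P,w,\pi)}$ onto $d_{(P,\pi)}$, so that $r_{\tilde{\omega}}=d_{(P,\pi)}(\mathcal{C})-1$ and MDS means $d_{(P,\pi)}(\mathcal{C})=n-\frac{k}{s}+1$. Where you diverge is the substantive direction ``$I$-perfect for all $I\in\mathcal{I}^{n-k/s}$ $\Rightarrow$ MDS'': the paper simply cites the external result of Dass--Sharma--Verma ([ref.\ \cite{bkdnsr}, Theorem 5.1], quoted just before Theorem \ref{pwpiChept-I-Iperfectmdscode}) to conclude that $\mathcal{C}$ is MDS in the $(P,\pi)$-metric and then transfers, whereas you re-derive the needed implication from the paper's own toolkit: if $d_{(P,\pi)}(\mathcal{C})\le n-\frac{k}{s}$, take $J=\langle supp_\pi(u-v)\rangle$, enlarge it to some $I\in\mathcal{I}^{n-k/s}$ via Proposition \ref{pwpiChept-I-J in I will exist}, and contradict the packing condition of Lemma \ref{pwpiChept-I-rperfect to Iperf} (linearity, which the $[N,k]$ hypothesis supplies, puts $0$ and $u-v$ in $B_I\cap\mathcal{C}$; one could equally use the translation-invariant form $|B_I(v)\cap\mathcal{C}|=1$). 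The paper's route is shorter by citation; yours is self-contained and in effect reproves the $(P,\pi)$-space equivalence in the only case needed, at no extra cost. Your closing remark about why the Hamming hypothesis is essential (the gap between $d_{(P,\pi)}$ and $d_{(P,w,\pi)}$ for general $w$) is also consistent with the counterexample the paper gives before the corollary.
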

\begin{proof}
	Suppose that $\mathcal{C} $ is $I$-perfect for all $I \in \mathcal{I}^{n-\frac{k}{s}}$. Then,  $\mathcal{C} $ is MDS with respect to poset block metric. As $w$ is the Hamming weight on $\mathbb{F}_q$, $m_w=M_w=1$ and $d_{(P,w,\pi)} (\mathcal{C})=d_{(P,\pi)} (\mathcal{C})$. Thus,
	$\big\lfloor \frac{d_{(P,w,\pi)} (\mathcal{C})-m_w}{M_w} \big\rfloor =  n-\frac{k}{s}$ and hence, $\mathcal{C} $ is MDS. The converse follows from the Theorem \ref{pwpiChept-I-Iperfectmdscode}.
\end{proof}

\begin{proposition}\label{pwpiChept-I-n-kM_wperfectMDs}
	Let $\mathcal{C}$ be an $[N,k]$ $(P,w,\pi)$-code of length $N$ over $\mathbb{F}_q$ and $w$ be such that $m_w=1$. If $\mathcal{C}$ is  $(n-\frac{k}{s})M_w$-perfect then $\mathcal{C}$ is MDS.
\end{proposition}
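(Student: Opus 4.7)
The plan is to derive $\mathcal{C}$ being MDS by sandwiching $r_{\tilde{\omega}}$ between $n-\tfrac{k}{s}$ from below and from above. The lower bound will come from the packing half of the $(n-\tfrac{k}{s})M_w$-perfectness hypothesis together with $m_w=1$, and the upper bound will come directly from the Singleton bound of Theorem~\ref{pwpiChept-I-sbwpic} specialised to equal block sizes.

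First I would set $r:=(n-\tfrac{k}{s})M_w$ and exploit packing alone. Since the balls $B_{(P,w,\pi)}(c,r)$, $c\in\mathcal{C}$, are pairwise disjoint and $\mathcal{C}$ is linear, it suffices to compare the ball centred at $0$ with the ball centred at any nonzero codeword $c$. If $w_{(P,w,\pi)}(c)\le r$ held, then $c$ would lie simultaneously in $B_{(P,w,\pi)}(0,r)$ and $B_{(P,w,\pi)}(c,r)$, contradicting disjointness. Hence every nonzero codeword has $(P,w,\pi)$-weight strictly greater than $r$, giving $d_{(P,w,\pi)}(\mathcal{C})\ge (n-\tfrac{k}{s})M_w+1$.

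Second, I would feed this into the definition of $r_{\tilde{\omega}}$, where the hypothesis $m_w=1$ plays its decisive role:
\[
r_{\tilde{\omega}}=\bigg\lfloor\frac{d_{(P,w,\pi)}(\mathcal{C})-1}{M_w}\bigg\rfloor\;\ge\;\bigg\lfloor\frac{(n-\tfrac{k}{s})M_w}{M_w}\bigg\rfloor=n-\tfrac{k}{s}.
\]
Since all blocks have common size $s$, the remark immediately following Theorem~\ref{pwpiChept-I-sbwpic} supplies the matching upper bound $r_{\tilde{\omega}}\le n-\tfrac{k}{s}$, so equality holds. By the definition of MDS codes, $\mathcal{C}$ attains its Singleton bound and is therefore MDS.

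There is no substantive obstacle: the argument uses only the packing half of $(n-\tfrac{k}{s})M_w$-perfectness and the already established Singleton bound, bypassing both the covering half and the $I$-perfectness machinery of Theorems~\ref{pwpiChept-I-rperfect} and \ref{pwpiChept-I-n-k/s perfect imply I-perfect}. The hypothesis $m_w=1$ is essential, however: for $m_w\ge 2$ the floor step would only yield $r_{\tilde{\omega}}\ge n-\tfrac{k}{s}-\lceil(m_w-1)/M_w\rceil$, which can be strictly less than $n-\tfrac{k}{s}$, and the sandwich would fail to force equality.
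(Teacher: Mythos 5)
Your proposal is correct and follows essentially the same route as the paper: the packing condition of $(n-\frac{k}{s})M_w$-perfectness gives $d_{(P,w,\pi)}(\mathcal{C}) > (n-\frac{k}{s})M_w$, hence with $m_w=1$ one gets $r_{\tilde{\omega}} = \big\lfloor \frac{d_{(P,w,\pi)}(\mathcal{C})-1}{M_w} \big\rfloor \geq n-\frac{k}{s}$, and Theorem~\ref{pwpiChept-I-sbwpic} (with equal block sizes) supplies the matching upper bound, forcing equality and MDS. You merely spell out the packing argument and the sandwich more explicitly than the paper does.
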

\begin{proof}
	Let $\mathcal{C}$ be an $(N-k)M_w$-perfect $(P,w,\pi)$-code. Then $d_{(P,w,\pi)}(\mathcal{C}) > (n-\frac{k}{s})M_w$ and $\big\lfloor \frac{d_{(P,w,\pi)} (\mathcal{C})-1}{M_w} \big\rfloor \geq n-\frac{k}{s}$.  Hence, from Theorem \ref{pwpiChept-I-sbwpic}, $\mathcal{C}$ is MDS.
\end{proof}
\par As $d_{(P,w,\pi)} (\mathcal{C})-1 \geq d_{(P,w,\pi)} (\mathcal{C})-m_w$,  $\big\lfloor \frac{d_{(P,w,\pi)} (\mathcal{C})-1}{M_w} \big\rfloor \geq \big\lfloor \frac{d_{(P,w,\pi)} (\mathcal{C})-m_w}{M_w} \big\rfloor $.  
If $\mathcal{C}$ is an MDS $(P,w,\pi)$-code for any weight function $w$ on $\mathbb{F}_q$ such that $m_w > 1$ and $k_i=s ~\forall~i $, then  $\mathcal{C}$ is also an MDS $(P,w,\pi)$-code for any weight function $w$ on $\mathbb{F}_q$ with $m_w=1$. 
\par If $w$ is a weight on $\mathbb{F}_q$ such that $w(\alpha)=pw_H(\alpha)$,  for some $p>0$ and for all $ \alpha \in \mathbb{F}_q$, then $m_w=M_w=p$ and  $w_{(P,w,\pi)}(x)=p w_{(P,\pi)}(x)$ $\forall$ $x \in \mathbb{F}_q^N$. Thus, we have:
\begin{proposition}\label{pwpiChept-I-sbwpic dpwpi=dppi}
	Let $\mathcal{C} $  be a $(P,w,\pi)$-code of length $N $ over $\mathbb{F}_q$ with minimum distance $d_{(P,w,\pi)}(\mathcal{C})$. If $w$ is a weight on $\mathbb{F}_q$ such that $w(\alpha)=pw_H(\alpha)$ for some $p>0$, for all $ \alpha \in \mathbb{F}_q$, then $d_{(P,w,\pi)}(\mathcal{C})=p d_{(P,\pi)}(\mathcal{C})$.
\end{proposition}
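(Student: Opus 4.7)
The plan is to reduce the claim about minimum distances to the pointwise identity
$w_{(P,w,\pi)}(x)=p\,w_{(P,\pi)}(x)$ for every $x\in\mathbb{F}_q^N$, already noted in the paragraph preceding the proposition, and then use the translation invariance of both metrics to pass from weights to distances. Since $d_{(P,w,\pi)}(\mathcal{C})=\min\{w_{(P,w,\pi)}(c_1-c_2):c_1\neq c_2\in\mathcal{C}\}$ and likewise for $d_{(P,\pi)}(\mathcal{C})$, once the pointwise identity is established, pulling the factor $p>0$ outside the minimum gives the result immediately.

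To verify the pointwise identity, I would first observe that the hypothesis $w(\alpha)=p\,w_H(\alpha)$ forces $w$ to take only the values $0$ (on $\alpha=0$) and $p$ (on $\alpha\neq 0$), so $m_w=M_w=p$. Next, for any block $x_i=(x_{i_1},\ldots,x_{i_{k_i}})\in\mathbb{F}_q^{k_i}$, the induced block weight satisfies
\[
\tilde w^{k_i}(x_i)=\max\{w(x_{i_t}):1\le t\le k_i\}=\begin{cases}0,&x_i=\bar 0,\\ p,&x_i\neq\bar 0.\end{cases}
\]
In particular, if $i\in M_x^{P,\pi}$, then $x_i\neq\bar 0$ and $\tilde w^{k_i}(x_i)=p$.

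Substituting into the definition,
\[
w_{(P,w,\pi)}(x)=\sum_{i\in M_x^{P,\pi}}\tilde w^{k_i}(x_i)+\sum_{i\in I_x^{P,\pi}\setminus M_x^{P,\pi}}M_w=p\,|M_x^{P,\pi}|+p\,|I_x^{P,\pi}\setminus M_x^{P,\pi}|=p\,|I_x^{P,\pi}|,
\]
which equals $p\,w_{(P,\pi)}(x)$ by definition. Applying this identity to $x=c_1-c_2$ for any two distinct codewords and taking the minimum yields $d_{(P,w,\pi)}(\mathcal{C})=p\,d_{(P,\pi)}(\mathcal{C})$.

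There is no real obstacle here; the only subtlety worth stating carefully is that the $(P,w,\pi)$-weight collapses to a sum with the common value $p$ both on $M_x^{P,\pi}$ (because the block weight reaches $M_w$ there) and on $I_x^{P,\pi}\setminus M_x^{P,\pi}$ (by definition of the non-maximal contribution), which is exactly why the coincidence $m_w=M_w=p$ is used.
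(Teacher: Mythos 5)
Your proposal is correct and follows exactly the route the paper takes: the paper's justification is precisely the remark preceding the proposition that $w(\alpha)=p\,w_H(\alpha)$ forces $m_w=M_w=p$ and hence $w_{(P,w,\pi)}(x)=p\,w_{(P,\pi)}(x)$ for all $x$, from which the distance identity follows by taking the minimum over differences of distinct codewords. You have merely written out the block-weight computation in more detail than the paper does, so there is nothing to correct.
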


\begin{theorem}[Singleton Bound]\label{pwpiChept-I-sbwpic w=pw}
	Let $\mathcal{C} $  be a $[N,k]$ $(P,w,\pi)$-code of length $N $ over $\mathbb{F}_q$ with minimum distance $d_{(P,w,\pi)}(\mathcal{C})$. Let $w$ be such that $w(\alpha)=pw_H(\alpha)$ for some positive integer $p >0$ and for all $ \alpha \in \mathbb{F}_q$.  Then    
	$d_{(P,w,\pi)}(\mathcal{C})-p \leq (n - \frac{k}{s})p$ and $d_{(P,\pi)}(\mathcal{C})-1 \leq n - \frac{k}{s}$.
\end{theorem}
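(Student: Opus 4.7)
The plan is to assemble this statement directly from two earlier results: the general Singleton bound for $(P,w,\pi)$-codes with equal block length (Theorem \ref{pwpiChept-I-sbwpic}, in the linear version stated just after it), and the scaling identity $d_{(P,w,\pi)}(\mathcal{C}) = p\, d_{(P,\pi)}(\mathcal{C})$ from Proposition \ref{pwpiChept-I-sbwpic dpwpi=dppi}. There is essentially no combinatorics to redo; the only step is choosing which of the two equivalent forms of the bound to present first and then translating between them.

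First, I would record that the hypothesis $w(\alpha)=p\, w_H(\alpha)$ for all $\alpha\in\mathbb{F}_q$ forces $m_w=M_w=p$, since the minimum and maximum of $w$ over nonzero elements both equal $p\cdot 1=p$. Substituting this into the linear-code Singleton bound for the equal-block-length case, namely
\begin{equation*}
\Bigl\lfloor \tfrac{d_{(P,w,\pi)}(\mathcal{C})-m_w}{M_w}\Bigr\rfloor \;\leq\; n-\tfrac{k}{s},
\end{equation*}
yields $\bigl\lfloor \tfrac{d_{(P,w,\pi)}(\mathcal{C})-p}{p}\bigr\rfloor \leq n-\tfrac{k}{s}$.

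Next, I would use Proposition \ref{pwpiChept-I-sbwpic dpwpi=dppi} to rewrite $d_{(P,w,\pi)}(\mathcal{C}) = p\,d_{(P,\pi)}(\mathcal{C})$, so that the floor expression becomes $\bigl\lfloor \tfrac{p\,d_{(P,\pi)}(\mathcal{C})-p}{p}\bigr\rfloor = d_{(P,\pi)}(\mathcal{C})-1$, an exact integer (this is where the special form of $w$ pays off, eliminating the floor). This immediately gives the second inequality $d_{(P,\pi)}(\mathcal{C})-1 \leq n-\tfrac{k}{s}$. Multiplying both sides by the positive integer $p$ and using $p\,d_{(P,\pi)}(\mathcal{C}) = d_{(P,w,\pi)}(\mathcal{C})$ once more delivers the first inequality $d_{(P,w,\pi)}(\mathcal{C})-p \leq (n-\tfrac{k}{s})p$.

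There is no real obstacle here; the only mild subtlety is justifying that the floor disappears cleanly, which it does because $p\,d_{(P,\pi)}(\mathcal{C})-p$ is exactly divisible by $p$. Everything else is a substitution from results already proved earlier in the paper, so the proof will be just a few lines.
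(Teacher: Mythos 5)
Your proof is correct and follows exactly the route the paper takes: the paper also deduces this theorem directly from Theorem \ref{pwpiChept-I-sbwpic} (in its equal-block-length, linear form) together with Proposition \ref{pwpiChept-I-sbwpic dpwpi=dppi}, with $m_w=M_w=p$ making the floor exact. You have merely spelled out the routine substitutions that the paper labels ``straightforward.''
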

\begin{proof}
	The proof is straightforward from Theorem \ref{pwpiChept-I-sbwpic} and Proposition \ref{pwpiChept-I-sbwpic dpwpi=dppi}.
\end{proof}
\begin{proposition}\label{pwpiChept-I-n-kPperfectMDs}
	Let $\mathcal{C}$ be an $[N,k]$ $(P,w,\pi)$-code of length $N$ over $\mathbb{F}_q$. Let $w$ be such that $w(\alpha)=pw_H(\alpha)$ for some positive integer $p >0$ and for all $ \alpha \in \mathbb{F}_q$. If $\mathcal{C}$ is  $(n-\frac{k}{s})M_w$-perfect  then $\mathcal{C}$ is MDS.
\end{proposition}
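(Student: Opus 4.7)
The plan is to unpack what $(n-\tfrac{k}{s})M_w$-perfectness gives us about the minimum $(P,w,\pi)$-distance, then combine this with the special multiplicative form of the weight $w$ to force the Singleton bound to be met with equality.

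First I would record the basic constants induced by the hypothesis $w(\alpha)=p\,w_H(\alpha)$: since $p>0$, every nonzero element of $\mathbb{F}_q$ has $w$-weight exactly $p$, so $m_w=M_w=p$. By Proposition~\ref{pwpiChept-I-sbwpic dpwpi=dppi}, this promotes to the global identity $w_{(P,w,\pi)}(x)=p\,w_{(P,\pi)}(x)$ for every $x\in\mathbb{F}_q^N$; in particular, every value taken by $d_{(P,w,\pi)}$ on nonzero differences of codewords is a positive multiple of $p$, and
\[
d_{(P,w,\pi)}(\mathcal{C})=p\,d_{(P,\pi)}(\mathcal{C}).
\]

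Next I would extract a lower bound on $d_{(P,w,\pi)}(\mathcal{C})$ from the perfectness hypothesis. Let $r=(n-\tfrac{k}{s})M_w=(n-\tfrac{k}{s})p$. By definition of $r$-perfect the balls $B_{(P,w,\pi)}(c,r)$, $c\in\mathcal{C}$, are pairwise disjoint; applied to two distinct codewords $c_1,c_2$, this forces $c_2\notin B_{(P,w,\pi)}(c_1,r)$, hence $d_{(P,w,\pi)}(\mathcal{C})>r=(n-\tfrac{k}{s})p$. Since $d_{(P,w,\pi)}(\mathcal{C})$ is a multiple of $p$, the strict inequality upgrades to
\[
d_{(P,w,\pi)}(\mathcal{C})\ \geq\ \bigl(n-\tfrac{k}{s}+1\bigr)p.
\]
Feeding this into $r_{\tilde\omega}=\bigl\lfloor\tfrac{d_{(P,w,\pi)}(\mathcal{C})-m_w}{M_w}\bigr\rfloor$ and using $m_w=M_w=p$ yields $r_{\tilde\omega}\geq n-\tfrac{k}{s}$.

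Finally, I would match this against the Singleton bound of Theorem~\ref{pwpiChept-I-sbwpic}. Because all blocks have the same length $s$, for any ideal $J\in\mathcal{I}^{\,n-k/s}$ we have $\sum_{i\in J}k_i=(n-\tfrac{k}{s})s=N-k$, so the bound reads
\[
\max_{J\in\mathcal{I}^{r_{\tilde\omega}}}\Bigl\{\sum_{i\in J}k_i\Bigr\}\ \leq\ N-k.
\]
On the other hand, Proposition~\ref{pwpiChept-I-J in I will exist} guarantees that for the $r_{\tilde\omega}\geq n-\tfrac{k}{s}$ just obtained there exists an ideal of cardinality $n-\tfrac{k}{s}$ inside some ideal of cardinality $r_{\tilde\omega}$, which pins $r_{\tilde\omega}$ to $n-\tfrac{k}{s}$ exactly and realises the maximum as $N-k$. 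Hence equality is attained, and $\mathcal{C}$ meets its Singleton bound, i.e.\ $\mathcal{C}$ is MDS.

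The only subtle point, which I would treat carefully, is the divisibility step: one must notice that the hypothesis $w=pw_H$ forces the spectrum of $d_{(P,w,\pi)}$ to lie in $p\mathbb{Z}$, so that the strict inequality $d_{(P,w,\pi)}(\mathcal{C})>(n-\tfrac{k}{s})p$ gives a jump of at least $p$, not just $1$. Without this observation one would only recover $r_{\tilde\omega}\geq n-\tfrac{k}{s}-1$ (as in Proposition~\ref{pwpiChept-I-n-kM_wperfectMDs}, where the weaker assumption $m_w=1$ already sufficed together with the floor), and the equality in the Singleton bound would not be forced.
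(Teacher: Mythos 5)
Your proof is correct and follows essentially the same route as the paper's: from $m_w=M_w=p$, perfectness gives $d_{(P,w,\pi)}(\mathcal{C})>(n-\tfrac{k}{s})p$, the proportionality $d_{(P,w,\pi)}(\mathcal{C})=p\,d_{(P,\pi)}(\mathcal{C})$ upgrades this to $d_{(P,w,\pi)}(\mathcal{C})-p\geq(n-\tfrac{k}{s})p$, and the Singleton bound (the paper invokes its specialized form, Theorem \ref{pwpiChept-I-sbwpic w=pw}) then forces equality, i.e.\ MDS; your explicit divisibility remark is precisely the step the paper leaves implicit. The only cosmetic difference is your appeal to Proposition \ref{pwpiChept-I-J in I will exist} at the end, which is not needed: with all $k_i=s$ the bound already reads $r_{\tilde\omega}\,s\leq N-k$, so $r_{\tilde\omega}=n-\tfrac{k}{s}$ follows directly.
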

\begin{proof}
	Since  $w(\alpha)=pw_H(\alpha)$, $m_w=M_w=p$. If $\mathcal{C}$ is $(N-k)M_w$-perfect, then $d_{(P,w,\pi)}(\mathcal{C}) > (n-\frac{k}{s})M_w$ and $d_{(P,w,\pi)} (\mathcal{C})-p \geq (n-\frac{k}{s})M_w$.  Hence,  $\mathcal{C}$ is MDS by Theorem \ref{pwpiChept-I-sbwpic w=pw}.
\end{proof}
\begin{corollary}
	Let $\mathcal{C}$ be an $[n,k]$ $(P,w)$-code of length $n$ over $\mathbb{F}_q$. Let $w$ be a weight on $\mathbb{F}_q$ such that $w(\alpha)=pw_H(\alpha)$ for some positive integer $p >0$ and for all $ \alpha \in \mathbb{F}_q$.  If $\mathcal{C}$ is  $(n- k)M_w$-perfect  then $\mathcal{C}$ is MDS.
\end{corollary}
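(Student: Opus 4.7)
The plan is to obtain this corollary as the block-size-one specialization of Proposition \ref{pwpiChept-I-n-kPperfectMDs}. Recall from Remark \ref{becomes}(i) that when the label map satisfies $k_i=1$ for all $i\in[n]$, the $(P,w,\pi)$-space coincides with the $(P,w)$-space, and in this case $N=n$ and $s=1$. Hence a $(P,w)$-code $\mathcal{C}$ of length $n$ and dimension $k$ is literally a $(P,w,\pi)$-code satisfying the hypotheses of Proposition \ref{pwpiChept-I-n-kPperfectMDs} with $s=1$, so $n-\tfrac{k}{s}=n-k$. Thus an $(n-k)M_w$-perfect $(P,w)$-code is nothing but an $(n-\tfrac{k}{s})M_w$-perfect $(P,w,\pi)$-code under this identification, and Proposition \ref{pwpiChept-I-n-kPperfectMDs} applies verbatim to conclude that $\mathcal{C}$ is MDS.

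If desired, the argument can also be carried out directly, simply mirroring the proof of Proposition \ref{pwpiChept-I-n-kPperfectMDs}. First I would note that the assumption $w(\alpha)=p\,w_H(\alpha)$ forces $m_w=M_w=p$. If $\mathcal{C}$ is $(n-k)M_w$-perfect, then the $(n-k)M_w$-balls centered at distinct codewords are disjoint, which yields $d_{(P,w)}(\mathcal{C})>(n-k)M_w$, equivalently $d_{(P,w)}(\mathcal{C})-p\geq (n-k)p$. Invoking the Singleton bound for $(P,w)$-codes (the $s=1$ case of Theorem \ref{pwpiChept-I-sbwpic w=pw}) gives the reverse inequality, so equality holds and $\mathcal{C}$ is MDS.

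There is no real obstacle here; the only care required is bookkeeping the normalization $m_w=M_w=p$ and checking that the specialization $k_i=1$ genuinely turns the hypotheses and conclusion of Proposition \ref{pwpiChept-I-n-kPperfectMDs} into those of the corollary. Once those identifications are made, the corollary follows without any further work.
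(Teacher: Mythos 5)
Your proposal is correct and matches the paper's intent exactly: the corollary is obtained by specializing Proposition \ref{pwpiChept-I-n-kPperfectMDs} to $k_i=1$ (so $s=1$, $N=n$, $n-\tfrac{k}{s}=n-k$), and your optional direct argument is just the paper's own proof of that proposition (perfectness gives $d_{(P,w)}(\mathcal{C})>(n-k)p$, hence $d_{(P,w)}(\mathcal{C})-p\geq(n-k)p$ since all $(P,w)$-weights are multiples of $p$, and Theorem \ref{pwpiChept-I-sbwpic w=pw} forces equality). No gaps.
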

\subsection{Duality theorem}
\par In this section, we will study the duality theorem for an MDS $(P,w,\pi)$-code when all the blocks are of same length. 
\par Given a poset $ P = ([n],\preceq)$,  its dual poset $\tilde{P}= ([n], \tilde{\preceq})$ is defined with the same underlying set $[n]$ such that $i \tilde{\preceq} j $ in $ \tilde{P}$ if and only if $j \preceq i$ in  $ {P} $. As a result, the order ideals of $\tilde{P} $ are precisely the complements of the order ideals of $P$; that is, $\mathcal{I}(\tilde{P} ) = \{I^c : I \in \mathcal{I}({P}) \}$.
\par 
Let  $\mathcal{C}$ be an $[N,k]$ $(P,w,\pi)$-code of length $N$ over $\mathbb{F}_q$ and   $\mathcal{C}^\perp$ be its dual.  Suppose that $\mathcal{C}$ is an $I$-perfect $(P,w,\pi)$-code. From [ref. \cite{bkdnsr}, Theorem $4.4$], an $[N,k]$ code $\mathcal{C}$ is $I$-perfect in $(P,\pi)$-space if and only if $\mathcal{C}^\perp$ is $I^c$-perfect in $(\tilde{P},\pi)$-space. Thus, from Theorem \ref{pwpiChept-I-pwtoP perfect}, we have 
\begin{proposition}\label{pwpiChept-I-dualperfect}
	Let $P$ be a poset on $[n]$ and $\tilde{P}$ be its dual poset. A $(P,w,\pi)$-code $\mathcal{C}$ is $I$-perfect if and only if $\mathcal{C}^\perp$ is an $I^c$-perfect $(\tilde{P},w,\pi)$-code.
\end{proposition}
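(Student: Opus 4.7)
The plan is to reduce the statement to the corresponding duality result already known for $(P,\pi)$-space, namely [ref.~\cite{bkdnsr}, Theorem~4.4], by exploiting the fact that the notion of $I$-perfectness for an ideal $I$ depends only on the block-support structure and not on the weight $w$. This is already made explicit earlier in the excerpt, where it is observed that the $I$-ball $B_{I,(P,w,\pi)}(u)$ coincides with the $I$-ball $B_{I,(P,\pi)}(u)$, since the defining condition $supp_{\pi}(u-v)\subseteq I$ does not involve $w$ at all.

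The plan is to chain three equivalences. First I would invoke Theorem \ref{pwpiChept-I-pwtoP perfect} to replace the hypothesis: $\mathcal{C}$ is $I$-perfect as a $(P,w,\pi)$-code if and only if $\mathcal{C}$ is $I$-perfect as a $(P,\pi)$-code. Next I would apply the known duality theorem of Dass et al. for poset block metrics, which gives that $\mathcal{C}$ is $I$-perfect in $(P,\pi)$-space if and only if $\mathcal{C}^{\perp}$ is $I^c$-perfect in $(\tilde{P},\pi)$-space, where $\tilde{P}$ is the dual poset and $\mathcal{I}(\tilde{P})=\{I^c:I\in\mathcal{I}(P)\}$. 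Finally, I would apply Theorem \ref{pwpiChept-I-pwtoP perfect} once more, this time to $\mathcal{C}^{\perp}$ relative to the dual poset, to conclude that $\mathcal{C}^{\perp}$ is $I^c$-perfect in $(\tilde{P},\pi)$-space if and only if $\mathcal{C}^{\perp}$ is $I^c$-perfect in $(\tilde{P},w,\pi)$-space. Composing these three biconditionals yields the proposition.

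The only thing that needs a brief verification is that $I^c$ is indeed an ideal of $\tilde{P}$ so that the notion of $I^c$-perfectness in $(\tilde{P},w,\pi)$-space is well defined; this is immediate from the identity $\mathcal{I}(\tilde{P})=\{I^c:I\in\mathcal{I}(P)\}$ recalled just before the statement. There is no real obstacle here: the hard work sits inside the cited Theorem~4.4 of \cite{bkdnsr}, and the contribution of this proof is only the observation that passage between the weighted and unweighted block metrics preserves $I$-perfectness, so that the classical duality transfers verbatim. Accordingly, I would keep the write-up to a few lines, explicitly naming the three equivalences so that the logical structure is transparent.
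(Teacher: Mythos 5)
Your proposal is correct and follows essentially the same route as the paper: the paper also obtains the proposition by combining Theorem \ref{pwpiChept-I-pwtoP perfect} (equivalence of $I$-perfectness in the weighted and unweighted block metrics, applied on both the $P$ side and the $\tilde{P}$ side) with Theorem 4.4 of \cite{bkdnsr} for the $(P,\pi)$-space. Your explicit three-step chain and the remark that $I^c\in\mathcal{I}(\tilde{P})$ match the paper's argument.
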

\begin{proposition}\label{pwpiChept-I-n-kM_wperfectMDs I^t=1}
	Let $\mathcal{C}$ be an $[N,k]$ $(P,w,\pi)$-code of length $N$ over $\mathbb{F}_q$. Let $I$ be the unique ideal of $P$ with cardinality  $n-\frac{k}{s}$. If $\mathcal{C}$ is  $(n-\frac{k}{s})M_w$-perfect then $\mathcal{C}$ is MDS.
\end{proposition}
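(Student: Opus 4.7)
The plan is to chain together three existing results to show that the hypothesis forces equality in the Singleton bound. Since $I$ is the unique ideal of $P$ of cardinality $t := n - \frac{k}{s}$, Theorem \ref{pwpiChept-I-rperfect} immediately converts the $(n-\frac{k}{s})M_w$-perfect hypothesis into the equivalent statement that $\mathcal{C}$ is $I$-perfect. So I can work entirely with the $I$-perfect formulation.

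The core of the proof is to extract a sharp lower bound on $d_{(P,w,\pi)}(\mathcal{C})$ from $I$-perfection. By the packing condition $|B_I \cap \mathcal{C}| = 1$ in Lemma \ref{pwpiChept-I-rperfect to Iperf}, no two distinct codewords $c_1, c_2 \in \mathcal{C}$ satisfy $supp_\pi(c_1-c_2) \subseteq I$; equivalently, $\langle supp_\pi(c_1-c_2) \rangle \not\subseteq I$. Since $I$ is the unique ideal of cardinality $t$, this forces $|\langle supp_\pi(c_1-c_2) \rangle| > t$. Invoking Theorem \ref{pwpiChept-I-J insdie I minusMaxI}(ii) with $J = I$, the full ideal $I$ must sit inside $\langle supp_\pi(c_1-c_2) \rangle \setminus Max(\langle supp_\pi(c_1-c_2) \rangle)$. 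Plugging into the definition of $(P,w,\pi)$-weight, the non-maximal block positions contribute at least $|I|\,M_w = tM_w$, while at least one maximal block contributes at least $m_w$ (since $c_1 \neq c_2$ ensures a non-empty $\pi$-support, hence a non-empty set of maximal elements). Therefore
\[
d_{(P,w,\pi)}(\mathcal{C}) \;\geq\; \Bigl(n - \tfrac{k}{s}\Bigr)M_w + m_w,
\]
which gives $\bigl\lfloor (d_{(P,w,\pi)}(\mathcal{C}) - m_w)/M_w \bigr\rfloor \geq n - \frac{k}{s}$. Combined with the Singleton bound of Theorem \ref{pwpiChept-I-sbwpic} in the equal-block-length setting, which reads $\bigl\lfloor (d_{(P,w,\pi)}(\mathcal{C}) - m_w)/M_w \bigr\rfloor \leq n - \frac{k}{s}$, equality holds and $\mathcal{C}$ is MDS.

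The only subtlety I expect is verifying applicability of Theorem \ref{pwpiChept-I-J insdie I minusMaxI}(ii), which requires $t \leq n-1$; the edge case $t = n$ corresponds to $k = 0$ and is trivially MDS, so it may be dispatched separately or excluded by convention. Everything else is a routine assembly of previously proved lemmas, and unlike Proposition \ref{pwpiChept-I-n-kM_wperfectMDs} and Proposition \ref{pwpiChept-I-n-kPperfectMDs} this version makes no structural assumption on $w$ beyond what is built into the $(P,w,\pi)$-framework, with the uniqueness of $I$ at cardinality $t$ playing the role previously handled by $m_w = 1$ or $w = pw_H$.
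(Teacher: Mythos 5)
Your proposal is correct and takes essentially the same route as the paper's own proof: both arguments use the uniqueness of $I$ together with Theorem \ref{pwpiChept-I-J insdie I minusMaxI} to upgrade the bound $d_{(P,w,\pi)}(\mathcal{C}) > (n-\frac{k}{s})M_w$ coming from perfectness to $d_{(P,w,\pi)}(\mathcal{C}) \geq (n-\frac{k}{s})M_w + m_w$, and then conclude by matching this against the Singleton bound of Theorem \ref{pwpiChept-I-sbwpic}. Your detour through $I$-perfection via Theorem \ref{pwpiChept-I-rperfect} and your remark on the $t=n$ edge case merely make explicit details the paper leaves implicit.
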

\begin{proof}
	Since $\mathcal{C}$ is $(n-\frac{k}{s})M_w$-perfect, $d_{(P,w,\pi)}(\mathcal{C}) > (n-\frac{k}{s})M_w$. Thus, $d_{(P,w,\pi)} (\mathcal{C})-m_w \geq n-\frac{k}{s}$ by Theorem \ref{pwpiChept-I-J insdie I minusMaxI}.  Hence, $\mathcal{C}$ is MDS from Theorem \ref{pwpiChept-I-sbwpic}.
\end{proof}
\begin{theorem}\label{pwpiChept-I-equitheorem}
	Let $\mathcal{C}$ be an $[N,k]$ linear $(P,w,\pi)$-code of length $N$  over  $\mathbb{F}_q$. If $\mathcal{I}^{n-\frac{k}{s}}=\{I\}$ then the following statements are equivalent:
	\begin{enumerate}[label=(\roman*)]
		\item\label{pwpiChept-I-15-1} $\mathcal{C} $ is MDS with respect to $(P,w,\pi)$-metric
		\item\label{pwpiChept-I-15-2} $\mathcal{C}$ is $I$-perfect    
		\item\label{pwpiChept-I-15-3} $\mathcal{C} $ is MDS with respect to $(P,\pi)$-metric
		\item\label{pwpiChept-I-15-4}  $\mathcal{C} $ is $(n-\frac{k}{s})$-perfect with respect to $(P,\pi)$-metric
		\item\label{pwpiChept-I-15-5} $\mathcal{C} $ is $(n-\frac{k}{s})M_w$-perfect with respect to $(P,w,\pi)$-metric.
	\end{enumerate}
\end{theorem}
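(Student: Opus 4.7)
The plan is to establish the equivalences by a short cycle that chains together previously proved results, leaning heavily on the fact that $\mathcal{I}^{n-k/s}=\{I\}$ collapses the quantifier ``for all $I \in \mathcal{I}^{n-k/s}$'' appearing in several earlier statements. Concretely, I aim for the cycle \ref{pwpiChept-I-15-1} $\Rightarrow$ \ref{pwpiChept-I-15-2} $\Rightarrow$ \ref{pwpiChept-I-15-5} $\Rightarrow$ \ref{pwpiChept-I-15-1}, and then attach \ref{pwpiChept-I-15-3} and \ref{pwpiChept-I-15-4} to this cycle via two separate biconditionals.

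For \ref{pwpiChept-I-15-1} $\Rightarrow$ \ref{pwpiChept-I-15-2}, I invoke Theorem \ref{pwpiChept-I-Iperfectmdscode}, which states that an MDS $(P,w,\pi)$-code is $J$-perfect for every $J \in \mathcal{I}^{n-k/s}$; the uniqueness hypothesis then gives $I$-perfectness. For \ref{pwpiChept-I-15-2} $\Rightarrow$ \ref{pwpiChept-I-15-5}, I apply Theorem \ref{pwpiChept-I-rperfect}, which says precisely that when $I$ is the unique ideal of cardinality $t=n-k/s$, $I$-perfectness is equivalent to $tM_w$-perfectness in the $(P,w,\pi)$-metric. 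For \ref{pwpiChept-I-15-5} $\Rightarrow$ \ref{pwpiChept-I-15-1}, I cite Proposition \ref{pwpiChept-I-n-kM_wperfectMDs I^t=1}, which promotes $(n-k/s)M_w$-perfectness to MDS-ness under exactly the uniqueness hypothesis we have. This closes the main cycle \ref{pwpiChept-I-15-1} $\Leftrightarrow$ \ref{pwpiChept-I-15-2} $\Leftrightarrow$ \ref{pwpiChept-I-15-5}.

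To incorporate \ref{pwpiChept-I-15-3}, I use Corollary \ref{pwpiChept-I-posetmdsiff}: $\mathcal{C}$ is MDS with respect to the $(P,\pi)$-metric iff it is $J$-perfect for every $J \in \mathcal{I}^{n-k/s}$, which under the uniqueness of $I$ reduces to \ref{pwpiChept-I-15-2}. For \ref{pwpiChept-I-15-4}, I apply Corollary \ref{pwpiChept-I-rperfectcoro1} (with the $(P,\pi)$-analogue of Theorem \ref{pwpiChept-I-rperfect}): since $I$ is the unique ideal of cardinality $t$, $t$-perfectness in $(P,\pi)$ is equivalent to $tM_w$-perfectness in $(P,w,\pi)$, i.e.\ to \ref{pwpiChept-I-15-5}. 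Thus \ref{pwpiChept-I-15-3} and \ref{pwpiChept-I-15-4} both attach to the cycle.

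I do not expect any genuine obstacle here: every implication is a direct appeal to a lemma, proposition, corollary, or theorem already established in Sections \ref{perfect codes in P,w,pi-space} and \ref{MDS-P,w,pi-space}. The only subtle point requiring care is that several of those earlier results are stated with the quantifier ``for all $J \in \mathcal{I}^{n-k/s}$'', and the whole force of the present theorem is that the uniqueness hypothesis $\mathcal{I}^{n-k/s}=\{I\}$ collapses this quantifier to a single statement about $I$; I will state this reduction explicitly once at the start of the proof so the applications of Theorem \ref{pwpiChept-I-Iperfectmdscode}, Theorem \ref{pwpiChept-I-rperfect}, Corollary \ref{pwpiChept-I-posetmdsiff}, and Corollary \ref{pwpiChept-I-rperfectcoro1} become immediate.
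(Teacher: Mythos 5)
Your proposal is correct and follows essentially the same route as the paper: both rest on Theorem \ref{pwpiChept-I-Iperfectmdscode}, Theorem \ref{pwpiChept-I-rperfect}, Proposition \ref{pwpiChept-I-n-kM_wperfectMDs I^t=1}, Corollary \ref{pwpiChept-I-posetmdsiff}, and Corollary \ref{pwpiChept-I-rperfectcoro1}, with the uniqueness of $I$ collapsing the ``for all $J \in \mathcal{I}^{n-\frac{k}{s}}$'' quantifier. The only difference is organizational (your cycle \ref{pwpiChept-I-15-1}$\Rightarrow$\ref{pwpiChept-I-15-2}$\Rightarrow$\ref{pwpiChept-I-15-5}$\Rightarrow$\ref{pwpiChept-I-15-1} versus the paper's hub at \ref{pwpiChept-I-15-2}), which is immaterial.
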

\begin{proof}
	\ref{pwpiChept-I-15-1} $\Leftrightarrow$ \ref{pwpiChept-I-15-2}: If $\mathcal{C} $ is MDS with respect to $(P,w,\pi)$-metric then $\mathcal{C}$ is $I$-perfect for all $I \in \mathcal{I}^{n-\frac{k}{s}}$ by Theorem \ref{pwpiChept-I-Iperfectmdscode}.   Conversely, if $\mathcal{C}$ is $I$-perfect, then $\mathcal{C}$ is $(n-\frac{k}{s})M_w$-perfect by Theorem \ref{pwpiChept-I-rperfect}. Hence, $\mathcal{C} $ is MDS with respect to $(P,w,\pi)$-metric by Proposition \ref{pwpiChept-I-n-kM_wperfectMDs I^t=1}. \ref{pwpiChept-I-15-2} $\Leftrightarrow$ \ref{pwpiChept-I-15-3}: It is straightforward from the Corollary \ref{pwpiChept-I-posetmdsiff}. \ref{pwpiChept-I-15-2} $\Leftrightarrow$ \ref{pwpiChept-I-15-4}: It is straightforward from Theorem \ref{pwpiChept-I-rperfect} and Corollary \ref{pwpiChept-I-rperfectcoro1}. \ref{pwpiChept-I-15-2} $\Leftrightarrow$ \ref{pwpiChept-I-15-5}: It is straightforward from Theorem \ref{pwpiChept-I-rperfect}.
\end{proof}
\begin{theorem}[Duality Theorem]\label{pwpiChept-I-duality}
	Let $I$ denote the unique ideal of $P$ with
	cardinality $n-\frac{k}{s}$ and let $\mathcal{C}$ be an $[N,k]$ $(P,w,\pi)$-code of length $N$ over $\mathbb{F}_q$.
	Let  $\tilde{P}$ be the dual poset of the poset $P$. Then the following statements are equivalent:	
	\begin{enumerate}[label=(\roman*)]
		\item $\mathcal{C}$ is an MDS   $(P,w,\pi)$-code.
		\item $\mathcal{C}$ is  $I$-perfect.
		\item $\mathcal{C}^\perp$ is $I^c$-perfect.
		\item $\mathcal{C}^\perp$ is an MDS $(\tilde{P},w,\pi)$-code.
	\end{enumerate} 
\end{theorem}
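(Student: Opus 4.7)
The plan is to chain together three earlier results and reduce the duality theorem to an observation about cardinalities of ideals in the dual poset. The equivalences to be established fall naturally into three links: (i) $\Leftrightarrow$ (ii), (ii) $\Leftrightarrow$ (iii), and (iii) $\Leftrightarrow$ (iv). For the first link I would invoke Theorem \ref{pwpiChept-I-equitheorem} directly: since by hypothesis $\mathcal{I}^{n-k/s}(P)=\{I\}$, the equivalence between being MDS and being $I$-perfect in the $(P,w,\pi)$-space is already in hand. For the second link I would appeal to Proposition \ref{pwpiChept-I-dualperfect}, which is precisely the statement that $I$-perfectness of $\mathcal{C}$ in $(P,w,\pi)$-space is equivalent to $I^c$-perfectness of $\mathcal{C}^\perp$ in the dual $(\tilde{P},w,\pi)$-space.

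The remaining link (iii) $\Leftrightarrow$ (iv) requires a small bookkeeping step before Theorem \ref{pwpiChept-I-equitheorem} can be reapplied to $\mathcal{C}^\perp$. Since $\dim \mathcal{C}^\perp = N-k = ns-k$, the relevant ideal cardinality for the dual code is
\[
n-\tfrac{N-k}{s} = n-\tfrac{ns-k}{s} = \tfrac{k}{s}.
\]
I would then use the identification $\mathcal{I}(\tilde{P})=\{J^c : J \in \mathcal{I}(P)\}$: an ideal of cardinality $k/s$ in $\tilde{P}$ is the complement of an ideal of cardinality $n-k/s$ in $P$. Since by hypothesis there is a unique such ideal $I$ in $P$, there is a unique ideal of cardinality $k/s$ in $\tilde{P}$, namely $I^c$. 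Hence $\mathcal{I}^{k/s}(\tilde{P})=\{I^c\}$, and Theorem \ref{pwpiChept-I-equitheorem} applied to $\mathcal{C}^\perp$ in $(\tilde{P},w,\pi)$-space yields the equivalence between $I^c$-perfectness and being MDS.

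Stringing the three links together gives the cyclic chain (i) $\Leftrightarrow$ (ii) $\Leftrightarrow$ (iii) $\Leftrightarrow$ (iv). The main obstacle, and really the only non-mechanical step, is the verification that the uniqueness hypothesis on the ideal of cardinality $n-k/s$ in $P$ transfers to a uniqueness statement for the ideal of cardinality $k/s$ in $\tilde{P}$; once that is made explicit, everything else is a direct citation of the earlier theorem and proposition. I would not need any new computation beyond the complement identification of ideals and the dimension count $\dim \mathcal{C} + \dim \mathcal{C}^\perp = N$.
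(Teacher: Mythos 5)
Your proposal is correct and follows essentially the same route as the paper, which likewise deduces the theorem by combining Proposition \ref{pwpiChept-I-dualperfect} with Theorem \ref{pwpiChept-I-equitheorem}; your explicit verification that $\mathcal{I}^{k/s}(\tilde{P})=\{I^c\}$ via the complement correspondence and the dimension count for $\mathcal{C}^\perp$ is exactly the bookkeeping the paper leaves implicit in calling the proof ``straightforward.''
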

\begin{proof}
	The proof is straightforward from Proposition \ref{pwpiChept-I-dualperfect} and Theorem \ref{pwpiChept-I-equitheorem}.
\end{proof}
\par This generalizes the duality theorem in \cite{aks} which was obtained for the case $s=1$.
\par Let  $P$ and $P'$ be any two posets on $[n]$ with partial orders $\preceq_P$ and $\preceq_{P'}$, respectively on $[n]$. We say $P'$ is finer than $P$ if $i \preceq_P j $ implies $ i \preceq_{P'} j$. 
\begin{theorem}
	Let $P,P'$ be any two posets  on $[n]$ where $P'$ is finer than $P$. If $\mathcal{C}$ is an MDS $(P,w,\pi)$-code, then C is MDS $(P',w,\pi)$-code as well.
\end{theorem}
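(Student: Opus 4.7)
The plan is to show that refining the partial order can only increase the weight of any vector, hence the minimum distance of the code, and then to transfer this monotonicity into MDS-ness via the equal-block-length form of the Singleton bound.

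First, I would note that, since $P'$ is finer than $P$, every $\preceq_{P'}$-ideal is automatically a $\preceq_P$-ideal, so $\mathcal{I}(P') \subseteq \mathcal{I}(P)$ and $\langle S\rangle_P \subseteq \langle S\rangle_{P'}$ for every $S \subseteq [n]$. Applied to $S = supp_\pi(x)$, this gives $I_x^{P,\pi} \subseteq I_x^{P',\pi}$, hence $|I_x^{P,\pi}| \leq |I_x^{P',\pi}|$ for every $x \in \mathbb{F}_q^N$. A parallel check yields $M_x^{P',\pi} \subseteq M_x^{P,\pi}$: using the fact that the maximal elements of $\langle S\rangle$ coincide with those of $S$, if some $i \in supp_\pi(x)$ is strictly dominated in $P$ by a $j \in supp_\pi(x)$, then finerness forces $i \prec_{P'} j$, so $i$ fails to be maximal in $P'$ as well.

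The key step is then to rewrite the weight in the form
\[
w_{(P,w,\pi)}(x) = |I_x^{P,\pi}|\,M_w - \sum_{i \in M_x^{P,\pi}} \bigl(M_w - \tilde{w}^{k_i}(x_i)\bigr),
\]
and similarly for $P'$. Since every summand $M_w - \tilde{w}^{k_i}(x_i)$ is nonnegative, the two inclusions above give both $|I_x^{P',\pi}|\,M_w \geq |I_x^{P,\pi}|\,M_w$ and $\sum_{i \in M_x^{P',\pi}}(M_w - \tilde{w}^{k_i}(x_i)) \leq \sum_{i \in M_x^{P,\pi}}(M_w - \tilde{w}^{k_i}(x_i))$. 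Combining these yields $w_{(P',w,\pi)}(x) \geq w_{(P,w,\pi)}(x)$ for every $x$; translation invariance of the metric then promotes this to $d_{(P',w,\pi)}(\mathcal{C}) \geq d_{(P,w,\pi)}(\mathcal{C})$, so $r_{\tilde\omega}^{P'} \geq r_{\tilde\omega}^{P}$.

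To finish, since every block has length $s$, the Singleton bound of Theorem \ref{pwpiChept-I-sbwpic} specializes to $r_{\tilde\omega} \leq n - \lceil \log_q|\mathcal{C}|\rceil/s$, whose right-hand side depends only on $n$, $s$, $q$ and $|\mathcal{C}|$. If $\mathcal{C}$ is MDS with respect to $(P,w,\pi)$ this bound is attained for $P$, so $r_{\tilde\omega}^{P'} \geq r_{\tilde\omega}^{P} = n - \lceil\log_q|\mathcal{C}|\rceil/s$; applied to $P'$, the same Singleton bound provides the reverse inequality, forcing equality and showing that $\mathcal{C}$ is MDS with respect to $(P',w,\pi)$. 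The main subtlety lies in the weight comparison: under refinement $I_x^{P,\pi}$ may grow strictly while $M_x^{P,\pi}$ may shrink strictly, so the two effects pull in opposite directions, and it is exactly the identity above together with the nonnegativity of $M_w - \tilde{w}^{k_i}(x_i)$ that makes them combine in the required direction.
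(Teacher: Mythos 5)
Your proof is correct and follows essentially the same route as the paper: monotonicity of the minimum $(P,w,\pi)$-distance under refinement of the poset, combined with the fact that the right-hand side of the equal-block-length Singleton bound does not depend on the poset, so the bound attained for $P$ sandwiches $r_{\tilde{\omega}}$ for $P'$. The only difference is that you actually verify the key inequality $w_{(P',w,\pi)}(x) \geq w_{(P,w,\pi)}(x)$ (via $I_x^{P,\pi} \subseteq I_x^{P',\pi}$, $M_x^{P',\pi} \subseteq M_x^{P,\pi}$ and the rewritten weight with nonnegative terms $M_w - \tilde{w}^{k_i}(x_i)$), a step the paper asserts without proof.
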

\begin{proof}
	Let $\mathcal{C}$ be an MDS $(P,w,\pi)$-code. Since $P'$ is finer than $P$, $d_{(P,w,\pi)}(\mathcal{C}) \leq d_{(P',w,\pi)}(\mathcal{C})$ so that  $  \big\lfloor \frac{d_{(P,w,\pi)} (\mathcal{C})-m_w}{ M_w} \big\rfloor \leq \big\lfloor \frac{d_{(P',w,\pi)} (\mathcal{C})-m_w}{M_w} \big\rfloor $. We have  $  \big\lfloor \frac{d_{(P,w,\pi)} (\mathcal{C})-m_w}{ M_w} \big\rfloor  = N- \lceil log_{q}|\mathcal{C}| \rceil$ (from Theorem \ref{pwpiChept-I-sbwpic}), and hence $ N- \lceil log_{q}|\mathcal{C}| \rceil \leq \big\lfloor \frac{d_{(P',w,\pi)} (\mathcal{C})-m_w}{M_w} \big\rfloor  \leq N- \lceil log_{q}|\mathcal{C}| \rceil$. Thus, C is an MDS $(P',w,\pi)$-code.
\end{proof}
%%%%%%%%%%%%%%%%%%%%%%%%%%%%%%%%%%%%%%%%%%%%%%%%%%%%%%%%%%%%%%%%%%%%%%%%%%%%%%%%%5%%%%%%%%%%%%
\section{Codes on NRT weighted block spaces}\label{NRT-P,w,pi-space}
Throughout this section, we  consider $P=([n],\preceq)$ to be a chain. Then each ideal in $P$ has a unique maximal element.
We have either $i \preceq j$ or $j \preceq i$ for any $i,j \in [n]$. Thus, $\max\langle i \rangle = i$,  $|\langle i \rangle |= i$ and  $|\mathcal{I}^t| =1$ for each $1 \leq t \leq n$ so that $  B_{(P,w,\pi)}(x, t M_w) = B_I(x)$. Let $v \in \mathbb{F}_{q}^N$, $I_v^{P,\pi}= \langle supp_{\pi}(v) \rangle$ and $j$ be the maximal element of  $I_v^{P,\pi}$. Then the $(P,w,\pi)$-weight of $v$ is
\begin{equation*}
	w_{(P,w,\pi)}(v)=  \tilde{w}^{k_j}{(v_j)} + (|I_v^{P,\pi}|-1) M_w.
\end{equation*}   
where $v_j \in \mathbb{F}_{q}^{k_j}$ and $\tilde{w}^{k_j}{(v_j)}=\max\{w(v_{j_t}) : 1 \leq t \leq k_j\}$.
This weight can be called as NRT weighted block weight or  $(P_{NRT},w,\pi)$-weight $	w_{(P_{NRT},w,\pi)}$ and the corresponding $(P,w,\pi)$-metric $d_{(P,w,\pi)}$ is called as NRT weighted block metric or  $(P_{NRT},w,\pi)$-metric $d_{(P_{NRT},w,\pi)}$ on $\mathbb{F}_{q}^N$. We call the  space $ (\mathbb{F}_{q}^N, ~d_{(P_{NRT},w,\pi)} )$ as an \textit{NRT weighted block space} or the \textit{NRT $(P,w,\pi)$-space}.
\par If $w$ is the Hamming weight on $\mathbb{F}_q$ then  NRT weighted block metric becomes \textit{NRT block metric} of \cite{nrt block}. Hence, the $(P,w,\pi)$-space with $P$ as a chain generalizes the NRT block space described in \cite{nrt block}.
%%%%%%%%%%%%%%%%%%%%%%%%%%%%%%%%%%%%%%%%%%%%%%%%%%%%%%%%%%%
\subsection{Weight Distribution of NRT Weighted Block Spaces}
 Let $A_r = \{x= x_{1} \oplus x_{2} \oplus \ldots \oplus x_{n} \in \mathbb{F}_{q}^{N}  : w_{(P_{NRT},w,\pi)}(x)=r  \}$.  Now, for any $i \in [n]$, $ \langle i \rangle  = \{1,2,\ldots,i\}$, $Max \langle i \rangle = \{i\}$ and  $|\mathcal{I}_{1}^{i}|=1$. Note that    $|PRT_{i-1}[r]|=1$  for any $r \leq nM_w$  so that $ARG[b]=PRT_{i-1}[r]$ where $b \in PRT_{i-1} [r]$. If $|\Gamma_i^P |=n_i=1$ for each level in a hierarchical poset then  hierarchical poset is a nothing but a chain poset. Thus, from Theorem \ref{r=Mw Hierarchical}, we have
\begin{theorem}\label{r=Mw weight chain poset}
	In the NRT weighted block space $ (\mathbb{F}_{q}^N, ~d_{(P_{NRT},w,\pi)} )$, we have 
	$	| A_r |= q^{ k_{1} + k_2 + \cdots + k_t} |D_{a}^{k_{t+1}}| $  where $r= tM_{w}+ a  \text{ with } 0 \leq t \leq n-1 \text{ and }  0 < a \leq M_{w}$ (by setting $k_0=0$). Moreover, if $k_i =k$ $\forall$ $i\in [n] $, then $	| A_r |= q^{ k t} |D_{a}^{k}| $.
\end{theorem}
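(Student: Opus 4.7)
The plan is to observe that when $P$ is a chain, every ideal $\langle i\rangle$ has a single maximal element $i$, so the weight formula simplifies drastically. Specifically, for $x\in\mathbb{F}_q^N$ with $I_x^{P,\pi}$ having maximal element $j$, we have
\[
w_{(P_{NRT},w,\pi)}(x)=\tilde w^{k_j}(x_j)+(j-1)M_w.
\]
First I would argue that for $r=tM_w+a$ with $0<a\le M_w$, a vector $x$ has weight $r$ if and only if $j=t+1$ and $\tilde w^{k_{t+1}}(x_{t+1})=a$, because the remainder $a$ must satisfy $0<a\le M_w$ and the writing $r=(j-1)M_w+\tilde w^{k_j}(x_j)$ is then unique. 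This pins down exactly which ideal $I_x^{P,\pi}$ is, namely $\langle t+1\rangle=\{1,2,\ldots,t+1\}$.

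Next I would count: once we fix the maximal element to be $t+1$ and the weight at that block to be $a$, the block $x_{t+1}$ ranges freely over $D_a^{k_{t+1}}$, contributing $|D_a^{k_{t+1}}|$ choices; each block $x_i$ for $1\le i\le t$ (the non-maximal positions of the ideal) can be any element of $\mathbb F_q^{k_i}$, contributing $q^{k_i}$ choices; and each block $x_i$ for $i\ge t+2$ must be $0\in\mathbb F_q^{k_i}$. Multiplying gives
\[
|A_r|=q^{k_1+k_2+\cdots+k_t}\,|D_a^{k_{t+1}}|,
\]
with the convention $k_0=0$ handling the boundary case $t=0$ (where the empty product is $1$ and only the block $x_1$ contributes). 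The specialization to $k_i=k$ is then immediate.

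Alternatively, the same conclusion follows by specializing Theorem \ref{r=Mw} to a chain: since $|\mathcal I_j^i|=1$ exactly when $j=1$, all sums over $j\ge 2$ and over $\mathcal I_j^{j+i}$ collapse, $|PRT_{i-j}[r]|=1$ reduces to the single partition $(a)$ with its trivial arrangement, and the formula becomes the stated product. I do not foresee any real obstacle here; the only thing to be careful about is the boundary $t=0$ and the uniqueness of the decomposition $r=(j-1)M_w+a$ with $0<a\le M_w$, which is the reason we get a clean closed form rather than a sum.
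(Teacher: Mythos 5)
Your argument is correct, and your main route is genuinely different from the paper's. The paper does not give a standalone count at all: it obtains this theorem by specializing its general weight-distribution machinery — Theorem \ref{r=Mw} via the hierarchical case Theorem \ref{r=Mw Hierarchical}, observing that a chain is exactly a hierarchical poset with every level of size $n_i=1$, so that all the sums over $\mathcal{I}_j^{j+i}$, over $PRT_{i-j}[r]$ and over arrangements collapse to a single term. Your primary proof instead works directly from the simplified chain weight formula $w_{(P_{NRT},w,\pi)}(x)=\tilde w^{k_j}(x_j)+(j-1)M_w$ and the uniqueness of the decomposition $r=(j-1)M_w+a$ with $0<a\le M_w$, which pins down $j=t+1$ and $\tilde w^{k_{t+1}}(x_{t+1})=a$; the count $q^{k_1+\cdots+k_t}\,|D_a^{k_{t+1}}|$ then follows because the blocks below position $t+1$ are free (their being zero or not affects neither the ideal $\langle t+1\rangle$ nor the weight, since $x_{t+1}\neq 0$) and the blocks above must vanish. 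This direct argument is more elementary and self-contained, and it makes the key structural point (unique division with remainder in $(0,M_w]$) explicit, whereas the paper's specialization buys brevity by reusing the general theorems already proved; your closing remark about the alternative specialization is in fact precisely the paper's proof, so you have both routes. Your treatment of the boundary $t=0$ via the convention $k_0=0$ also matches the statement.
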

\begin{theorem}\label{nrtwpirball}
	In $ (\mathbb{F}_{q}^N, ~d_{(P_{NRT},w,\pi)} )$, for any $ x \in \mathbb{F}_{q}^N $ we have 
	\begin{enumerate}[label=(\roman*)]
		\item $
			| B_{r}(x) | = q^{ k_{1}  + \cdots + k_t} (1+|D_{1}^{k_{t+1}}|+|D_{2}^{k_{t+1}}|+\ldots+ |D_{a}^{k_{t+1}}|)$ 
		where $r= tM_{w}+ a $ with  $0 \leq t \leq n-1$ and $0 \leq a \leq M_{w}$ (by setting $k_0=0$).
		\item $| B_{tM_w}(x) | =  q^{ k_{1} + k_2 + \cdots +  k_t}$.
		\item if $k_i =k$ $\forall$ $i\in [n] $, then  
		$
			| B_{r}(x) | = q^{kt} (1+|D_{1}^{k}|+|D_{2}^{k}|+\ldots+|D_{a}^{k}|)$
			\item $| B_{tM_w}(x) | =  q^{  k t}$.
	\end{enumerate} 
\end{theorem}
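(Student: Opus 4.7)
The plan is to combine the sphere-size formula of Theorem \ref{r=Mw weight chain poset} with the identity $|B_r(x)| = 1 + \sum_{s=1}^{r}|A_s|$ (valid for any $x \in \mathbb{F}_q^N$ by translation invariance of $d_{(P,w,\pi)}$), and to exploit a telescoping structure hidden in the sum. So first I would fix $r = tM_w + a$ with $0 \leq t \leq n-1$ and $0 \leq a \leq M_w$ and break $\{1,2,\dots,r\}$ into the $t$ full blocks $\bigl\{jM_w+1,\dots,(j+1)M_w\bigr\}$ for $j=0,1,\dots,t-1$ together with the tail $\{tM_w+1,\dots,tM_w+a\}$.

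For each block and each $b \in \{1,\dots,M_w\}$, Theorem \ref{r=Mw weight chain poset} gives $|A_{jM_w+b}| = q^{k_1+\cdots+k_j}\,|D_b^{k_{j+1}}|$ (with the convention $k_0 = 0$). Since the sets $D_0^{k_{j+1}}, D_1^{k_{j+1}}, \dots, D_{M_w}^{k_{j+1}}$ partition $\mathbb{F}_q^{k_{j+1}}$ and $|D_0^{k_{j+1}}| = 1$, we have
\[
\sum_{b=1}^{M_w} |D_b^{k_{j+1}}| \;=\; q^{k_{j+1}} - 1,
\]
so the $j$-th block contributes $q^{k_1+\cdots+k_j}\bigl(q^{k_{j+1}}-1\bigr) = q^{k_1+\cdots+k_{j+1}} - q^{k_1+\cdots+k_j}$. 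Summing over $j = 0, 1, \dots, t-1$ telescopes to $q^{k_1+\cdots+k_t} - 1$. Adding the leading $|A_0| = 1$ and the residual tail contribution $q^{k_1+\cdots+k_t}\sum_{b=1}^{a}|D_b^{k_{t+1}}|$ yields
\[
|B_r(x)| \;=\; q^{k_1+\cdots+k_t}\Bigl(1 + \sum_{b=1}^{a}|D_b^{k_{t+1}}|\Bigr),
\]
which is exactly (i).

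Statement (ii) then drops out by taking $a = 0$ (empty tail sum); alternatively, since $P$ is a chain the unique ideal $I$ of cardinality $t$ satisfies $B_{(P,w,\pi)}(x,tM_w) = B_I(x)$, and $|B_I(x)| = q^{\sum_{i \in I} k_i}$ is a standard property of $I$-balls already noted in Section \ref{perfect codes in P,w,pi-space}. Statements (iii) and (iv) follow immediately by specialising $k_i = k$ for all $i \in [n]$. The only real obstacle is careful bookkeeping of indices when the block decomposition is set up; once the partition identity $\sum_{b=0}^{M_w}|D_b^{k_{j+1}}| = q^{k_{j+1}}$ is invoked, the telescoping is immediate and the rest is mechanical.
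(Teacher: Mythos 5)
Your proposal is correct and follows essentially the same route as the paper: write $|B_r(x)| = \bigl(1+\sum_{i=1}^{tM_w}|A_i|\bigr) + \sum_{j=1}^{a}|A_{tM_w+j}|$ and invoke the sphere sizes from Theorem \ref{r=Mw weight chain poset}. The only difference is that you justify the identity $1+\sum_{i=1}^{tM_w}|A_i| = q^{k_1+\cdots+k_t}$ explicitly by the block-wise telescoping $\sum_{b=1}^{M_w}|D_b^{k_{j+1}}| = q^{k_{j+1}}-1$, whereas the paper simply asserts it; your filling in of that detail is sound.
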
 
\begin{proof}
	As $	| B_{r}(x) | = (1 + \sum\limits_{i=1}^{tM_W} |A_i| ) + \sum\limits_{j=1}^{a} |A_{tM_w+j}| $ and $1 + \sum\limits_{i=1}^{tM_w} |A_i| = q^{k_{1} + k_2 + \cdots + k_{t}}$, the result follows immediately.
\end{proof}
\subsection{ NRT block space}
When $w$ is the Hamming weight on $\mathbb{F}_q$, the NRT weighted block metric becomes the NRT block metric of \cite{nrt block}.
Here, $M_w=1$ and the maximum NRT block weight of any $ x \in \mathbb{F}_{q}^N $ is $n$. 
Moreover, $\tilde{w}^{k_i}(x_i)=1$  for any $x_i \in \mathbb{F}_q^{k_i}\setminus \{0\}$ and hence $|D_i^{k_i}|=(q-1)^{k_i}$.
Thus, for $1\leq r \leq n$, we have  $ |	A_r |=
q^{ k_{1} + k_2 + \cdots + k_{r-1}} (q-1)^{k_r} $ and $| B_{r}(x) | = q^{ k_{1} + k_2 + \cdots + k_r}$ with $k_0=0$. Further, when $k_i=1 ~\forall ~i \in [n]$, $| A_{r} | =  q^{r-1} (q-1)^r$ and $| B_{r}(x) | =  q^{r}$.
%%%%%%%%%%%%%%%%%%%%%%%%%%%%%%%%%%%%%%%%%%%%%%%%%%%%%%%%%%
\subsection{Duality theorem}
\begin{theorem} \label{pwpiChept-I-rperfect chain}
	Let $\mathcal{C}$ be an $[N,k]$ linear $(P,w,\pi)$-code over  $\mathbb{F}_q$ and let $P$ be a chain. Then $\mathcal{C}$ is  $I$-perfect for  $I \in \mathcal{I}^t$ iff
	$\mathcal{C}$ is $tM_w$-perfect.
\end{theorem}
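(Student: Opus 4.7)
The plan is to observe that this is essentially an immediate corollary of Theorem \ref{pwpiChept-I-rperfect}, once we invoke the special structure of chains. Recall that theorem states: if $I$ is the unique ideal of $P$ with cardinality $t$, then a $(P,w,\pi)$-code $\mathcal{C}$ is $tM_w$-perfect if and only if $\mathcal{C}$ is $I$-perfect. So the entire task reduces to verifying the uniqueness hypothesis in the chain setting.

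First, I would note that when $P=([n],\preceq)$ is a chain, say $1 \preceq 2 \preceq \cdots \preceq n$, every ideal of $P$ is of the form $\{1,2,\ldots,t\}$ for some $0 \leq t \leq n$. Consequently $|\mathcal{I}^t|=1$ for each $1 \leq t \leq n$; this is already remarked at the start of Section \ref{NRT-P,w,pi-space}. Therefore, for any $t$ with $1 \leq t \leq n$, the ideal $I \in \mathcal{I}^t$ appearing in the hypothesis of the statement is automatically the \emph{unique} ideal of $P$ of cardinality $t$.

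Next, having verified the uniqueness condition, I would directly cite Theorem \ref{pwpiChept-I-rperfect} to conclude: $\mathcal{C}$ is $I$-perfect if and only if $\mathcal{C}$ is $tM_w$-perfect, which is exactly the claim.

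There is no real obstacle here; the only thing to be careful about is that the statement quantifies over $I \in \mathcal{I}^t$ as though multiple ideals of cardinality $t$ might exist, whereas in the chain case there is exactly one. The proof therefore amounts to a single-sentence reduction followed by an application of the previously established theorem. No packing/covering arguments need to be redone, and no ball-cardinality computations from Theorem \ref{nrtwpirball} are required, although one could alternatively give a self-contained proof by noting that for a chain, $B_{(P,w,\pi)}(x,tM_w)=B_I(x)$ (since every element of weight at most $tM_w$ has $\pi$-support inside the unique ideal of cardinality $t$) and then using Lemma \ref{pwpiChept-I-rperfect to Iperf}.
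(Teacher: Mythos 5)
Your proposal is correct and matches the paper's argument: the paper proves this theorem by simply citing Theorem \ref{pwpiChept-I-rperfect}, relying on the fact (noted at the start of Section \ref{NRT-P,w,pi-space}) that a chain has exactly one ideal of each cardinality $t$, which is precisely the reduction you make. Your write-up just spells out the uniqueness verification that the paper leaves implicit.
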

\begin{proof}
	Proof is straightforward from Theorem \ref{pwpiChept-I-rperfect}.
\end{proof}              
%%%%%%%%%%%%%%%%%%%%%%%%%%%%%%%%%%%%%%%%%%%%%%%%%%%%%%%%%%%%%%%%%%%%%%%%%%%%%%%%%%%%%%%%%%%%%%%%%%%%%%%%%%%%%%%%%%
\begin{theorem}\label{pwpiChept-I-equitheoremchain}
	Let $\mathcal{C}$ be an $[N,k]$ $(P,w,\pi)$-code of length $N$ over $\mathbb{F}_q$ where $P$ being a chain. If $k_i=s$ $\forall$ $i \in [n]$.  Then the following statements are equivalent:
	\begin{enumerate}[label=(\roman*)]
		\item\label{pwpiChept-I-15-1} $\mathcal{C} $ is MDS with respect to $(P,w,\pi)$-metric.
		\item\label{pwpiChept-I-15-2} $\mathcal{C}$ is $I$-perfect for $I \in \mathcal{I}^{n-\frac{k}{s}}$. 
		\item\label{pwpiChept-I-15-3} $\mathcal{C} $ is MDS with respect to $(P,\pi)$-metric.
		\item\label{pwpiChept-I-15-4}  $\mathcal{C} $ is $(n-\frac{k}{s})$-perfect with respect to $(P,\pi)$-metric.
		\item\label{pwpiChept-I-15-5} $\mathcal{C} $ is $(n-\frac{k}{s})M_w$-perfect with respect to $(P,w,\pi)$-metric.
	\end{enumerate}
\end{theorem}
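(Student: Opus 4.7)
The plan is to observe that this theorem is essentially an immediate corollary of Theorem \ref{pwpiChept-I-equitheorem}. Since $P$ is a chain on $[n]$, for every $1 \leq t \leq n$ we have $|\mathcal{I}^t| = 1$; in particular the collection $\mathcal{I}^{n-\frac{k}{s}}$ consists of a single ideal $I$. This is precisely the hypothesis $\mathcal{I}^{n-\frac{k}{s}} = \{I\}$ required by Theorem \ref{pwpiChept-I-equitheorem}.

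First I would state this uniqueness observation explicitly as the opening line of the proof, noting that for a chain the only ideal of cardinality $t$ is $\langle t \rangle = \{1,2,\ldots,t\}$, so $\mathcal{I}^{n-\frac{k}{s}} = \{\langle n-\tfrac{k}{s}\rangle\}$. Then I would simply invoke Theorem \ref{pwpiChept-I-equitheorem} verbatim, which gives the chain of equivalences (i) $\Leftrightarrow$ (ii) $\Leftrightarrow$ (iii) $\Leftrightarrow$ (iv) $\Leftrightarrow$ (v) under exactly this uniqueness assumption. No additional combinatorial work or new case analysis is needed.

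For completeness, one could also re-derive the implications directly in the chain setting, which is cleaner: (i) $\Rightarrow$ (ii) follows from Theorem \ref{pwpiChept-I-Iperfectmdscode} applied to the unique $I$; (ii) $\Leftrightarrow$ (v) is Theorem \ref{pwpiChept-I-rperfect chain}, which in turn comes from Theorem \ref{pwpiChept-I-rperfect} using the fact that $I$ is the unique ideal of its cardinality; (v) $\Rightarrow$ (i) follows from Proposition \ref{pwpiChept-I-n-kM_wperfectMDs I^t=1}; and the equivalences (ii) $\Leftrightarrow$ (iii) $\Leftrightarrow$ (iv) come from Corollary \ref{pwpiChept-I-posetmdsiff} together with Corollary \ref{pwpiChept-I-rperfectcoro1}, both of which specialize cleanly when $P$ is a chain so that $I$-perfectness with respect to the two metrics coincides via Theorem \ref{pwpiChept-I-pwtoP perfect}.

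There is no genuine obstacle here; the result is really a specialization. The only thing worth double-checking is that the hypothesis $k_i = s$ for all $i \in [n]$ is consistent with Theorem \ref{pwpiChept-I-equitheorem}, whose statement already assumes equal block lengths; since that assumption propagates from Section \ref{MDS-P,w,pi-space} into the current chain setting, nothing new is required. Thus the proof can be written in just two or three lines: assert uniqueness of $I \in \mathcal{I}^{n-\frac{k}{s}}$ because $P$ is a chain, and then quote Theorem \ref{pwpiChept-I-equitheorem}.
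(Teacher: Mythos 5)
Your proposal is correct and matches the paper's own proof, which simply deduces the result from Theorem \ref{pwpiChept-I-equitheorem}; your explicit remark that a chain has a unique ideal of each cardinality, so $\mathcal{I}^{n-\frac{k}{s}}=\{I\}$, is exactly the observation that justifies the citation. No further work is needed.
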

\begin{proof}
	Proof is straightforward from Theorem \ref{pwpiChept-I-equitheorem}.
\end{proof}
\begin{theorem}[Duality Theorem]\label{pwpiChept-I-duality chain}
	Let $P$ be a chain and $I$ be the ideal of $P$ with
	cardinality $n-\frac{k}{s}$. Let $\mathcal{C}$ be an $[N,k]$ $(P,w,\pi)$-code of length $N$ over $\mathbb{F}_q$.
	Let  $\tilde{P}$ be the dual poset of the poset $P$. Then the following statements are equivalent:	
	\begin{enumerate}[label=(\roman*)]
		\item $\mathcal{C}$ is an MDS   $(P,w,\pi)$-code.
		\item $\mathcal{C}$ is  $I$-perfect.
		\item $\mathcal{C}^\perp$ is $I^c$-perfect.
		\item $\mathcal{C}^\perp$ is an MDS $(\tilde{P},w,\pi)$-code.
	\end{enumerate} 
\end{theorem}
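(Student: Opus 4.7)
The plan is to deduce the chain-poset duality as a direct corollary of Theorem~\ref{pwpiChept-I-equitheorem} (applied twice, once to $\mathcal{C}$ in $(P,w,\pi)$-space and once to $\mathcal{C}^\perp$ in $(\tilde{P},w,\pi)$-space) together with Proposition~\ref{pwpiChept-I-dualperfect}. The key structural observation that makes the hypothesis of Theorem~\ref{pwpiChept-I-equitheorem} automatic here is that when $P$ is a chain on $[n]$, one has $|\mathcal{I}^t|=1$ for every $1\leq t\leq n$, so in particular the ideal $I$ of cardinality $n-k/s$ is the unique ideal of that cardinality. Moreover, the dual $\tilde{P}$ of a chain is again a chain, so the same single-ideal-per-cardinality property persists in $\tilde{P}$.

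First I would establish (i)$\Leftrightarrow$(ii) by a direct appeal to Theorem~\ref{pwpiChept-I-equitheorem}: its hypothesis $\mathcal{I}^{n-k/s}=\{I\}$ is met by the observation above, so the equivalence of being MDS in $(P,w,\pi)$-space with being $I$-perfect follows at once.

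Next, (ii)$\Leftrightarrow$(iii) is nothing more than Proposition~\ref{pwpiChept-I-dualperfect}, which asserts that $\mathcal{C}$ is $I$-perfect in $(P,w,\pi)$-space if and only if $\mathcal{C}^\perp$ is $I^c$-perfect in $(\tilde{P},w,\pi)$-space.

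For (iii)$\Leftrightarrow$(iv) the plan is to apply Theorem~\ref{pwpiChept-I-equitheorem} once more, this time to $\mathcal{C}^\perp$ viewed in the space $(\mathbb{F}_q^N, d_{(\tilde{P},w,\pi)})$. Here $\dim\mathcal{C}^\perp = N-k$, so the relevant ideal should have cardinality $n-(N-k)/s = k/s$ (using $N=ns$), and since $|I^c|=n-|I|=n-(n-k/s)=k/s$, the ideal $I^c$ is indeed the unique ideal of $\tilde{P}$ of that cardinality. Thus Theorem~\ref{pwpiChept-I-equitheorem} applied to $\mathcal{C}^\perp$ in $(\tilde{P},w,\pi)$-space gives (iii)$\Leftrightarrow$(iv). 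The only real verification needed is this dimension/cardinality bookkeeping; there is no genuine obstacle, which is why the authors describe the proof as straightforward and why the whole statement really is a packaging of Theorem~\ref{pwpiChept-I-equitheorem} and Proposition~\ref{pwpiChept-I-dualperfect}.
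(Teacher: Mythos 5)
Your proposal is correct and matches the paper's route: the paper simply invokes the general Duality Theorem (Theorem~\ref{pwpiChept-I-duality}), whose own proof is exactly the combination of Proposition~\ref{pwpiChept-I-dualperfect} and Theorem~\ref{pwpiChept-I-equitheorem} that you spell out, with the chain hypothesis guaranteeing $|\mathcal{I}^t|=1$ so the uniqueness assumptions hold for both $P$ and $\tilde{P}$. Your dimension bookkeeping for applying Theorem~\ref{pwpiChept-I-equitheorem} to $\mathcal{C}^\perp$ is the same check implicit in the paper's argument, so this is essentially the paper's proof, just unpacked.
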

\begin{proof}
	The proof is straightforward from Theorem \ref{pwpiChept-I-duality}.
\end{proof}
%%%%%%%%%%%%%%%%%%%%%%%%%%%%%%%%%%%%%%%%%%%%%%%%%%%%%%%%%%%%%%%%%%%%%%%%%%%%%%%%
\subsection{Weight distribution of an MDS $(P,w,\pi)$-code}
 Now, we determine the weight distribution of an MDS $(P,w,\pi)$-code of length $N$ over $ \mathbb{F}_q$ when $P$ is a chain and $k_i =s $ $\forall$ $i \in [n]$. To do this, first we analyse the distribution of codewords of $\mathcal{C}$ among $I$-balls. For that we define the following:
\par
Let $\mathcal{C} \subseteq \mathbb{F}_q^N $ be an MDS $(P,w,\pi)$-code. Let $A_r (\mathcal{C})= \{x \in \mathcal{C} : w_{(P,w,\pi)}(x) = r \}$ be the collection of codewords in $\mathcal{C}$ of $(P,w,\pi)$-weight $r$ where $0 \leq r \leq n M_w $. The following result describes how the codewords of a linear MDS $(P,w,\pi)$-code are distributed among the $r$-balls when all blocks have the same length:
\begin{proposition} \label{pwpiChept-I-mdsball}
	Let  $\mathcal{C} $ be an $[N,k]$ MDS $(P,w,\pi)$-code of length $N$ over $\mathbb{F}_q$ where $P$ is a chain and $k_i=s$ $\forall$ $i \in [n]$. Then, the number of codewords of $\mathcal{C}$ in an $r$-ball is given as follows:
	\begin{enumerate}[label=(\roman*)]
		\item  If $r \leq M_w (n-\frac{k}{s})$, then $|B_{(P,w,\pi)}(0,r) \cap \mathcal{C}| = 1$.
		\item  If $r =  (t+1) M_w$ where $t \geq n-\frac{k}{s}$, then $|B_{(P,w,\pi)}(0,r) \cap \mathcal{C} | = q^{k-s(n-t-1)  } $
		\item  If $r =  t M_w + \ell$ where $t \geq n-\frac{k}{s}$ and $1 \leq \ell < M_w$, then 
		\begin{align*}
			|B_{(P,w,\pi)}(0,r) \cap \mathcal{C} | = (1+|D_{1}^{s}| +\ldots+|D_{\ell}^{s}| ) q^{k-s(n-t)}.
		\end{align*}
	\end{enumerate}
\end{proposition}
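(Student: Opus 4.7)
\medskip
\noindent\textbf{Proof proposal.} The plan rests on two structural consequences that come for free from the hypotheses. First, because $P$ is a chain, $\mathcal{I}^t=\{I_t\}$ with $I_t=\{1,2,\ldots,t\}$, so for every $t$ we have the identification $B_{(P,w,\pi)}(0,tM_w)=B_{I_t}(0)$. Second, $\mathcal{C}$ is MDS over a chain with all blocks of length $s$, so Theorem~\ref{pwpiChept-I-equitheoremchain} gives that $\mathcal{C}$ is $I$-perfect for $I=I_{n-k/s}$. By Lemma~\ref{pwpiChept-I-rperfect to Iperf} this forces distinct codewords to disagree in at least one of the last $k/s$ blocks, and a cardinality count ($|\mathcal{C}|=q^k=q^{sk/s}$) upgrades this to the statement that the projection
\begin{equation*}
\phi:\mathcal{C}\longrightarrow \mathbb{F}_q^{k},\qquad c\longmapsto (c_{n-k/s+1},c_{n-k/s+2},\ldots,c_n),
\end{equation*}
is a bijection. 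This bijection is the workhorse of the argument.

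For part (i), the MDS equality $\bigl\lfloor (d_{(P,w,\pi)}(\mathcal{C})-m_w)/M_w\bigr\rfloor = n-k/s$ yields $d_{(P,w,\pi)}(\mathcal{C})\geq (n-k/s)M_w+m_w>(n-k/s)M_w\geq r$, so the only codeword in $B_{(P,w,\pi)}(0,r)$ is $0$.

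For parts (ii) and (iii), I would first describe the ball explicitly using the chain weight formula $w_{(P,w,\pi)}(x)=\tilde{w}^{k_j}(x_j)+(|I_x^{P,\pi}|-1)M_w$ with $j=\max I_x^{P,\pi}$. Writing $r=tM_w+\ell$ with $0\le \ell\le M_w$, a case split on $|I_x^{P,\pi}|$ gives
\begin{equation*}
B_{(P,w,\pi)}(0,r)=\bigl\{x\in\mathbb{F}_q^N: x_j=0\ \text{for }j\geq t+2,\ \tilde{w}^{s}(x_{t+1})\leq \ell\bigr\},
\end{equation*}
with the convention that $\ell=M_w$ in (ii) imposes no restriction on $x_{t+1}$, and that $x_{t+1}=0$ is allowed. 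Under the assumption $t\ge n-k/s$, all of the constrained positions $t+1,t+2,\ldots,n$ lie in the last $k/s$ blocks, so the constraints can be read off on the $\phi$-image. The number of codewords in the ball therefore factors as (free blocks in positions $n-k/s+1,\ldots,t$)~$\times$~(choices for block $t+1$)~$\times$~(forced zeros in positions $t+2,\ldots,n$), yielding $q^{s(t-n+k/s)}\cdot\bigl(1+|D_1^{s}|+\cdots+|D_\ell^{s}|\bigr)\cdot 1$. Specializing $\ell=M_w$ with $\sum_{i=0}^{M_w}|D_i^{s}|=q^{s}$ collapses this to $q^{k-s(n-t-1)}$, giving (ii); for $1\leq \ell<M_w$ it is exactly the formula in (iii).

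The only real obstacle is justifying the factorization of the count via $\phi$, i.e., making sure that the constraints ``$x_j=0$ for $j\geq t+2$'' and ``$\tilde{w}^{s}(x_{t+1})\leq\ell$'' are compatible with $\phi$ being a bijection onto $\mathbb{F}_q^k$; this is precisely where $t\geq n-k/s$ is needed, since otherwise position $t+1$ would not be among the coordinates parametrizing $\mathcal{C}$ via $\phi$ and one could not conclude that the free positions $n-k/s+1,\ldots,t$ vary independently of $c_{t+1}$. Once the bijection is set up, the remaining steps are bookkeeping with the partition sets $D_i^{s}$ from Proposition~\ref{D_r^j}.
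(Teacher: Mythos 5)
Your proposal is correct and follows essentially the same route as the paper: both rest on the fact that an MDS code over a chain with equal block lengths is $I$-perfect for the unique ideal $I$ of cardinality $n-\tfrac{k}{s}$, on the explicit description of $B_{(P,w,\pi)}(0,tM_w+\ell)$ coming from the chain weight formula, and on counting exactly one codeword per coset of $B_I$. Your projection bijection $\phi$ onto the last $\tfrac{k}{s}$ blocks is just a repackaging of the paper's coset/translate count (and handles cases (ii) and (iii) uniformly), so no further comparison is needed.
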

\begin{proof}
	\sloppy	Since $\mathcal{C} $ is MDS, from Theorem \ref{pwpiChept-I-Iperfectmdscode}, $\mathcal{C}$ is $J$-perfect for $J \in \mathcal{I}^{n-\frac{k}{s}}$ and $|d_{(P,w,\pi)}(\mathcal{C})| > M_w(n - \frac{k}{s}) $.  
	Thus, when $r \leq M_w (n-\frac{k}{s})$, $B_{(P,w,\pi)}(0,r) \cap \mathcal{C} = \{0\}$.  
	\par  Consider $r > M_w (n-\frac{k}{s})$. Let $r = tM_w +\ell$  where $t \geq n-\frac{k}{s}$ and $\ell \in \{1,2,\ldots,M_w\}$. As $P$ is a chain, there exists a unique $ I  \in  \mathcal{I}^{t+1}$ such that $J \subseteq I$, and $B_J=B_{(P,\pi)}(0,n-\frac{k}{s})=B_{(P,w,\pi)}(0,(n-\frac{k}{s})M_w)$. Here, two cases arise: \\
	Case (a): If $ r = (t+1) M_w$, $t \geq n-\frac{k}{s}$, we have $B_{(P,w,\pi)}(0,r) = B_I$.  As $B_I$ is a linear  subspace of $\mathbb{F}_q^N$ and $B_{J}$ is a subspace of $B_I$, the number of cosets of $B_J$ in $B_I$ is $q^{s|I|-s|J|}$. Since $\mathcal{C}$ is $J$-perfect for every $J \in \mathcal{I}^{n-\frac{k}{s}}$,  every coset of $B_J$ in $\mathbb{F}_q^N$  contains exactly one codeword of  $\mathcal{C}$. Hence, $B_I$ contains $q^{s(t +1) - (ns - k) }$ codewords of  $\mathcal{C}$. \\ 
	Case (b): Here, $r =  tM_w +\ell$, $t \geq n-\frac{k}{s}$, $ 1 \leq \ell  < M_w$. Now, to get $|B_{(P,w,\pi)}(0,r) \cap \mathcal{C}|$, we need the number of disjoint translates of $B_{(P,w,\pi)}(0,(n-\frac{k}{s})M_w)$ in $B_{(P,w,\pi)}(0,r)$ whose union covers $B_{(P,w,\pi)}(0,r)$. Let $K = I \setminus J$. Now  $K$ is not an ideal of $P$ but a subset of $[n]$. Then, $|B_K|=q^{s(t+1)-(ns-k)}$. 
	\sloppy Now, $|B_{(P,w,\pi)}(0,r-(n-\frac{k}{s})M_w)|= |B_{(P,w,\pi)}(0,(t-n+\frac{k}{s})M_w+\ell)| = 	 (1+|D_{1}^{s}|+\ldots+|D_{\ell}^{s}|)  q^{k+s(t-n)}$ by Theorem \ref{nrtwpirball}. The  translates $x+ B_{(P,w,\pi)}(0,(n-k)M_w)$ with $x \in B_K$ are disjoint if the weight of $(t+1)^{th}$ block of $x$ is at most $l$ (i.e if $\tilde{w}(x_{t+1})\leq \ell)$. The number of such $x$ is given by   $|B_{(P,w,\pi)}(0,r-(n-\frac{k}{s})M_w)|$ which is $(1+|D_{1}^{s}|+|D_{2}^{s}|+\ldots+|D_{\ell}^{s}|)  q^{k+s(t-n)}$. Hence, $ |B_{(P,w,\pi)}(0,r) \cap \mathcal{C}|  = (1+|D_{1}^{s}|+|D_{2}^{s}|+\ldots+|D_{\ell}^{s}| ) q^{k+s(t-n)}$.  
\end{proof}
\begin{theorem}\label{pwpiChept-I-wtditchain}
	\sloppy	Let  $\mathcal{C}  $ be an $[N,k]$ MDS $(P,w,\pi)$-code of length $n$ over $\mathbb{F}_q$ with the minimum distance $d_{(P,w,\pi)}(\mathcal{C})$. Let $P$ be a chain and $k_i=s$ $\forall$ $i \in [n]$. Then 
	\begin{align*}
		|	A_r (\mathcal{C}) |= \begin{cases}
			1 &  \text{if} ~r = 0\\
			0 & \text{if} ~1 \leq r \leq d_{(P,w,\pi)}(\mathcal{C}) - 1  \\
			|D_{\ell}^{s}|  q^{k+s(t-n)}  & \text{if} ~r \geq d_{(P,w,\pi)}(\mathcal{C}) ~\text{and } r = tM_w + \ell \\ & \text{ where } t \geq n-k,   \ell \leq M_w.
		\end{cases} 	
	\end{align*}
\end{theorem}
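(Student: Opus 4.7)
The plan is to obtain $|A_r(\mathcal{C})|$ by taking successive differences of ball counts. Since $A_r(\mathcal{C}) = (B_{(P,w,\pi)}(0,r)\cap\mathcal{C}) \setminus (B_{(P,w,\pi)}(0,r-1)\cap\mathcal{C})$ for $r\ge 1$, we have
\begin{equation*}
|A_r(\mathcal{C})| \;=\; |B_{(P,w,\pi)}(0,r)\cap\mathcal{C}| \;-\; |B_{(P,w,\pi)}(0,r-1)\cap\mathcal{C}|.
\end{equation*}
Proposition \ref{pwpiChept-I-mdsball} supplies the right-hand side explicitly in all ranges, so the theorem reduces to a direct subtraction together with the combinatorial identity $\sum_{i=0}^{M_w}|D_i^s| = q^s$, which follows from the fact that $\{D_i^s : 0\le i\le M_w\}$ partitions $\mathbb{F}_q^s$ (Proposition \ref{D_r^j}).

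First I would dispose of the trivial ranges. For $r=0$ only the zero codeword contributes, so $|A_0(\mathcal{C})|=1$. For $1\le r\le d_{(P,w,\pi)}(\mathcal{C})-1$, the minimum-distance assumption forces $A_r(\mathcal{C})=\emptyset$; equivalently, Proposition \ref{pwpiChept-I-mdsball}(i) gives $|B_{(P,w,\pi)}(0,r)\cap\mathcal{C}|=1$ throughout this range, so the differences vanish.

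Next I would handle the generic range $r\ge d_{(P,w,\pi)}(\mathcal{C})$, writing $r=tM_w+\ell$ with $t\ge n-\tfrac{k}{s}$ and $1\le \ell\le M_w$, and splitting into two cases. If $1\le \ell<M_w$, then by Proposition \ref{pwpiChept-I-mdsball}(iii),
\begin{equation*}
|B_{(P,w,\pi)}(0,r)\cap\mathcal{C}| = \bigl(1+|D_1^s|+\cdots+|D_\ell^s|\bigr)q^{k+s(t-n)},
\end{equation*}
and the same formula (or part (ii) when $\ell=1$) gives $|B_{(P,w,\pi)}(0,r-1)\cap\mathcal{C}| = (1+|D_1^s|+\cdots+|D_{\ell-1}^s|)q^{k+s(t-n)}$, whose difference is exactly $|D_\ell^s|\,q^{k+s(t-n)}$. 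If $\ell=M_w$, so that $r=(t+1)M_w$, then by parts (ii) and (iii),
\begin{equation*}
|A_r(\mathcal{C})| = q^{k+s(t+1-n)} - \bigl(1+|D_1^s|+\cdots+|D_{M_w-1}^s|\bigr)q^{k+s(t-n)},
\end{equation*}
which factors as $q^{k+s(t-n)}\bigl[q^s - \sum_{i=0}^{M_w-1}|D_i^s|\bigr] = q^{k+s(t-n)}|D_{M_w}^s|$ by the partition identity above.

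I expect the only subtlety to be matching the index boundaries: ensuring that for $r$ just above $d_{(P,w,\pi)}(\mathcal{C})$ the ball of radius $r-1$ still lies in the regime covered by Proposition \ref{pwpiChept-I-mdsball} (or, when $r-1< d_{(P,w,\pi)}(\mathcal{C})$, that $|B_{(P,w,\pi)}(0,r-1)\cap\mathcal{C}|=1$ matches the $(1+|D_1^s|+\cdots+|D_{\ell-1}^s|)q^{k+s(t-n)}$ expression at the transition point $t=n-\tfrac{k}{s}$, $\ell=1$). This is a routine bookkeeping check using $d_{(P,w,\pi)}(\mathcal{C}) > M_w(n-\tfrac{k}{s})$, but it is the step that needs the most care.
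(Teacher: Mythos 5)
Your proposal is correct and follows essentially the same route as the paper: take successive differences $|B_{(P,w,\pi)}(0,r)\cap\mathcal{C}|-|B_{(P,w,\pi)}(0,r-1)\cap\mathcal{C}|$ using Proposition \ref{pwpiChept-I-mdsball}, treat $\ell<M_w$ and $\ell=M_w$ separately, and use the partition identity $\sum_{i=0}^{M_w}|D_i^s|=q^s$ for the $\ell=M_w$ case. The boundary check you flag (radius $r-1=tM_w$ at $t=n-\tfrac{k}{s}$, where the ball contains only the zero codeword and matches the formula) is exactly the point the paper handles via its case $r=tM_w+1$, so nothing is missing.
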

\begin{proof}
	Clearly, $	|A_0 (\mathcal{C})| = 	1$ and $	|A_r (\mathcal{C})| = 	0$ if  $1 \leq r \leq d_{(P,w,\pi)}(\mathcal{C}) - 1$.
	As $\mathcal{C}$ is MDS, by Theorem \ref{pwpiChept-I-equitheoremchain},  $\mathcal{C}$ is $(n-\frac{k}{s})M_w$-perfect. Hence, $ d_{(P,w,\pi)}(\mathcal{C}) > (n-\frac{k}{s})M_w$. Now,
	$ |A_r (\mathcal{C})| = |B_{(P,w,\pi)}(0,r) \cap \mathcal{C}| - |B_{(P,w,\pi)}(0,r-1) \cap \mathcal{C}|$. 
	\par	 $\text{(i)}$: If $r =  (t+1) M_w \geq d_{(P,w,\pi)}(\mathcal{C})$, $t \geq n-\frac{k}{s}$, then $r-1 =  t M_w + M_w -1 $. Thus, from Proposition \ref{pwpiChept-I-mdsball}, we have 
	\begin{align*}
		|	A_r (\mathcal{C})| &= q^{k+s(t+1-n)} - (1+|D_{1}^{s}|+|D_{2}^{s}|+\ldots+|D_{M_w-1}^{s}| ) q^{k+s(t-n)}\\
		&= q^{k+s(t+1-n)} - ( q-|D_{M_w}|)^s q^{k+s(t-n)} \\
		&= (q^s-(q-|D_{M_w}|)^s)q^{k+s(t-n)}   \\
		&= |D_{M_w}^{s}|  q^{k+s(t-n)}.
	\end{align*} 
	\par  \sloppy $\text{(ii)}$: If $r= t M_w + 1 \geq d_{(P,w,\pi)}(\mathcal{C})$, $t \geq n-\frac{k}{s}$, then there exists an ideal $J \in \mathcal{I}^t$ such that  $|B_{(P,w,\pi)}(0,r-1)| = |B_{J}|$. From Proposition \ref{pwpiChept-I-mdsball}, we have 
	$		|A_r (\mathcal{C})|
	= 	(1 + |D_1 ^s|) q^{k+s(t-n)}- q^{k+s(t-n)} 
	=	|D_1^s| q^{k+s(t-n)}$.
	\par  $\text{(iii)}$: If $r =  t M_w +\ell \geq d_{(P,w,\pi)}(\mathcal{C}) $, $t \geq n-\frac{k}{s}$, $ 1 < l \leq M_w$, then  from Proposition \ref{pwpiChept-I-mdsball}, we have
	$	|A_r (\mathcal{C})| =  (1 + |D_1^s|+ \ldots+|D_{\ell}^s| ) q^{k+s(t-n)} - (1 + |D_1^s|+ \ldots+|D_{\ell-1}^s| ) q^{k+s(t-n)} 
	= |D_{\ell}^s| q^{k+s(t-n)}$.
	Hence proved.
\end{proof}

%%%%%%%%%%%%%%%%%%%%%%%%%%%%%%%%%%%%%%%%%%%%%%%%%%%%%%%%%%%%%%%%%%%%%%%%%%%%%%%%%%%%%%%%%%%%%%%%%%%%%%%%%%%%%%%%%%%%%%%%%%%%%%%%%%%%%%%%%%%%%%%%%%%%%%%%%%%%%%%%%%%%%%%%%%%%%%%%%%%%%%%%%%%%%%%%%%%%%%%%%%%%%%%%%%%%%%%%%%%%%%%%%%%%%%%%%%%%%%%%%%%%%%%%%%%%%%%%%%%%%%%%%%%%%%%%%%%%%%%%%%%%%%%%%%

%\begin{IEEEbiographynophoto}{Atul}
%Biography text here without a photo.
%\end{IEEEbiographynophoto}
%
%\begin{IEEEbiography}[{\includegraphics[width=1in,height=1.25in,clip,keepaspectratio]{fig1.png}}]{IEEE Publications Technology Team}
%In this paragraph you can place your educational, professional background and research and other interests.\end{IEEEbiography}
%

\end{document}